\documentclass[11pt]{article}
\usepackage{bbm}
\newcommand{\lt}{\left}
\newcommand{\rt}{\right}

\makeatletter
\newcommand{\Biggg}{\bBigg@{4}}
\newcommand{\Bigggg}{\bBigg@{5}}
\makeatother

\def\bA{{\bf A}}

\def\bc{{\bf c}}
\def\bD{{\bf D}}
\def\bd{{\bf d}}

\def\be{{\bf e}}

\def\bI{{\bf I}}

\def\bL{{\bf L}}

\def\bM{{\bf M}}

\def\bO{{\bf O}}

\def\bu{{\bf u}}

\def\bv{{\bf v}}

\def\bx{{\bf x}}

\def\by{{\bf y}}

\def\bz{{\bf z}}
\def\bbeta{\boldsymbol{\beta}}

\def\blambda{\boldsymbol{\lambda}}

\def\bpi{\boldsymbol{\pi}}

\def\bphi{\boldsymbol{\phi}}

\def\bel{\boldsymbol{\ell}}
\def\blambN{\boldsymbol{\hat\pi}^\brLM}
\def\cB{{\cal B}}

\def\cI{{\cal I}}
\def\cJ{{\cal J}}

\def\cQ{{\cal Q}}

\def\cS{{\cal S}}

\def\cZ{{\cal Z}}

\def\bbR{\mathbb{R}}

\def\bzero{{\bf 0}}
\def\bone{{\bf 1}}

\def\xst{\bar{x}}				% x star, the single-period size for non-interconnected systems
\def\bxst{\bar{{\bf x}}}				% bold x star, the vector of x star
\def\xbar{\bar{x}^\brLM}				% x bar, the single-period size for interconnected systems
\def\bxbar{\bar{{\bf x}}^\brLM}				% bold x bar, the vector of x bar
\def\yf{\psi}				% psi function, function for second-period size w.r.t. first-period size
\def\byf{\boldsymbol{\psi}}
\def\yfpm{\psi^\gamma}      % function for second-period size w.r.t. first-period size under permanent impact, 
\def\xtld{x^*}

\def\xpm{x^\gamma}
\def\xfe{\tilde{x}}
\def\spm{s^\gamma}

\def\lf{\phi}
\def\lfpm{\Phi^\gamma}      % the robust objective with permanent price impact
\def\lpm{\phi^\gamma}               % the original objective with permanent price impact

\def\bgao{\be}
\def\btp{\beta^+}				% beta plus, the maximum financial connectivity
\def\ztot{z_{tot}}				% z total, the sum of all holding sizes of illiquid assets
\def\pst{\bar{p}}				% p bar, the single-period price for non-interconnected systems
\def\ppm{p^\gamma}              % p gamma, the second-period price under permanent price impact
\def\psA{p_2^\brLM}             % p_2^\brLM, the second-period price under non-zero relative liability matrix
\def\pbar{\bar{p}^\brLM}				% p bar N, the single-period price for interconnected systems
\def\lbar{\bar{\pi}^\brLM}				% l bar, the clearing payment for interconnected systems
\def\blbar{\bar{\bpi}^\brLM}
\def\ltN{\Phi^\brLM}

\def\lrLM{\lf^\brLM}

\def\cNf{c}
\def\bcNf{{\bf c}}				% c N vector, the operating cash after liquidation in the interconnected systems

\def\brpA{\bpi^\brLM}
\def\wNf{\pi}
\def\bwNf{\bpi}				% w N vector, the clearing vector in the interconnected systems
\def\rhoN{\hat p^\brLM}
\def\lambN{\hat\pi^\brLM}
\def\kapN{\hat x^\brLM}
\def\bkapN{\hat\bx^\brLM}

\def\sbar{\bar{s}^\brLM}

\def\lcQ{\hat{\ell}}
\def\yfQ{\hat{\psi}}
\def\rLM{A}
\def\brLM{\bA}
\def\argmin{\mathop{\arg\min}}
\def\argmax{\mathop{\arg\max}}
\def\Zset{\cZ}
\def\Bset{\cB}
\def\cvec{\bc}
\def\ctl{\tilde{c}}

\def\bctl{\tilde{\bc}}

\def\tcl{\zeta}
\def\gtld{\tilde{g}}
\def\clst{\zeta^*}
\def\tlb{\tilde{\zeta}}

\def\bcht{\hat{\bc}}
\def\cht{\hat{c}}

\def\clbar{\bar{\zeta}}

\def\hlf{\mu}

\def\hrt{\nu}

\def\clht{\hat{\tcl}}
\def\etdlt{\Delta}
\def\bvst{\bv^*}
\def\Qst{Q^*}
\def\phat{\hat{p}}

\def\vthst{\vartheta^*}
\def\vthtl{\tilde{\vartheta}}
\def\rcbrLM{\hat\bA}            % used as the reconstruction matrix

\def\ettl{\tilde{\eta}}
\def\etst{\eta^*}
\def\lid{l}
\def\qf{q}

\def\sst{\bar{s}}				% s star, the sum of x star of other banks
\def\Tset{\cI}				% used in the proof of Lemma 4.1
\def\Smin{\cS}				% the set of argmin of g function w.r.t. s

\usepackage[hidelinks]{hyperref}
 \usepackage{amsmath}
 \usepackage{amsfonts}
 \usepackage{amssymb}
 \usepackage{amsbsy}
\usepackage{mathtools}
\usepackage{amsthm}
\newtheoremstyle{sfstyle}
  {\topsep} % Space above
  {\topsep} % Space below
  {\itshape} % Body font (or \rmfamily)
  {} % Indent
  {\sffamily\bfseries} % Title font (sans-serif, bold)
  {.} % Separator after title
  {1em} % Space after title
  {} % Title spec (e.g., \thetheorem)
\theoremstyle{sfstyle}
\newtheorem{theorem}{Theorem}
\newtheorem{lemma}{Lemma}
\newtheorem{proposition}{Proposition}

\newtheoremstyle{sfstyl}
  {\topsep} % Space above
  {\topsep} % Space below
  {} % Body font (or \rmfamily)
  {} % Indent
  {\sffamily\bfseries} % Title font (sans-serif, bold)
  {.} % Separator after title
  {1em} % Space after title
  {} % Title spec (e.g., \thetheorem)
\theoremstyle{sfstyl}

\newtheorem{assumption}{Assumption}

\newtheorem{remark}{Remark}
\usepackage{pgfplots}
\usepackage{pgfplotstable}
\usepackage[draft]{changes}
\usepackage[normalem]{ulem}
\usepackage{enumitem}
\usepackage{multirow}
\usepackage{subcaption}
\usepackage{dsfont}
\usepackage{stackengine}
\usepackage{lipsum}
\usepackage{natbib}
\mathtoolsset{showonlyrefs}
\usepackage{booktabs}
\usepackage{titlesec}
\titleformat{\section}{\sffamily\large\bfseries}{\thesection.}{1em}{}
\titleformat{\subsection}{\sffamily\bfseries}{\thesubsection.}{1em}{}
\titlespacing*{\section}{0pt}{.7\baselineskip}{.3\baselineskip}
\titlespacing*{\subsection}{0pt}{.4\baselineskip}{.1\baselineskip}

\usepackage{enumitem}
\usepackage[margin=1in]{geometry}
\usepackage{setspace}
\allowdisplaybreaks

\title{\sf\textbf{Robust Optimal Strategies for Early Liquidation\\ in Financial Systems}}
\author{Dohyun Ahn\thanks{Department of Systems Engineering and Engineering Management, E-mail: \href{mailto:dohyun.ahn@cuhk.edu.hk}{\tt dohyun.ahn@cuhk.edu.hk}},~~~Hongyi Jiang\thanks{Department of Systems Engineering and Engineering Management, E-mail: \href{hyjiang@link.cuhk.edu.hk}{\tt hyjiang@link.cuhk.edu.hk}}\\
{\small \it The Chinese University of Hong Kong}
}
\date{March 2026}

\begin{document}

\maketitle
\begin{abstract}
We study the problem of asset liquidation in financial systems. During financial crises, asset liquidation is often inevitable but can lead to substantial losses if a significant amount of illiquid assets are sold simultaneously at depressed prices---a phenomenon known as \emph{price impact}. To tackle this challenge, we consider a two-period liquidation model that allows for early liquidation prior to clearing, thereby mitigating price impact at clearing, and we develop a worst-case approach to solve the decision-making problem on the optimal size of early liquidation. Specifically, we propose a \emph{robust optimal strategy}---a tractable liquidation approach that maximizes the worst-case value of liquid assets at clearing, taking into account the uncertainty of other banks’ early liquidation decisions. We derive a (semi-)closed-form representation of this strategy in a practical scenario involving permanent price impact and analyze its sensitivity to that impact's magnitude. We further identify its closed-form expression in another practical scenario featuring interbank exposures.
Our findings, although built upon a stylized model, offer valuable guidelines for developing robust liquidation strategies that mitigate losses resulting from asset liquidation.

\smallskip
\noindent{\sf\bfseries Keywords}: Systemic risk, Fire sales, Strategic liquidation, Worst-case approach, Permanent price impact, Interbank exposures
\end{abstract}

\section{Introduction}\label{sec:intro}
	
    The global financial crisis in 2007-2008 provoked intense discussions on the factors that threatened the stability of the financial system. In the majority of relevant studies, two channels of risk contagion have been widely acknowledged as significant factors contributing to this instability. One such channel involves direct debt exposures among financial institutions. When an institution fails to meet debt obligations due to adverse economic shocks, its creditor institutions experience financial distress due to the resulting repayment shortfall \citep{EisenbergNoe2001}. Furthermore, during the crisis of asset erosion, financial institutions with inadequate liquid assets are forced to sell off substantial illiquid assets to satisfy their debt claims. Such liquidation overwhelms shallow market demand and triggers sharp declines in asset prices. This phenomenon, termed \emph{price impact}, imposes significant losses on institutions holding the same illiquid assets, even without direct debt exposures among them. These fire sales constitute another channel of risk transmission and are recognized as a salient ingredient of the crisis \citep{ClercEtAl2016}.
    Although the first channel has been studied more extensively in the literature, the second channel has gained increasing attention in recent years, particularly as financial institutions have actively reduced their debt exposures to other banks since the said crisis.

    Our study departs from a conventional assumption in the existing literature that liquidation is conducted only at clearing. This assumption stems from the observation that liquidity-constrained institutions tend to favor borrowing over asset sales to avoid the prohibitive costs of liquidation. However, since borrowing capacity is not always guaranteed, relying solely on this option exposes banks to future funding shortages, particularly during systemic crises when credit conditions deteriorate \citep{AcharyaEtAl2011}. %is analytically convenient but can be restrictive, as it excludes liquidation behaviors that may better align with institutions' fundraising incentives. For instance, when many institutions anticipate substantial future repayment pressure, simultaneous asset sales at clearing can easily exceed the market's absorption capacity. In the congested sale market, competition among institutions compels them to accept deeply depressed prices to convert their assets into cash, which significantly elevates their funding costs.
    In this case, preemptive liquidation prior to clearing becomes a rational solution. %\footnote{In practice, institutions may postpone the repayments to maturing debts by borrowing new debts instead of liquidation, which is termed debt rollover. However, the rollover is not always guaranteed because of the rollover risk \citep{AcharyaEtAl2011}. Here we focus on the early liquidation problem without elaborating on debt rollover. One may refer to \citep{BichuchFeinstein2019} for the study of decision-making with both borrowing and liquidation costs when debt rollover is always allowed.}
    Then, this naturally raises a critical question: \emph{what volume of illiquid assets should an institution sell early to ensure its solvency?} The answer is not immediately evident as liquidation outcomes are driven by the interplay of institutional liquidation decisions that jointly determine sale prices and are constrained by repayment requirements.

	This study is the first to propose a solution to address the aforementioned issue, particularly focusing on two distinct practical settings: permanent price impact induced by early liquidation and direct loss contagion through interbank exposures.
    % In this work, we study the early liquidation decision of an individual bank, particularly focusing on the funding sufficiency for debt repayment. Moving beyond the common assumption that liquidation occurs only at clearing, we develop a stylized model that endogenizes the interdependence of liquidation volume allocations over time. We take a worst-case approach to identify a bank's optimal liquidation strategy against the uncertainty of prices impacts, which stems from the lack of knowledge of other banks' strategies. In particular, our approach accommodates two distinct practical settings: one featuring permanent price impact induced by early liquidation, and the other capturing contagion effects through interbank liabilities.
	  Specifically, our main contributions are summarized as follows:
    \begin{enumerate}[label=\arabic*.]

        \item  \emph{Design of a novel decision-theoretic framework for early liquidation.} We develop a two-period liquidation model that offers flexibility in allocating liquidation volume over time and endogenizes the interdependence of institutional liquidation decisions. Based on this model, we adopt a worst-case approach to optimize an individual bank's early liquidation strategy by formulating a maximin problem that maximizes its liquidation proceeds under the worst-case risk contagion arising from counterparties' strategies. This formulation serves as the analytical foundation for quantitatively evaluating early liquidation decisions.
			
        \item \emph{Tractable relaxation for the maximin liquidation problem.} A critical challenge of the said maximin problem is its analytical intractability, which stems from a multi-layer optimization structure driven by implicit strategic interactions among banks. To overcome this, we propose a relaxed version of the problem, which replaces the liquidation size at clearing by its worst-case upper bound, and establish its validity as a robust surrogate for the original formulation.

        \item \emph{Analytical solution to the relaxed problem.} We demonstrate that the relaxed problem yields a unique solution---hereafter referred to as \emph{robust optimal strategy}---that can be expressed in (semi-)closed form, even in complex scenarios involving permanent price impacts or interbank exposures. This analytical form of the strategy not only reduces the computational burden relative to the original problem by circumventing the need for exhaustive search over high-dimensional spaces but also facilitates effective decision support by providing practitioners with a robust tool to secure liquidity positions during early liquidation, even without observable data on counterparty actions.

        \item \emph{Practical insights into strategic behaviors.} Our analysis demystifies the strategic drivers of fire sales. We show that while early liquidation is theoretically capped at half the total liquidation volume required in the absence of early action, the incentive to sell early depends on a trade-off between immediate competition and future price risks. In particular, we identify a critical tipping point governed by market conditions. Initially, a higher permanent price impact (i.e., slower price recovery) prompts counterparties to sell early in the worst-case scenario, leading an individual bank is to reduce its own early liquidation to avoid crowded trades. Beyond this threshold, however, the magnified risk of a future price collapse dominates, paradoxically driving an increase in the robust optimal early liquidation size. %Our framework offers tangible advantages for implementation and decision support. The robust optimal strategy enables practitioners to safeguard their liquidity positions in early liquidation without requiring detailed knowledge of other institutions’ actions. The tractable structure of the robust optimal strategy substantially reduces computational burden relative to the original problem, avoiding exhaustive searches over high-dimensional strategy spaces. These results could also offer a quantitative lens for regulators to anticipate the market consequences of preemptive liquidation behaviors.

    \end{enumerate}
	
    	The remainder of the paper is organized as follows. Section~\ref{sec:literature} provides a review of the related literature. In Section~\ref{sec:preliminaries}, we identify key issues and motivate our research questions through an illustrative example. Section~\ref{sec:perm_impact} derives the robust optimal strategy for the case with permanent price impact and analyzes its sensitivity to the magnitude of the impact. In Section~\ref{sec:intb_liab}, we perform a similar analysis for the case with interbank liabilities. % where we develop the robust optimal strategy of a similar form to that in the non-interconnected system and provide an alternative worst-case near-optimal strategy for the partially known but weakly interconnected system.
    Section~\ref{sec:conclusions} concludes the paper. All proofs of the theoretical results can be found in the appendix.

    \section{Literature Review}
    \label{sec:literature}
	
    %The indirect contagion effect due to fire sale also has been well recognized. \citep{ShleiferVishny2011} stress that systemic losses are exacerbated by the liquidation of overlapping assets. \citep{GreenwoodEtAl2015} establishes a framework for empirical estimation of the liquidation effects based on the data on European banks' portfolios disseminated by the European Banking Authority (EBA), which is extended by \citep{DuarteEisenbach2021} and \citep{PangVeraart2023}. \citep{GlassermanYoung2016} provides a comprehensive review on contagions in financial networks.
	
    {\sffamily\bfseries Impact of fire sales on systemic risk.} Our research builds upon the literature on risk contagion via fire sales of illiquid assets, initiated by \cite{CifuentesEtAl2005}. They investigate the equilibrium of liquidation prices in financial systems where banks are forced to liquidate assets to satisfy regulatory constraints. \cite{ChenEtAl2016} formulate this equilibrium as an optimization problem and introduce a ``liquidity amplifier'' via sensitivity analysis, which quantifies how illiquidity intensifies fire-sale effects, thereby offering policy implications for intervention. In addition, \cite{AminiEtAl2016c} establish sufficient conditions for the uniqueness of the equilibrium liquidation price, while \cite{WeberWeske2017} characterize the liquidation price in more complex settings where bankruptcy costs and cross-holdings are taken into account. These papers mainly focus on the assessment of systemic risk rather than the decision-making in liquidation and assume that liquidation takes place only at clearing. By contrast, we accommodate liquidation that occurs before clearing and offer novel insights into strategic liquidation decisions.		
    A distinct line of work, pioneered by \cite{GreenwoodEtAl2015}, leverages correlations in banks' asset exposures to measure fire-sale spillover effects. Subsequent studies have enriched this paradigm by modeling the dynamic process of fire sales~\citep{DuarteEisenbach2021} and incorporating partial information about asset holdings~\citep{PangVeraart2023}. Although their approaches open up new directions, our modeling focus remains on extending the framework of \cite{CifuentesEtAl2005}, leaving the other models for further research. %\footnote{A complementary strand of the literature, such as \cite{ContWagalath2013,ContWagalath2016}, studies fire sales through their effects on price dynamics. They analyze how distress selling distorts statistical properties of asset prices, including volatility and correlation. While this approach offers a parsimonious model of fire-sale effect, it abstracts from balance-sheet exposures and clearing mechanisms, which is significantly different from our research.}

    % In contrast to both streams, our work treats the early liquidation sizes as decision variables chosen by individual banks rather than as equilibrium outcomes or passive responses to regulatory requirements as in the existing studies. Furthermore, we shift our focus from the system-level assessment of risk contagion to the strategic decision-making in liquidations of individual institutions.

    {\sffamily\bfseries Strategic liquidation in financial systems.} Our work contributes to the growing literature of strategic liquidation under systemic risk. \cite{Feinstein2017} studies a financial system with multiple illiquid assets in which firms strategically choose asset sales to maximize their own valuations, and establishes the existence of equilibrium liquidation strategies. To analyze risk contagion arising from strategic liquidation, \cite{BraouezecWagalath2019} consider a static game in which banks maximize their expected profits subject to regulatory capital requirements and characterize the resulting Nash equilibrium. \cite{BichuchFeinstein2019} formulate an optimization problem in which borrowing and liquidation decisions are jointly determined to minimize total financing costs. While these studies analyze strategic liquidation through ex-post equilibrium outcomes of institutional interactions assuming rational behaviors of institutions, our work focuses on the ex-ante liquidation decision of an individual institution adopting a worst-case approach to the characterization of early liquidation strategies.\footnote{It is essential to distinguish the notion of strategic liquidation considered here from the classical literature on optimal execution \citep[e.g.,][]{BertsimasLo1998, GatheralSchied2013}, which focuses on minimizing trading costs when unwinding a fixed asset position over a given period under exogenous market illiquidity. In our systemic context, by contrast, illiquidity is not merely a frictional cost but an endogenous channel of risk contagion that amplifies price declines and threatens systemic stability.}

    {\sffamily\bfseries The role of permanent price impact in financial systems.} Our research is closely related to studies on liquidation-induced systemic risk in the presence of permanent price impact, where price declines caused by liquidation do not immediately reverse and losses from liquidation accumulate over time. \cite{ContSchaanning2019} incorporate such cumulative impacts to empirically measure indirect contagion from fire sales empirically. \cite{CapponiEtAl2020b} examine a dynamic setting where persistent price impacts can trigger a self-reinforcing cycle of mutual fund redemptions and further liquidations. In the context of collateralized systems, \cite{GhamamiEtAl2022} and \cite{PangVeraart2025} model clearing processes in which collateral is liquidated in multiple rounds, with price impacts from early sales carrying over entirely to subsequent rounds. While these works assume \emph{fully} permanent price impact, implying that depressed prices do not recover, our work allows for \emph{partial} price recovery. Furthermore, we examine robust liquidation decisions under permanent price impact from an individual decision-making perspective, whereas the aforementioned studies primarily focus on measuring systemic risk or elucidating contagion mechanisms.

    {\sffamily\bfseries Contagion through interbank exposures.}
    A large body of literature on systemic risk investigates the impact of interbank obligations. The seminal framework of \cite{EisenbergNoe2001} models interbank repayments using a fixed-point characterization. Subsequent research has extended this framework along several dimensions while retaining its core analytical structure. For instance, \cite{RogersVeraart2013} incorporate bankruptcy costs into the framework and explore incentives for consortium-based bailouts. \cite{AcemogluEtAl2015} and \cite{CapponiEtAl2016} investigate how network topology and concentration affect shock transmission, respectively. \cite{KusnetsovMariaVeraart2019} and~\cite{Veraart2020} extend the framework to accommodate multiple maturities and quantify distress versus default contagion, respectively. Furthermore, this approach has been applied to diverse areas such as systemic stress scenario selection \citep{AhnEtAl2023}, debt valuation \citep{BaruccaEtAl2020, BanerjeeFeinstein2022}, intervention policies \citep{Ahn2019, CapponiBernard2022}, and multilateral netting effects~\citep{Ahn2020, AminiEtAl2020}. We also incorporate interbank exposures as an additional risk contagion channel. However, in contrast to the aforementioned works that analyze systemic risk from a regulatory perspective, our work examines the impact of this contagion channel on early liquidation decisions from the standpoint of an individual bank.

    % focus on the evaluation of systemic risks while our work undertakes the construction of a liquidation strategy under systemic risks.

    % Several other works researching liquidation strategy in the financial system, are also related to our paper. \citep{CaccioliEtAl2014} conducts stability analysis on contagion due to liquidation effects. It proposes a multi-period setting and devises an active liquidation strategy for simulations. Whereas, the optimality of the liquidation strategy is yet to be discussed. It only assumes a heuristic proportional strategy without discussing the uncertainty of banks' strategies, which is significantly different from our optimization problem. \citep{BraouezecWagalath2018} exhibits a bank's optimal selling strategy on loans and marketable assets due to the regulatory requirements. However, financial contagion is not taken into account in this paper.

\section{Research Questions}\label{sec:preliminaries}
    % This section is devoted to illustrating our robust approach for a simplified example, while highlighting the issues emerging in more practical extensions. We defer the formal model setup and its justification to Section \ref{sec:perm_impact}, and focus only on the essential background for clarity in this section. Here, we assume the absence of permanent price impact and interbank liabilities.
We motivate our research questions via an illustrative example---a special case of our main problems.

{\sffamily\bfseries An illustrative example.} Consider a system of $m$ banks operating over two periods ($t = 1, 2$), where liabilities are cleared at the end of the time horizon. Each bank $i$ possesses liquid assets valued at $e_i$ and holds $z_i$ units of illiquid assets. If the liquid assets prove insufficient to meet liabilities $L_i$ (i.e., $e_i < L_i$), the bank must cover the shortfall by selling off its illiquid holdings at a depressed price---a process known as fire sales. In contrast to the standard literature that assumes no early fire sales, we allow each bank $i$ to \emph{strategically} liquidate its illiquid holdings $x_i$ at $t=1$, prior to clearing at $t=2$, in order to mitigate price impacts at clearing and sell a smaller amount of illiquid assets than in the scenario without early liquidation. Given a realization $\bx=(x_1,\ldots,x_m)^\top$ of all banks' early liquidation decisions, let $\yf_i(\bx)$ denote the amount of illiquid assets bank $i$ \emph{must} sell at clearing to fulfill its debt obligation for each $i=1,\ldots,m$. Then, by defining $\bx_{-i} \coloneqq (x_1, \dots, x_{i-1}, x_{i+1}, \dots, x_m)^\top$, bank $i$'s liquid asset value at clearing becomes:
\begin{equation}
    \lf_i(x_{i},  \bx_{-i}) \coloneqq e_i + x_i Q\left(\sum_{j = 1}^{m} x_j \right) + \yf_{i}(\bx) Q\left(\sum_{j = 1}^{m} \yf_{j}(\bx) \right),\label{l-bi-def}
\end{equation}
where $Q(\cdot)$ maps the total supply of illiquid assets to their price, and thus, the second and third terms represent proceeds from asset liquidation at $t=1$ and $t=2$, respectively.

		{\sffamily\bfseries Robust decision-making.} In this example, \emph{each bank $i$ would seek to determine its strategy $x_i$ that optimizes $\phi_i(x_i,\bx_{-i})$ while maintaining robustness against other banks' strategies $\bx_{-i}$}, as these are practically unobservable.
% In practice, other banks' strategies $\bx_{-i}$ are not disclosed to bank $i$. From a robust perspective, to secure sufficient liquid assets for fulfillment, bank $i$ seeks to maximize $\lf_i(x_{i},  \bx_{-i} ) $ under the worst-case scenarios of $\bx_{-i}$.
However, as well documented in the literature~\citep{AminiEtAl2016c} and as we shall demonstrate later, the mapping $\yf_{j}(\cdot)$ is characterized by a fixed-point equation involving nonconvex functions and is hence analytically intractable. A natural approach to this issue would be to replace $\yf_j(\bx)$ by its surrogate $\yf_j(\bzero)-x_j$ for all $j$, where $\bzero$ denotes the zero vector.\footnote{The early liquidation size $x_j$ is capped at $\yf_j(\bzero)$ for all $j=1,\ldots,m$. This is because, as noted earlier, the goal of early liquidation is to reduce the amount of illiquid assets sold and the losses from liquidation compared to the case without it. Similarly, $\yf_j(\bx)$ is capped at $\yf_j(\bzero)-x_j$.} This accounts for the worst-case scenario in which banks eventually sell off the same amount of illiquid assets as they would in the absence of early liquidation. %We elaborate on its detailed formulation in Section \ref{subsec:perm_prob}. To address this difficulty, we robustify $\lf_i(x_{i},  \bx_{-i} ) $ by substituting $\yf_j(\bx)$ with a robust counterpart $\xst_j - x_j$, which is the upper bound of $\yf_{j}(\bx)$ since the alleviation of price impact under split liquidations ensures that the bank $j$'s total liquidation sizes over two periods does not exceed a concentrated single-period sale. Consequently, we formulate the following maximin problem:
Then, the said decision-making problem of bank $i$ can be formulated as follows:
        \begin{equation}
            \begin{aligned}
                \max_{x_{i}}  \min_{\bx_{-i} } ~~   & e_i + x_i Q \left(\sum_{j=1}^m x_j \right) + \left(\xst_i - x_i \right) Q\left(\sum_{j=1}^m (\xst_j - x_j ) \right)
                \\  \textrm{s.t.} ~~~   &  0 \leq x_j \leq \xst_j~~\text{for}~j=0,1,\ldots,m,
            \end{aligned}
            \label{maximin-prob}
        \end{equation}
where $\bar x_j \coloneqq \psi_j(\bzero)$ for all $j=1,\ldots,m$. We refer to the optimal solution to the outer problem of \eqref{maximin-prob}, denoted by $\xtld_i$, as bank $i$'s \emph{robust optimal strategy}. %To solve \eqref{maximin-prob}, we aggregate the first-period liquidation sizes of other banks into a single quantity, denoted by

Solving \eqref{maximin-prob} is surprisingly simple. By letting $s_i = \sum_{j \neq i} x_j$ and $\sst_i = \sum_{j\neq i}\xst_j$, it can be recast as $\max_{x_{i} \in [0, \xst_i]} \min_{s_{i} \in [0, \sst_i]} g_i(x_i, s_i)$, where $g_i(x, s) \coloneqq e_i + x Q(x + s) + (\xst_i - x) Q(\xst_i + \sst_i - (x + s_i))$. % and rewrite the objective of \eqref{prob:maximin-perm} as
        % \begin{equation}
        %     g_i(x_i, s_i) \coloneqq e_i + x_i Q(x_i + s_i) + (\xst_i - x_i) Q\left(\xst_i + \sst_i - (x_i + s_i) \right).
        % \end{equation}
        Under the common assumptions on $Q(\cdot)$ in the literature (see Assumption~\ref{assump1}), $g_i(x, s)$ is concave in $x$ and convex in $s$. Therefore, the minimax theorem \citep[][Theorem 36.3]{Rockafellar2015} holds: $\max_{x_{i} \in [0, \xst_i]} \min_{s_{i} \in [0, \sst_i]} g_i(x_i, s_i)  = \min_{s_i \in [0, \sst_{i}]} \max_{x_{i} \in [0, \xst_i]} g_i(x_i, s_i) $. Furthermore, $h_i$ satisfies the following point symmetry: $g_i(x_i, s_i) = g_i(\xst_i-x_i, \sst_i - s_i)$ for any ${x_{i} \in [0, \xst_i]}$ and ${s_{i} \in [0, \sst_i]}$. Accordingly,  $(\xst_i/2, \sst_i/2)$ is the saddle point of $h_i$, implying that
        \begin{equation}
            \xtld_i = \frac{\xst_i}{2}.
        \end{equation}
        % Intuitively, for bank $i$, the worst-case scenarios given its overly aggressive or conservative strategies are that all other banks also decide to liquidate excessively or insufficiently at $t = 1$, which exposes bank $i$ to extreme price declines. Therefore, to achieve profit maximization under worst-case scenarios, bank~$i$ should liquidate assets moderately at $t = 1$.
		
		{\sffamily\bfseries Main issues and questions.}
		The simplicity of identifying the robust optimal strategy, however, is limited to the special case presented above. When practical considerations are incorporated, finding the strategy becomes significantly more complex. Specifically, this paper focuses on two distinct extensions prevalent in the literature: incorporating permanent price impact and including interbank exposures.
		% Despite the illustrative guideline provided, the result rests on a simplified framework, restricting its application to more practical settings. Specifically, when we further incorporate permanent price impact,
        In the former case, early liquidation  exerts downward pressure on the asset price at clearing, which causes the associated version of $h_i$ to no longer exhibit point symmetry. In the latter case, interbank repayments must be factored into the value of liquid assets at clearing. This makes the corresponding version of $h_i$ dependent on the intricate interplay of liquidation strategies and payment requirements, thereby destroying its concave-convexity and precluding the application of the minimax theorem. Consequently, our primary research question is:
        \begin{itemize}
            \item[Q1.] \emph{Can we characterize (semi-)closed-form expressions of robust optimal strategies in these two cases?}
        \end{itemize}
        Additionally, we investigate the following question:
        \begin{itemize}
            \item[Q2.] \emph{Is the idea of replacing $\yf_j(\bx)$ with a surrogate $\xst_j-x_j$ valid for identifying a good liquidation strategy?}
        \end{itemize}
        We examine these questions in the subsequent sections.

{\sffamily\bfseries Further notation.} All vectors are column vectors and denoted by bold symbols, e.g., $\bu = (u_1, \dots, u_d)^\top \in \bbR^d$.  We use $[m]$ and $\bone$ to represent the set $\{1, \dots, m\}$ and the vector of ones in a suitable dimension, respectively. For any two vectors $\bu,  \bv \in \bbR^d$, $\bu \leq \bv$ means an entry-wise inequality, $[\bu, \bv] = [u_1, v_1] \times \cdots \times [u_d,  v_d]$, and $\bu \wedge \bv = (\min\{u_1,  v_1\},  \dots,  \min\{u_d,  v_d\})^\top$.  For any matrix $\bM$, we denote its $i$-th column by $\bM_i$. Finally, for any $x\in\bbR$, $x^+\coloneqq\max\{x,0\}$.

    \section{Robust Optimal Strategy with Permanent Price Impact}	\label{sec:perm_impact}
	
	It is widely known in the literature that price impacts are often persistent; buyers observe the depressed post-liquidation price and anticipate a slow recovery, preventing immediate reversion. In this section, we analyze the impact of this permanent price impact on the robust optimal strategy by characterizing it under different degrees of permanent price impact.
	
	\subsection{Model Setup}\label{subsec:model_perm}
	We detail our full modeling framework, elaborating on technical aspects omitted from the illustrative example and demonstrating how permanent price impact is incorporated into that example.

    {\sffamily\bfseries Balance sheet information.} We consider a financial system comprised of $m$ banks, indexed by $1, \dots, m$, that operates over two periods ($t = 1, 2$). Each bank $i$'s balance sheet consists of the following four components:
    \begin{itemize}
        \item $e_i > 0$ stands for the ``value'' of its liquid assets;

        \item $z_i \geq 0$ represents the ``volume'' of its illiquid assets, with a nominal unit price of $P$;\footnote{We adopt the standard assumption in the literature that all banks hold the same type of illiquid assets, which serves as an approximation of multiple correlated assets~\citep{WeberWeske2017,BraouezecWagalath2019}.}

        \item $L_i> 0$ denotes its total liabilities to creditors;

        \item $w_i \coloneqq e_i + z_i P - L_i$ means its initial net worth before liquidation.
    \end{itemize}
    We assume that these components remain constant throughout the two periods and that liquid assets (e.g., cash, central bank reserves, and short-term government securities) can be sold and transferred at their nominal value without any cost. To exclude trivial cases where banks are not required to liquidate their illiquid assets, it is assumed that $e_i<L_i$ for all $i\in[m]$. The associated balance sheet is described in Table~\ref{tab:init-bst-noint}.

    \begin{table}[t]
        \centering\caption{Initial balance sheet of bank $i$ before liquidation\label{tab:init-bst-noint}}
        {\begin{tabular}{ccc}  % centered columns (4 columns)
                \toprule
                \textbf{Assets}  & &\textbf{Liabilities and Net Worth}   \\  % inserts table %heading
                \midrule
                Liquid assets $e_i$ & & Liabilities $L_i$   \\
                Illiquid assets $z_i$ with price $P$ & &Net worth $w_i$     \\
                \bottomrule
        \end{tabular}}{}
    \end{table}

    % Assuming no interbank obligations, the total liability of bank $j$ is given by $L_j = b_j$ for all $j \in [m]$. Then, the initial net worth of bank $j$ is given by $w_j = e_j + z_j P - L_j$. The external liquid assets, originating from financial institutions outside the system, can be converted to cash without incurring loss in value. Typical examples of external liquid assets include cash, short-term commercial paper, and Treasury bills. While
    {\sffamily\bfseries Illiquid asset price modeling.} Illiquid assets (e.g., loans, mortgages, and real estate holdings) are, however, difficult to sell without a significant loss of value. In the related literature \citep{CifuentesEtAl2005, ChenEtAl2016}, this loss is typically modeled by the so-called inverse demand function $Q: [0,  \sum_{j = 1}^m z_j] \rightarrow \bbR_+$, a decreasing function that maps the aggregate supply of illiquid assets across all banks to their price. %Given the total liquidation size $x$, the total \emph{liquidation income}, which is the total amount of cash raised in the liquidation, is given by $xQ(x)$.
    Throughout this paper, we impose several regularity conditions on the function $Q$, described below:
    \begin{assumption}The inverse demand function $Q$ is twice continuously differentiable and satisfies the following five conditions: (i) $Q(0) = P$; (ii) $Q'(x) < 0$; (iii) $(xQ(x))' > 0$; (iv) $Q^{\prime\prime}(x) \geq 0$; and (v)  $(xQ'(x))' < 0$.        \label{assump1}
    \end{assumption}
    The first condition states that illiquid assets are valued at their nominal price in the absence of sales. Conditions~(ii) and~(iii) reflect the observation that as the supply of illiquid assets increases, their price decreases while the total value of these assets in the market grows. The fourth condition imposes a diminishing marginal effect on the price of illiquid assets with respect to their size. Condition~(v) ensures that the total value of illiquid assets in the market is concave in their size. Note that the above conditions (or their variants) are commonly used in the literature~\citep{AminiEtAl2016c,BichuchFeinstein2019}. Examples of inverse demand functions satisfying Assumption~\ref{assump1} include $Q(x) = P - a x$ with $0 < a < P/(2\sum_{k = 1}^m z_k)$, $Q(x) = P\exp(-a x)$ with $0 < a < 1/\sum_{k = 1}^m z_k$, and $Q(x) = P(\beta x+1)^{-a}$ with $0 < a < 1/(\beta \sum_{k = 1}^m z_k)$ and $\beta>0$.
    % \begin{example}
    %     % Consider different inverse demand functions $Q \in \cQ$.
    %     \begin{enumerate}[label=(\roman*)]
    %         \item If $Q(x) = P - a x$, then Assumption \ref{assump1} implies $0 < a < P/(2\sum_{j = 1}^m z_j)$.% by Assumption \ref{assump1} (i)-(iii). By Assumption \ref{assump1} (iv) and (v), $a \geq 0$. In conclusion, $0 < a < P/(2y_{tot})$.

    %         \item If $Q(x) = P\exp(-a x)$, then Assumption \ref{assump1} implies $0 < a < 1/\sum_{j = 1}^m z_j$. % by Assumption \ref{assump1} (i)-(iii). By Assumption \ref{assump1} (iv) and (v), we have $a < 1/y_{tot}$. As a result, $0 < a < 1/y_{tot}$.

    %         \item If $Q(x) = P(\beta x+1)^{-a}$ for $\beta > 0$, then Assumption \ref{assump1} implies $0 < a < 1/(\beta \sum_{j = 1}^m z_j)$. % by Assumption \ref{assump1} (i)-(iii). By Assumption \ref{assump1} (iv) and (v), $a < 1/(\beta y_{tot})$. In combination, $\beta > 0$ and $0 < a < 1/(\beta y_{tot})$.

    %     \end{enumerate}
    % \end{example}
    % To concentrate our analysis of the liquidation effect, we impose that no exogenous assets and liabilities are introduced across two periods. Thus, apart from asset liquidation, the quantities or values of banks' liquid assets, illiquid assets, and liabilities remain unchanged.

    % \begin{remark}
    %     Compared to \cite{CifuentesEtAl2005}, our framework extend to the two-period setting, while assuming the absence of interbank liabilities. This allows us to focus our analysis on the effects of permanent price impact, whereas the role of interbank liabilities is investigated separately in Section \ref{sec:intb_liab}.
    % \end{remark}

    {\sffamily\bfseries Early liquidation and permanent price impact.} We assume that all banks clear their debts at the end of the time horizon (i.e., $t=2$), consistent with prior studies \citep[e.g.,][]{CifuentesEtAl2005, ChenEtAl2016, AminiEtAl2016c, GhamamiEtAl2022}. As alluded to earlier, a salient feature of our modeling framework lies in the option for \emph{strategic early liquidation} at $t=1$ to mitigate liquidation losses and to sell a smaller volume of illiquid assets, which is in stark contrast to most existing models that allow liquidation only at clearing. Suppose that each bank $i$ chooses to sell off $x_i$ units of illiquid assets at $t = 1$. Then, $Q(\sum_{k = 1}^{m}x_{k})$ becomes the market price of illiquid assets at $t=1$ and is strictly smaller than the nominal price $P$ by Assumption~\ref{assump1}. % When in the case of insufficient liquid assets, banks need to raise funds by liquidating illiquid assets, even incurring liquidation losses. Since we focus on the analysis of liquidation strategy, we also assume that $e_j < L_j$ for all $j \in [m]$, implying that each bank $j$ is short of liquid assets for repayment and required to liquidate their illiquid assets.

    To account for \emph{permanent price impact}, we consider a situation where the price impact generated at $t=1$ persists into $t=2$. Specifically, the asset price at clearing is governed by another inverse demand function of the asset supply at $t=2$, defined as $u\mapsto Q(\gamma\sum_{k=1}^mx_k+u)$, where the parameter $\gamma\in[0,1]$ represents the proportion of the early liquidation volume that continues to depress prices at $t = 2$. That is, a larger $\gamma$ implies slower price recovery. The limiting case $\gamma = 0$ corresponds to fully transient impact, where the initial price at $t=2$ recovers to the nominal price $P$, as in the example in Section~\ref{sec:preliminaries}. Conversely, $\gamma = 1$ signifies fully permanent price impact, where the opening price at $t=2$ remains at the post-liquidation price at $t=1$; this latter case coincides with the model in \cite{GhamamiEtAl2022}.
    While existing research commonly focuses on either of these two extreme cases, our approach generalizes the framework of \cite{GhamamiEtAl2022} by allowing for ``partially'' permanent price impact via the parameter $\gamma$.

    {\sffamily\bfseries Clearing process at $t=2$.} Following the early liquidation at $t=1$, each bank $i\in[m]$ retains $z_i - x_i$ units of illiquid assets, while its liquid asset value becomes $e_i + x_{i} Q(\sum_{k = 1}^{m}x_{k})$. At the clearing stage, given a price $p$, each bank $i$ attempts to cover its liabilities by selling  $p^{-1}(L_i - e_i-x_{i} Q(\sum_{k = 1}^{m}x_{k}))^+$ units of illiquid assets. If the bank cannot meet the obligations, it defaults and liquidates all remaining $z_i-x_i$ units of assets. Thus, the quantity sold by bank~$i$ is  $\{p^{-1}(L_i - e_i-x_{i} Q(\sum_{k = 1}^{m}x_{k}))^+\} \wedge (z_i-x_i)$ for all $i\in[m]$. Consequently, the market price $\ppm_2$ of illiquid assets at $t = 2$ is determined by the following fixed-point equation
        \begin{equation}\label{eq:p2-pm}
            \ppm_2 = Q\left(\gamma \sum_{j=1}^{m}x_j + \sum_{j = 1}^{m} \frac{\left(L_j - e_j - x_{j}Q(\sum_{k = 1}^{m}x_{k})\right)^+}{\ppm_2} \wedge \left(z_j - x_{j}\right) \right).
        \end{equation}
        Note that a straightforward extension of Lemma~3 in \cite{AminiEtAl2016c} ensures the uniqueness of $\ppm_2$.
        We denote each bank $i$'s liquidation size at $t = 2$ by
        \begin{equation}
            \yfpm_{i}(\bx) \coloneqq \frac{\left(L_i - e_i - x_{i}Q(\sum_{k = 1}^{m}x_{k})\right)^+}{\ppm_2} \wedge \left(z_i - x_{i}\right).\footnote{When $\gamma = 0$, $\yfpm_{i}(\cdot) $ reduces to $\yf_i(\cdot)$ in the example in Section~\ref{sec:preliminaries}. }
            \label{eq:xj2-pm}
    \end{equation}
    Finally, bank $i$'s liquid asset value at clearing is given by
    \begin{equation}
        \lpm_i(x_{i},  \bx_{-i}) \coloneqq e_i + x_i Q \big(\bone^\top \bx \big) + \yfpm_{i}(\bx) Q\left(\gamma \bone^\top \bx + \sum_{j = 1}^{m} \yfpm_{j}(\bx) \right),
        \label{def:l-perm}	
    \end{equation}
    where the second and last terms denote bank $i$'s income from fire sales at $t=1$ and $t=2$, respectively.

    \begin{remark}
        Permanent price impact could also be modeled using more sophisticated dynamics, such as, the Volume Weighted Average Price (VWAP) approach investigated in ~\cite{BanerjeeFeinstein2021}. Unlike static snapshots, this method captures the entire liquidation history by defining the average asset price through an integral of the inverse demand function.
        % Specifically, given the inverse demand function $Q$ and liquidation quantity from $x$ to $y$, the VWAP is $(y-x)^{-1} \int_{x}^{y}Q(t)dt $. Under permanent price impact, given the first-period strategies $\bx$, the first period VWAP is $(\bone^\top \bx)^{-1} \int_{0}^{\bone^\top \bx} Q(t) dt $, and the VWAP at $t = 2$ becomes $(\bone^\top \byf(\bx))^{-1} \int_{\gamma \bone^\top \bx}^{\gamma \bone^\top \bx + \bone^\top \byf(\bx)} Q(t) dt $.
        % Nonetheless, given that Assumption \ref{assump1} still applies to VWAP when $Q$ satisfies Assumption \ref{assump1}, the qualitative insights derived in the following Section~\ref{subsec:result_perm} are expected to persist under a VWAP specification,
        However, the technical analysis based on the VWAP framework introduces significant mathematical complexity due to the integration terms; therefore, we retain our current setup to preserve analytical tractability. We anticipate that the qualitative insights derived from our findings would persist under the VWAP framework.
        % This choice also ensures consistency with our introductory example in Section~\ref{sec:preliminaries}, which serves as a special case of the current model when $\gamma = 0$. We then defer these extensions on price impact dynamics to future research.
    \end{remark}

	\subsection{Main Problem}	\label{subsec:perm_prob}
	
	% In this subsection, we aim to identify a strategy that guarantees bank $i$'s liquid reserves for liability repayment regarding the uncertainty of other banks' strategies. To facilitate the problem formulation, we first observe that banks have no incentive to raise funds beyond their payment requirements at $t=1$ since front-loading liquidation would only intensify the price impact, and consequently, each bank's first-period liquidation size is constrained from above by its single-period liquidation size, which is formalized in the following lemma.
 %    \begin{lemma}
 %        If $L_j \geq e_j +  x_{j} Q(\sum_{k = 1}^{m} x_{k} )$ for all $j \in [m]$, then $\bx \leq \bxst$.
 %        \label{lemma-xj1-bound}
 %    \end{lemma}
 %    This result can be easily derived by contrapositive using \eqref{useful-ineq-x} and \eqref{useful-ineq-y}, which is presented in Section \ref{sec:proof_perm}.
%			Intuitively, implied by Assumption \ref{assump1}~(ii), excessive liquidations by some banks lead to excessive incomes earned beyond their needs.
    Based on the modeling framework described above, we now formulate the decision-making problem for strategic early liquidation. While certain elements of this subsection may overlap with the discussion on robust decision-making in Section~\ref{sec:preliminaries}, we provide here the full technical details omitted there, while extending the formulation to account for permanent price impact.

    {\sffamily\bfseries Robustification against other banks' actions.} As shown in~\eqref{def:l-perm}, each bank's liquid asset value at clearing depends on the volume of illiquid assets sold by other banks at $t=1$, which is often unknown. Hence, the solution to the following problem yields bank $i$'s optimal early liquidation decision under the worst-case strategies of other banks:
    \begin{equation}
        \begin{aligned}
            \max_{x_{i}}  \min_{\bx_{-i} } ~~  & \lpm_i(x_{i}, \bx_{-i})
            \\  \textrm{s.t.}~~~   &  \bzero \leq \bx \leq \bxst,
        \end{aligned}
        \label{prob:bilevel-perm}
    \end{equation}
    where $\xst_j$ is defined as in Section~\ref{sec:preliminaries} and is equal to $\yfpm_j(\bzero)$ for any $\gamma\in[0,1]$ and for all $j\in[m]$. The constraint in \eqref{prob:bilevel-perm} rules out irrational scenarios where banks liquidate more assets than they would without the early liquidation option, as this would contradict the purpose of the option.

    However, solving \eqref{prob:bilevel-perm} is extremely difficult because the values $\yfpm_{1}(\bx),\ldots,\yfpm_m(\bx)$ are endogenous to the fixed-point equation in \eqref{eq:p2-pm}, inducing a nested three-level optimization structure in \eqref{prob:bilevel-perm}. Indeed, one can numerically verify that the problem is highly nonconvex, making it challenging to identify the global optimum even via grid search in low-dimensional settings. To address this issue, we obtain robust upper bounds of $\yfpm_{1}(\bx),\ldots,\yfpm_m(\bx)$:
    \begin{lemma}
        For any $\bx \leq \bxst$, $\yfpm_{j}(\bx) \leq \xst_j-x_j$ for all $j \in [m]$ and $\gamma \in [0, 1]$.
        \label{lemma:yfpm-ub}
    \end{lemma}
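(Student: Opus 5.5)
The plan is to compare the clearing price $\ppm_2$ at a general $\bx\le\bxst$ with the single-period price $\pst \coloneqq Q(\sum_k \xst_k)$. This $\pst$ is the solution of \eqref{eq:p2-pm} at $\bx=\bzero$; the term $\gamma\bone^\top\bx$ vanishes there, so $\pst$ and $\xst_j=\big((L_j-e_j)/\pst\big)\wedge z_j$ do not depend on $\gamma$. The key intermediate claim is
\[
\ppm_2 \ \ge\ \pst \qquad\text{and}\qquad Q(\bone^\top\bx)\ \ge\ \pst .
\]

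\textbf{Step 1 (reduction).} Assume the two inequalities hold and fix $j$. If $\xst_j=z_j$, the cap in \eqref{eq:xj2-pm} gives $\yfpm_j(\bx)\le z_j-x_j=\xst_j-x_j$ directly. Otherwise $\xst_j=(L_j-e_j)/\pst<z_j$. Using both price bounds and monotonicity of $(\cdot)^+$,
\[
\frac{\bigl(L_j-e_j-x_jQ(\bone^\top\bx)\bigr)^+}{\ppm_2}\ \le\ \frac{\bigl(L_j-e_j-x_j\pst\bigr)^+}{\pst}=(\xst_j-x_j)^+=\xst_j-x_j,
\]
where the last equality uses $x_j\le\xst_j$. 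The second price bound is immediate: $\bone^\top\bx\le\bone^\top\bxst$ and $Q$ is decreasing, so $Q(\bone^\top\bx)\ge Q(\bone^\top\bxst)=\pst$.

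\textbf{Step 2 (the clearing price).} To show $\ppm_2\ge\pst$, introduce the map
\[
F(p)\coloneqq Q\Big(\sum_{k}\tfrac{L_k-e_k}{p}\wedge z_k\Big),
\]
which is nondecreasing in $p$ and has $\pst$ as its fixed point. I would first show $\ppm_2\ge F(\ppm_2)$. Since $Q(\bone^\top\bx)\ge \pst$, one has
\[
\yfpm_k(\bx)\le \Big(\tfrac{L_k-e_k}{\ppm_2}-x_k\tfrac{\pst}{\ppm_2}\Big)^+\wedge(z_k-x_k).
\]
The aim is then to bound the total time-$2$ supply $\gamma\bone^\top\bx+\sum_k\yfpm_k(\bx)$ by $\sum_k \frac{L_k-e_k}{\ppm_2}\wedge z_k$. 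This is clean if $\ppm_2\le\pst$, since then $x_k\pst/\ppm_2\ge x_k$ and each term is at most $\big(\tfrac{L_k-e_k}{\ppm_2}\wedge z_k\big)-x_k$. Summing and using $\gamma\le1$ and $Q$ decreasing yields $\ppm_2=Q(\text{supply})\ge F(\ppm_2)$. If instead $\ppm_2>\pst$, the claim already holds. So the argument runs by cases: either $\ppm_2>\pst$, or $\ppm_2\le\pst$ and $\ppm_2\ge F(\ppm_2)$.

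\textbf{Step 3 (main obstacle: single crossing).} I must exclude $\ppm_2<\pst$ together with $\ppm_2\ge F(\ppm_2)$. This needs $p<F(p)$ for every $p<\pst$, i.e., the fixed-point gap $p-F(p)$ crosses zero only once, from below, at $\pst$. This is the content behind the uniqueness result (the extension of Lemma~3 in \cite{AminiEtAl2016c}) and relies on Assumption~\ref{assump1}(iii), $(xQ(x))'>0$. I would reparametrize by the supply $u=\sum_k\frac{L_k-e_k}{p}\wedge z_k$. On the non-capped region, $p\,u$ is constant (locally $\sum(L_k-e_k)$ plus capped terms times $p$). Then $p<F(p)$ becomes a comparison between $uQ(u)$ and a revenue requirement, and the monotonicity of $uQ(u)$ gives a single crossing. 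Near $p\downarrow 0$ one has $F(p)\ge Q(\sum z_k)>0$, so the gap is negative there, which fixes the direction. This is where I expect the delicate work, especially handling the kinks from the $\wedge z_k$ caps. If needed, I would prove the contrapositive directly: if some $p<\pst$ satisfied $p\ge F(p)$, the intermediate value theorem would produce a second fixed point in $(0,\pst)$ or $[p,\pst)$, contradicting uniqueness.
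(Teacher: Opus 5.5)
Your proposal is correct and follows essentially the paper's route. Under $\ppm_2<\pst$ you show $F(\ppm_2)\le\ppm_2$, then use uniqueness of the single-period fixed point $\pst$ to rule this out; the paper does this via the monotone iteration $\rho_k\to\pst$, which implicitly relies on the same uniqueness, while your intermediate-value argument makes it explicit. You then finish with the same pointwise bound from $\ppm_2\ge\pst$ and $Q(\bone^\top\bx)\ge\pst$, and the one unstated detail in Step 2 (dropping the positive part) holds because $(L_k-e_k)/\ppm_2\ge(L_k-e_k)/\pst\ge\xst_k\ge x_k$.
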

    Several remarks are in order. First, the bound in the preceding lemma is tight and robust against as equality is attained when $\bx=\bzero$ or $\bx=\bxst$. Second, this bound is robust to the early liquidation decisions of other banks; as we shall see shortly, this significantly simplifies the subsequent analysis. Finally, this lemma formalizes the intuition that the early liquidation option does not increase a bank's total liquidation size.%Lemma \ref{lemma:yfpm-ub} suggests that without over-funding at $t=1$, each bank's total liquidation sizes in the two-period setting does not exceed its single-period liquidation size. Given that liquidation is dispersed in the two-period setting, this result is natural once we verify that the aggregate price impact at $t=2$, including the impact of persistent first-period liquidation, is less significant than in the single-period case, namely $\ppm_2 \geq \pst$. Since $\ppm_2$ and $\pst$ are both solutions to fixed-point systems, the inequality can be established via an iterative scheme, as detailed in Section \ref{sec:proof_perm}.

    {\sffamily\bfseries Relaxation of liquidation size at clearing.} Invoking Lemma \ref{lemma:yfpm-ub}, we circumvent the aforementioned intractability of \eqref{prob:bilevel-perm} by relaxing the constraint on each bank $j$'s liquidation size at clearing. Specifically, we allow it to take any value in the interval $[0,\xst_j-x_j]$ rather than restricting it to $\yf_j(\bx)$ for each $j\in[m]$. Under this relaxation, the maximin problem in \eqref{prob:bilevel-perm} can be converted into the following form:
    \begin{equation}
        \begin{aligned}
            \max_{x_i, y_{i}}  \min_{\bx_{-i},  \by_{-i} } ~~  & e_i + x_{i} Q\big(\bone^\top \bx\big) + y_{i} Q\big(\bone^\top (\gamma \bx + \by)\big)
            \\  \textrm{s.t.} ~~~~ &  \bx+\by\leq\bxst~~\textrm{and}~~\bx,\by\geq\bzero.
        \end{aligned}
        \label{prob:conv-perm}
    \end{equation}
    For any feasible solution $(\bx,\by)$ of \eqref{prob:conv-perm}, Assumption~\ref{assump1} indicates that
    $$
    \begin{aligned}
    (\xst_{i} -x_i)Q\big(\gamma x_i+\xst_{i}-x_i+\bone^\top (\gamma\bx_{-i}+\by_{-i})\big)  &\geq y_{i} Q\big(\gamma x_i+y_i+\bone^\top (\gamma\bx_{-i} + \by_{-i})\big)\\%= y_{i} Q\big(\bone^\top (\gamma\bx + \by)\big)\\
    &\geq
    y_{i} Q\big(\gamma x_i + y_i + \bone^\top (\gamma\bx_{-i} + \bxst_{-i}-\bx_{-i})\big).
    \end{aligned}
    $$
    This implies that, first, for fixed $\bx_{-i},\by_{-i}$, the objective function in \eqref{prob:conv-perm} is maximized when $y_i=\xst_i-x_i$, and second, for fixed $x_i,y_i$, the objective function is minimized when $\by_{-i}=\bxst_{-i}-\bx_{-i}$. In other words, the first constraint of \eqref{prob:conv-perm} is binding at its optimal solution. Consequently, solving \eqref{prob:conv-perm} is equivalent to solving
    \begin{equation}
        \begin{aligned}
            \max_{x_{i}}  \min_{\bx_{-i} } ~~   & \lfpm_i(x_i, \bx_{-i})
            \\  \textrm{s.t.} ~~~   &  \bzero \leq \bx \leq \bxst,
        \end{aligned}
        \label{prob:maximin-perm}
    \end{equation}
    where
    % \begin{equation}
        $\lfpm_i(x_{i}, \bx_{-i}) \coloneqq e_i + x_i Q \big(\bone^\top \bx \big) + (\xst_i - x_i) Q\big(\gamma \bone^\top \bx + \bone^\top (\bxst - \bx )\big)$ for all $\bx\in[\bzero,\bxst]$. This is the main problem we aim to investigate in this section, and we refer to the optimal solution to its outer problem as bank $i$'s \emph{robust optimal strategy} for early liquidation and denote it by $\xpm_i$.
        % \label{def:lf-perm}
    % \end{equation}

{\sffamily\bfseries Validity of the formulation.} Before presenting the analysis of our problem \eqref{prob:maximin-perm}, let us discuss its practical validity to address potential concerns about the implication of optimizing $\lfpm_i(x_{i}, \bx_{-i})$. Specifically, the following result provides insights into how the relaxation affects the liquidation strategy, or equivalently, the relationship between the solutions to \eqref{prob:bilevel-perm} and \eqref{prob:maximin-perm}
    \begin{proposition}[Validity of~\eqref{prob:maximin-perm}]
        For any fixed $x_i \in [0, \xst_i]$, the following statements hold:

        \begin{enumerate}[label=(\roman*)]

            \item If $\min_{\bx_{-i} \in [\bzero,  \bxst_{-i} ]} \lfpm_i(x_i, \bx_{-i}) \geq L_i$, then
            $
            \min_{\bx_{-i} \in [\bzero,  \bxst_{-i} ] } \lpm_i(x_i, \bx_{-i}) = L_i.
            $

            \item If $\min_{\bx_{-i} \in [\bzero,  \bxst_{-i}] } \lfpm_i(x_i, \bx_{-i}) < L_i$, then
            $
            \min_{ \bx_{-i} \in [\bzero,  \bxst_{-i} ]} \lpm_i\left(x_i, \bx_{-i}\right) \geq \min_{\bx_{-i} \in [\bzero,  \bxst_{-i} ]} \lfpm_i\left(x_i, \bx_{-i}\right).
            $
        \end{enumerate}
        \label{prop:eff-perm}
    \end{proposition}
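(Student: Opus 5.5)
The plan rests on one pointwise comparison: for every $\bx\in[\bzero,\bxst]$,
\[
\lpm_i(x_i,\bx_{-i})\;\ge\;\min\{L_i,\;\lfpm_i(x_i,\bx_{-i})\},
\]
together with the universal bound $\lpm_i\le L_i$. Both parts of Proposition~\ref{prop:eff-perm} then follow by minimizing over $\bx_{-i}$.

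\textbf{Step 1: the cap $\lpm_i\le L_i$.} Recall that $x_i Q(\bone^\top\bx)\le L_i-e_i$ whenever $x_i\le\xst_i$. This is implicit in the constraint $\bx\le\bxst$: one can check it via $x_iQ(\bone^\top\bx)\le \xst_i Q(\xst_i)\le \xst_i\pst= L_i-e_i$ when bank $i$ is solvent, with a similar bound otherwise. By \eqref{eq:xj2-pm}, the clearing sale satisfies $\yfpm_i(\bx)\,\ppm_2\le L_i-e_i-x_iQ(\bone^\top\bx)$. Substituting into \eqref{def:l-perm} gives $\lpm_i\le L_i$.

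\textbf{Step 2: when bank $i$ is solvent.} If the minimum in \eqref{eq:xj2-pm} is attained by the first term, then $\lpm_i=L_i$ exactly.

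\textbf{Step 3: when bank $i$ defaults.} Otherwise $\yfpm_i(\bx)=z_i-x_i\ge \xst_i-x_i$. For each $j$, Lemma~\ref{lemma:yfpm-ub} gives $\yfpm_j(\bx)\le\xst_j-x_j$, so the total clearing supply satisfies $\gamma\bone^\top\bx+\sum_j\yfpm_j(\bx)\le \gamma\bone^\top\bx+\bone^\top(\bxst-\bx)$. Since $Q$ is decreasing,
\[
\yfpm_i(\bx)\,Q\Big(\gamma\bone^\top\bx+\textstyle\sum_j\yfpm_j(\bx)\Big)\ge(\xst_i-x_i)\,Q\big(\gamma\bone^\top\bx+\bone^\top(\bxst-\bx)\big),
\]
and hence $\lpm_i\ge\lfpm_i$. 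Steps 2 and 3 together give the pointwise comparison.

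\textbf{Part (i).} If $\min_{\bx_{-i}}\lfpm_i\ge L_i$, then $\lfpm_i(x_i,\bx_{-i})\ge L_i$ for every $\bx_{-i}$, so the pointwise comparison gives $\lpm_i\ge L_i$. Combined with Step 1, $\lpm_i=L_i$ everywhere, and so the minimum equals $L_i$.

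\textbf{Part (ii).} If $\min_{\bx_{-i}}\lfpm_i=:m<L_i$, then for each $\bx_{-i}$ we have $\lpm_i\ge\min\{L_i,\lfpm_i\}\ge\min\{L_i,m\}=m$. Taking the minimum over $\bx_{-i}$ gives $\min\lpm_i\ge m$, as required.

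\textbf{Main obstacle.} The delicate point is Step 1: justifying that early proceeds never exceed $L_i-e_i$. Equivalently, one must show that $(L_i-e_i-x_iQ(\bone^\top\bx))^+$ behaves as intended and that the default case yields $z_i-x_i\ge\xst_i-x_i$. The latter holds because $\xst_i\le z_i$ by the definition of $\xst_i=\yfpm_i(\bzero)$. For the former, I would use $\xst_i\ge (L_i-e_i)/P$ together with the monotonicity of $xQ(x)$ from Assumption~\ref{assump1}(iii).

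If the bound in Step 1 fails, the case analysis still works. The overshoot case $L_i-e_i<x_iQ(\bone^\top\bx)$ gives $\yfpm_i(\bx)=0$ and $\lpm_i>L_i$. Part (ii) is unaffected, since it only needs the lower bound. Part (i) would then need the cap, and I expect it to come from Step 1's argument applied to the first-period price.
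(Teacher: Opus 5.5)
\textbf{The gap is in Step 1: the universal cap $\lpm_i\le L_i$ on $[\bzero,\bxst]$ is false, so part (i) as written is not proved.} The rest of your argument is correct. In the non-default case, $\lpm_i=\max\{L_i,\,e_i+x_iQ(\bone^\top\bx)\}\ge L_i$. This need not equal $L_i$ as Step 2 says, but $\ge L_i$ is all the pointwise comparison needs. In the default case, Step 3 correctly gives $\lpm_i\ge\lfpm_i$ from Lemma~\ref{lemma:yfpm-ub} and $\xst_i\le z_i$. So $\lpm_i(x_i,\bx_{-i})\ge\min\{L_i,\lfpm_i(x_i,\bx_{-i})\}$ holds at every point, and part (ii) follows exactly as you wrote. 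This is the same solvent/default split the paper uses. The paper applies it only at a minimizer $\bu^\gamma$ of $\lpm_i(x_i,\cdot)$ and argues by contrapositive; your pointwise version is slightly cleaner.

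Your chain for Step 1 contains a reversed inequality. Since $\xst_i\le\bone^\top\bxst$ and $Q$ is decreasing, $Q(\xst_i)\ge\pst$, so $\xst_iQ(\xst_i)\ge\xst_i\pst$, not $\le$. Here is a concrete failure. Suppose bank $i$ is solvent in the benchmark, so $\xst_i\pst=L_i-e_i$, and $\sst_i>0$. Take $x_i=\xst_i$ and $\bx_{-i}=\bzero$. Then $\xst_iQ(\xst_i)>L_i-e_i$, so $\yfpm_i(\bx)=0$ and $\lpm_i(\xst_i,\bzero)=e_i+\xst_iQ(\xst_i)>L_i$. Hence the claim ``$\lpm_i=L_i$ everywhere'' in part (i) is false. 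Your fallback (``Step 1's argument applied to the first-period price'') also does not identify what actually gives $\min_{\bx_{-i}}\lpm_i\le L_i$.

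\textbf{The fix:} you only need the cap at one point, $\bx_{-i}=\bxst_{-i}$. Use monotonicity of $x\mapsto xQ(x+y)$ for fixed $y$, i.e.\ \eqref{useful-ineq-x} with $y=\sst_i$, together with \eqref{ineq:xst-pst-ub}:
\[
x_iQ(x_i+\sst_i)\le\xst_iQ(\xst_i+\sst_i)=\xst_i\pst\le L_i-e_i .
\]
Hence $\yfpm_i(\bx)\,\ppm_2\le L_i-e_i-x_iQ(\bone^\top\bx)$ at that point, and $\lpm_i(x_i,\bxst_{-i})\le L_i$. Under the hypothesis of (i), your pointwise bound gives $\lpm_i\ge L_i$ everywhere, so $\min_{\bx_{-i}}\lpm_i=L_i$. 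This is exactly how the paper closes part (i), and with this replacement your proof is complete.
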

    By the definitions of $\xst_i$ and $\yfpm_j(\bx)$, it is easy to check that $\lpm_i(x_i, \bxst_{-i}) \leq L_i$ for any $x_i \in [0, \xst_i]$; see the proof of Proposition~\ref{prop:eff-perm} for more details.\footnote{From a practical perspective, this inequality holds since each bank's liquid asset value at clearing is characterized as either exactly meeting the payment obligation or failing to do so.} Thus, the first statement of Proposition~\ref{prop:eff-perm} shows that if the optimal value of~\eqref{prob:maximin-perm} is greater than or equal to $L_i$, then bank $i$'s robust optimal strategy, $\xpm_i$, is also an optimal solution to the outer problem of \eqref{prob:bilevel-perm}. Furthermore, the second statement of the theorem suggests that, even if the optimal value of~\eqref{prob:maximin-perm} is smaller than $L_i$, bank $i$'s robust optimal strategy still maximizes a lower bound for the worst-case value of its liquid assets at clearing. Accordingly, both statements indicate that the robust optimal strategy obtained by solving~\eqref{prob:maximin-perm} can serve as a conservative approximation and a reliable surrogate for the true optimal decision that solves~\eqref{prob:bilevel-perm}. This answers our second research question (Q2), introduced in Section~\ref{sec:preliminaries}, for the case of permanent price impact.%The second statement is straightforward if $\lpm_i(x_i, \bu^\gamma) = L_i$ where $\bu^\gamma \coloneqq \argmin_{\bx_{-i} \in [\bzero, \bxst_{-i}]} \lpm_i(x_i, \bx_{-i})$. Even if $\lpm_i(x_i, \bu^\gamma) < L_i$ that bank $i$ fails to fully repay and sells out all remaining assets under the worst-case scenarios, the second statement still holds because the relaxations $\{\xst_j - x_j\}_{j \in [m]}$ in $\lfpm_i(x_i, \bx_{-i})$ provide the worst quantification for the second-period price impact determined by $\{\yfpm_j(\bx)\}_{j \in [m]} $ in $\lpm_i(x_i, \bx_{-i}) $. The details are entailed in the proof of Proposition \ref{prop:eff-perm}.
	
	In Figure~\ref{fig:bi-comp-gm}, we numerically validate the results in Proposition~\ref{prop:eff-perm} by comparing the inner optimal values $\min_{\bx_{-i} \in [\bzero,  \bxst_{-i} ] } \lpm_i(x_i, \bx_{-i})$ and $\min_{\bx_{-i} \in [\bzero,  \bxst_{-i} ] } \lfpm_i(x_i, \bx_{-i})$ across different values of $x_i$ in a two-bank system with $m=i=2$ for varying levels of permanent price impact: $\gamma = $ $0$, $0.5$, and $1$. The former optimal value is plotted using blue curves---solid for $\gamma=0$, dashed for $\gamma=0.5$, and dotted for $\gamma=1$---without any markers, whereas the latter is plotted using red curves---solid for $\gamma=0$, dashed for $\gamma=0.5$, and dotted for $\gamma=1$---with circle $(\bullet)$ markers. The associated parameter setting can be found in the note of the figure.
    The left panel of the figure illustrates cases where bank~2's robust optimal strategy guarantees its solvency regardless of $\gamma$. By contrast, the right panel depicts scenarios in which the proposed strategy leads to solvency when $\gamma = 0$ and $0.5$, but default is inevitable when $\gamma = 1$.

    In both panels, we observe that in the range of $x_2$ where the $\bullet$-marked red curves exceed $L_2=5$, the unmarked blue curves remain at $L_2=5$, which is consistent with Proposition~\ref{prop:eff-perm}(i). Additionally, whenever the $\bullet$-marked red curves are below $L_2=5$, they are positioned slightly beneath the unmarked blue curves, which confirms Proposition~\ref{prop:eff-perm}(ii), and the negligible gap between two curves demonstrates the reliability of our relaxed problem \eqref{prob:maximin-perm} at least in this example.
    \begin{figure}[t]
        {\includegraphics[width=0.99\textwidth]{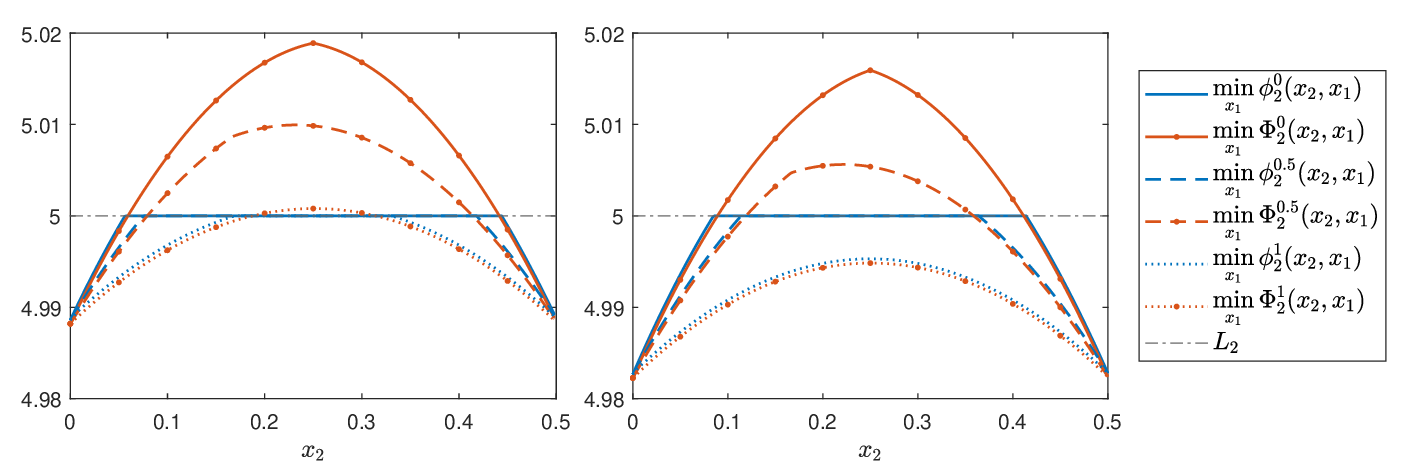}}
        \caption{A numerical comparison between the inner optimal values of \eqref{prob:bilevel-perm} and \eqref{prob:maximin-perm}.\label{fig:bi-comp-gm} We use a two-bank system with $Q(x) = 1 - 0.2x$, $L_1 = L_2 = 5$, $z_1 = 1$, $z_2 = 0.5$, and $e_2 = 4.55$, under $\gamma = 0$, $0.5$, and $1$. Also, we set $e_1 = 4.9$ and $e_1 = 4.85$ for the left and right panels, respectively. }
    \end{figure}
	
	\subsection{Analysis of the Robust Optimal Strategy with Permanent Price Impact}	\label{subsec:result_perm}
In this section, we address our primary research question (Q1) for the case of permanent price impact. To that end, we first derive an analytical expression for the robust optimal strategy $\xpm_i$ and then analyze how it changes with the level of permanent price impact (i.e., $\gamma$).

	{\sffamily\bfseries Derivation of the robust optimal strategy.} When $\gamma = 0$, \eqref{prob:maximin-perm} reduces to \eqref{maximin-prob}, leading to $\xpm_i = \xtld_i = \xst_i/2$. In contrast, when $\gamma > 0$, early liquidation not only incurs an immediate price impact at $t=1$ but also depresses the asset price at $t=2$. Consequently, from bank~$i$'s perspective, the worst-case scenario arises when other banks liquidate more aggressively at $t=1$. To counter this compounded impact at $t = 1$, bank~$i$ must employ a more prudent early liquidation strategy that depends on the permanency level $\gamma$, resulting in a downward deviation of the robust optimal strategy $\xpm_i$ from  $\xtld_i$ that solves \eqref{maximin-prob}. This is formalized in the following theorem. Furthermore, the theorem demonstrates that $\xpm_i$ is uniquely determined and admits a (semi-)closed-form expression for any $\gamma \in [0, 1]$.
    \begin{theorem}[Robust Optimal Strategy with Permanent Price Impact]\label{thm:opt-strat-perm}
        For each $i\in[m]$ and for each $\gamma\in[0,1]$, the robust optimal strategy $\xpm_i$ that solves \eqref{prob:maximin-perm} is uniquely determined and is bounded from above by $\xst_i/2$. More specifically, if  $\gamma \leq \gamma^* \coloneqq 1-\xst_i/(\sum_{j=1}^n\xst_j)$,  then $\xpm_i$ is given by
        %		$\xopt_i = [(1 - \gamma)\xst_i/(2 - \gamma)] \vee \xtld_i$.
        \begin{equation}\label{eq:closed sol}
            \xpm_i = \frac{1 - \gamma}{2 - \gamma}\xst_i,
        \end{equation}
        and if $\gamma>\gamma^*$, $\xpm_i$ is the unique solution to the following equation for $x$:
        \begin{equation}
            Q(x + \sst_i) + x Q'(x + \sst_i) = Q\big(\xst_i + \gamma \sst_i - (1 - \gamma) x \big) + (1 - \gamma)(\xst_i - x) Q'\big(\xst_i + \gamma \sst_i - (1 - \gamma) x \big),
            \label{eq:xpm-fst-ord}
        \end{equation}
        where $\sst_i = \sum_{j \neq i} \xst_j$. % which is equivalent to $\partial \lfpm_i(\xfe_i^\gamma, \bxst_{-i})/\partial x_i = 0$.
    \end{theorem}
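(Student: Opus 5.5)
The plan is to reduce \eqref{prob:maximin-perm} to a two-variable concave–convex saddle-point problem, as in the illustrative example of Section~\ref{sec:preliminaries}. Since $\lfpm_i$ depends on $\bx_{-i}$ only through $s=\bone^\top\bx_{-i}\in[0,\sst_i]$, we write $x=x_i$ and
\[
g(x,s)\coloneqq e_i+xQ(x+s)+(\xst_i-x)\,Q\big(\xst_i+\sst_i-(1-\gamma)(x+s)\big),
\]
and the problem becomes $\max_{x\in[0,\xst_i]}\min_{s\in[0,\sst_i]}g(x,s)$.

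First, convexity in $s$. We have
\[
\partial_s^2 g=xQ''(x+s)+(1-\gamma)^2(\xst_i-x)\,Q''(\cdot)\ge 0
\]
by Assumption~\ref{assump1}(iv).

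Next, strict concavity in $x$. For the first term, $\partial_x^2[xQ(x+s)]=2Q'(x+s)+xQ''(x+s)$. Because $x\le x+s$ and $Q''\ge0$, this is at most $2Q'(y)+yQ''(y)=Q'(y)+(yQ'(y))'$ with $y=x+s$, which is $<0$ by (ii) and (v). The second term is handled the same way: it equals $\big(1-(1-\gamma)\big)$-shifted copies of the same structure, i.e.\ $2(1-\gamma)Q'(b)+(1-\gamma)^2(\xst_i-x)Q''(b)$, and after bounding $(1-\gamma)^2\le 1-\gamma$ it is $\le (1-\gamma)\,[2Q'(b)+(\xst_i-x)Q''(b)]<0$ by the same argument, using $\xst_i-x\le b$.

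With these two facts, the minimax theorem (Rockafellar, Theorem~36.3) applies, so it suffices to exhibit a saddle point. Uniqueness of $\xpm_i$ then follows because $x\mapsto\min_s g(x,s)$ is a minimum of strictly concave continuous functions over a compact set, and is therefore strictly concave.

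The derivatives are
\[
\partial_x g=Q(a)+xQ'(a)-Q(b)-(1-\gamma)(\xst_i-x)Q'(b),\qquad
\partial_s g=xQ'(a)-(1-\gamma)(\xst_i-x)Q'(b),
\]
where $a=x+s$ and $b=\xst_i+\sst_i-(1-\gamma)(x+s)$.

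\textbf{Case $\gamma\le\gamma^*$.} Try an interior saddle with $a=b$ and $x=(1-\gamma)(\xst_i-x)$. These give $x=\frac{1-\gamma}{2-\gamma}\xst_i$ and $x+s=\frac{\xst_i+\sst_i}{2-\gamma}$, hence $s=\frac{\gamma\xst_i+\sst_i}{2-\gamma}$. Both partials then vanish, since each reduces to $Q'(a)\,[x-(1-\gamma)(\xst_i-x)]=0$. Moreover $s\le\sst_i$ is equivalent to $\gamma(\xst_i+\sst_i)\le\sst_i$, i.e.\ $\gamma\le\gamma^*$. Concave–convexity makes this stationary point a saddle, which yields \eqref{eq:closed sol}. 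The bound $\frac{1-\gamma}{2-\gamma}\le\frac12$ is immediate.

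\textbf{Case $\gamma>\gamma^*$.} Take $s=\sst_i$ and let $h(x)=\partial_x g(x,\sst_i)$, which is strictly decreasing by strict concavity. At the left endpoint,
\[
h(0)=Q(\sst_i)-Q(\xst_i+\gamma\sst_i)-(1-\gamma)\xst_i Q'(\cdot)>0 .
\]
At the right endpoint, $h(\xst_i)=\xst_iQ'(\xst_i+\sst_i)<0$. Hence there is a unique root $\xfe$ of \eqref{eq:xpm-fst-ord}.

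The main obstacle is verifying that $(\xfe,\sst_i)$ is indeed a saddle point. This requires $\partial_s g(\xfe,\sst_i)\le0$, i.e.\ $\xfe\,Q'(a)\le(1-\gamma)(\xst_i-\xfe)\,Q'(b)$, and also $\xfe\le\xst_i/2$.

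My first attempt would be a monotone-comparison argument in $s$. Let $(x(s),s)$ denote the curve of maximizers in $x$ for each fixed $s$. On this curve, $\partial_s g$ changes sign at most once, and the case-1 formula shows that its zero would lie at $s=\frac{\gamma\xst_i+\sst_i}{2-\gamma}>\sst_i$ when $\gamma>\gamma^*$. This should force $\partial_s g<0$ on all of $[0,\sst_i]$ along the curve.

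To make this precise, combine $h(\xfe)=0$ with the mean-value identity
\[
Q(a)-Q(b)=Q'(\xi)(a-b).
\]
Since $\gamma>\gamma^*$ is equivalent to $a<b$ along the curve, this gives $\partial_s g=\frac{1}{\cdots}$-weighted difference, i.e.\ a sign determined by $a<b$ together with the monotonicity of $Q'$ from (iv).

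Finally, for $\xfe\le\xst_i/2$, I would check $h(\xst_i/2)\le0$ using the same inequality $a<b$ and the convexity of $Q$.
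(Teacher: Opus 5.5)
Your route is different from the paper's, and the parts you carry out are correct. The concavity and convexity computations are right, and the case $\gamma\le\gamma^*$ is complete. Your uniqueness argument is also valid: a pointwise minimum of strictly concave functions, attained on a compact set, is strictly concave, and the $x$-component of any saddle point maximizes $x\mapsto\min_s g(x,s)$, so the minimax theorem is not even needed. This is far shorter than the paper's Steps~2--3, which build a monotone selection $s(x)$ via submodularity and Topkis's theorem and then show $g(\cdot,s(\cdot))$ is unimodal.

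The gap is at the step you flag yourself, in the case $\gamma>\gamma^*$. The claim that ``$\gamma>\gamma^*$ is equivalent to $a<b$ along the curve'' is asserted, not proved, and the mean-value/``weighted difference'' sketch does not establish it. At $(\xfe,\sst_i)$ one has $a<b$ iff $\xfe<x_0\coloneqq(\xst_i-(1-\gamma)\sst_i)/(2-\gamma)$, so what is really needed is an upper bound on the root $\xfe$. The paper gets this from its Step~1 bound $\xfe\le\xst_i/2-(1-\gamma)\sst_i/(2(2-\gamma))$, which lies strictly below $x_0$ exactly when $\gamma>\gamma^*$.

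A direct fix is available:
\begin{itemize}
\item We have $x_0\in(0,\xst_i]$, and at $x_0$ one has $a=b=(\xst_i+\sst_i)/(2-\gamma)$.
\item Hence $h(x_0)=\big(\gamma\xst_i-(1-\gamma)\sst_i\big)Q'\big((\xst_i+\sst_i)/(2-\gamma)\big)<0$.
\item Since $h$ is strictly decreasing, $\xfe<x_0$, and hence $a<b$.
\item Then $h(\xfe)=0$ rearranges to $\partial_s g(\xfe,\sst_i)=Q(b)-Q(a)<0$. No mean value theorem or monotonicity of $Q'$ is needed.
\end{itemize}
Your curve-of-maximizers heuristic could be made rigorous through convexity of $s\mapsto\max_x g(x,s)$. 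However, it relies on the stationary point at $s=(\gamma\xst_i+\sst_i)/(2-\gamma)>\sst_i$, where $g$ may leave the domain of $Q$.

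Your plan for $\xfe\le\xst_i/2$ also does not work as written. At $(x,s)=(\xst_i/2,\sst_i)$ one has $a<b$ iff $(1-\gamma)\sst_i<\gamma\xst_i/2$, which fails for $\gamma$ slightly above $\gamma^*$, so $a<b$ is not available there. Use the inequality at $\xfe$ instead. By convexity of $Q$, $Q(a)-Q(b)\le -Q'(a)(b-a)$ and $Q'(b)\ge Q'(a)$, so $0=h(\xfe)\le Q'(a)\big[(2-\gamma)\xfe-(1-\gamma)\xst_i-(b-a)\big]$. Since $Q'(a)<0$, the bracket is $\le 0$. This unpacks to exactly the paper's bound $\xfe\le\xst_i/2-(1-\gamma)\sst_i/(2(2-\gamma))\le\xst_i/2$.

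Finally, a small slip: $h(\xst_i)=Q(\xst_i+\sst_i)-Q(\gamma(\xst_i+\sst_i))+\xst_iQ'(\xst_i+\sst_i)$, not $\xst_iQ'(\xst_i+\sst_i)$. The extra difference is $\le 0$, so $h(\xst_i)<0$ still holds.
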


    We briefly discuss the technical rationale for the characterization of $\xpm_i$ in the preceding theorem. Analogous to the argument in Section~\ref{sec:preliminaries}, we aggregate the early liquidation sizes of non-target banks into a single quantity, $s_i = \sum_{j \neq i} x_j$, and rewrite the main problem in \eqref{prob:maximin-perm} as
\begin{equation}\label{eq:ggamma}
    \max_{x_{i} \in [0, \xst_i]} \min_{s_{i} \in [0, \sst_i]} g_i^\gamma(x_i, s_i)  = \min_{s_{i} \in [0, \sst_i]} \max_{x_{i} \in [0, \xst_i]} g_i^\gamma(x_i, s_i),
\end{equation}
where $g_i^\gamma(x, s) \coloneqq e_i + x Q(x + s) + (\xst_i - x) Q\left(\xst_i + \sst_i - (1 - \gamma) (x + s) \right)$. The equality in \eqref{eq:ggamma} holds due to the concave-convexity of $g_i^\gamma$ and the minimax theorem \citep[][Theorem 36.3]{Rockafellar2015}. Then, the first-order conditions for \eqref{eq:ggamma}, $({\partial g_i^\gamma}/{\partial x})(x_i,s_i)=0$ and $({\partial g_i^\gamma}/{\partial s})(x_i,s_i)=0$, hold only at
\begin{equation}\label{eq:saddle pt}
(x_i,s_i) = \lt(\frac{1-\gamma}{2-\gamma}\xst_i,\frac{\gamma\xst_i+\sst_i}{2-\gamma}\rt)
\end{equation}
and this saddle point is a feasible solution to \eqref{eq:ggamma} if and only if $\gamma\leq \gamma^*$. This yields the characterization of $\xpm_i$ in~\eqref{eq:closed sol} for the case where $\gamma\leq \gamma^*$. Notably, we observe that \eqref{eq:saddle pt} satisfies $x_i+s_i = \xst_i+\sst_i-(1-\gamma)(x_i+s_i)$. This implies that \eqref{eq:saddle pt} effectively balances the resulting price impact at $t=1$ and $t=2$, as the left- and right-hand sides of this identity correspond to the arguments of the function $Q$ in the definition of $g_i^\gamma$.

Conversely, when $\gamma>\gamma^*$, one can easily intuit that the optimal solution to~\eqref{eq:ggamma} lies at the boundary $s_i=\sst_i$ because the unique saddle point~\eqref{eq:saddle pt} falls within $[0,\xst_i/2)\times(\sst_i,\xst_i+\sst_i]$. This implies that if the permanent price impact is sufficiently large, the worst-case scenario for bank~$i$ arises when other banks liquidate their entire required volume of illiquid assets only at $t=1$. In this case, the robust optimal strategy $\xpm_i$ is characterized by the following first-order condition $({\partial g_i^\gamma}/{\partial x})(x_i,\sst_i)=0$, which is equivalent to solving the equation in \eqref{eq:xpm-fst-ord}.
    While the reasoning outlined above provides a straightforward path to constructing a solution to~\eqref{prob:maximin-perm}, the essence of Theorem~\ref{thm:opt-strat-perm} lies in establishing its uniqueness, thereby ruling out the possibility of any alternative solutions; see the proof of the theorem in Section \ref{sec:proof_perm} for more details. It is also worth highlighting that solving \eqref{eq:xpm-fst-ord} is computationally tractable since $(\partial g_i^\gamma/\partial x)(\cdot, \sst_i)$ is strictly decreasing. Moreover, when the inverse demand function is linear---a common assumption in prior studies---the solution admits a closed form. Specifically, if $Q(x) = P - \alpha x$, then \eqref{eq:xpm-fst-ord} can be written as $P - \alpha (2x + \sst_i) = P - \alpha ((2-\gamma)\xst_i + \gamma \sst_i - 2(1 - \gamma) x)$,
        and thus, \begin{equation}\label{eq:linear}
        \xpm_i = \frac{\xst_i}{2} - \frac{(1-\gamma)\sst_i}{2(2-\gamma)}.
        \end{equation}
        % \begin{equation}
            %     \xpm_i = \frac{\xst_i}{2} - \frac{1 - \gamma}{2 - \gamma} \frac{\sst_i}{2}.
            % \end{equation}
        % Specifically, when $\gamma = 1$, $\xpm_i = \xst_i/2$, which coincides with the robust optimal strategy when $\gamma = 0$.
        Accordingly, in this linear setting, $\xpm_i$ exhibits a V-shaped dependence on $\gamma$: it decreases in $\gamma$ on $[0,\gamma^*)$, as shown in~\eqref{eq:closed sol}, and increases with $\gamma$ on $(\gamma^*,1]$ as indicated by \eqref{eq:linear}.
    % \end{example}

    {\sffamily\bfseries Structural dependence on the parameter $\gamma$.} The V-shaped dependence of the robust optimal strategy mentioned above is not limited to the specific choice of $Q(\cdot)$. Indeed, the following theorem demonstrates that, under a mild condition, this behavior holds for ``all'' inverse demand functions satisfying Assumption~\ref{assump1}.
    \begin{theorem}[Impact of $\gamma$: A Typical Scenario] For any $i\in[m]$, if $\gamma^*\geq1/2$, then the robust optimal strategy $\xpm_i$ increases in $\gamma$ on $(\gamma^*,1]$.
        \label{thm:xpm-non-mono}
    \end{theorem}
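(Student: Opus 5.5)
The plan is to differentiate the first-order condition \eqref{eq:xpm-fst-ord} implicitly in $\gamma$ and show that $d\xpm_i/d\gamma>0$ on $(\gamma^*,1]$. Write $s=\sst_i$ and set
$$F(x,\gamma)\coloneqq Q(x+s)+xQ'(x+s)-R(x,\gamma),\qquad R(x,\gamma)\coloneqq Q(u)+(1-\gamma)(\xst_i-x)Q'(u),$$
where $u=\xst_i+\gamma s-(1-\gamma)x$. By Theorem~\ref{thm:opt-strat-perm}, $\xpm_i$ is the unique root of $F(\cdot,\gamma)$ for $\gamma>\gamma^*$. The discussion after that theorem states that $F(\cdot,\gamma)=(\partial g_i^\gamma/\partial x)(\cdot,s)$ is strictly decreasing, and I will assume $F_x<0$ strictly. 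Since $Q\in C^2$, the implicit function theorem then makes $\gamma\mapsto\xpm_i$ a $C^1$ map on $(\gamma^*,1]$, with
$$\frac{d\xpm_i}{d\gamma}=-\frac{F_\gamma}{F_x}=\frac{-R_\gamma}{F_x}.$$
So it suffices to show $R_\gamma<0$ at $(\xpm_i,\gamma)$. I will also check continuity at $\gamma^*$: there the boundary saddle point \eqref{eq:saddle pt} has $s_i=\sst_i$, so $F\bigl(\tfrac{1-\gamma^*}{2-\gamma^*}\xst_i,\gamma^*\bigr)=0$. Hence $\xpm_i\to \xst_i^2/(\xst_i+2s)$ as $\gamma\downarrow\gamma^*$.

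Next I compute $R_\gamma$. Put $a=\xst_i-x$ and $b=s+x$, so that $u=a+\gamma b$ and $u_\gamma=b$. Then
$$R_\gamma=Q'(u)(b-a)+(1-\gamma)abQ''(u).$$
By Assumption~\ref{assump1}(v), $Q''(u)<-Q'(u)/u$. Substituting this gives
$$R_\gamma<\frac{Q'(u)}{u}\bigl[(b-a)u-(1-\gamma)ab\bigr]=\frac{Q'(u)}{u}\bigl(\gamma b^2-a^2\bigr).$$
Since $Q'<0$, it is enough to have $\sqrt{\gamma}\,(s+x)\ge \xst_i-x$ at $x=\xpm_i$.

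The main obstacle is this last inequality. It fails for $x$ near $0$: the assumption $\gamma^*\ge1/2$ is equivalent to $s\ge\xst_i$, and even then $x=0$ does not work. So a lower bound on $\xpm_i$ is needed, and I would get it by a continuation (bootstrap) argument. At $\gamma=\gamma^*$, with $x_0=\xst_i^2/(\xst_i+2s)$ and $t=s/\xst_i\ge1$, a direct computation shows that $\gamma^*b^2>a^2$ reduces to
$$(2t^2+t+1)^2>4t(1+t),$$
which holds strictly for all $t\ge1$.

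Let $\gamma_1$ be the supremum of those $\gamma'\in(\gamma^*,1]$ such that $\xpm_i$ is nondecreasing on $(\gamma^*,\gamma']$; strict positivity near $\gamma^*$, together with continuity, makes this set nonempty. On $(\gamma^*,\gamma_1]$ we then have $\gamma\ge\gamma^*$ and $\xpm_i\ge x_0$. The quantity $\gamma b^2-a^2$ is increasing both in $\gamma$ and in $x$ (larger $x$ raises $b$ and lowers $a\ge0$), so it is at least its strictly positive value at $(x_0,\gamma^*)$. This forces $d\xpm_i/d\gamma>0$ at $\gamma_1$, and so by continuity on a neighborhood of $\gamma_1$. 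If $\gamma_1<1$, this contradicts maximality, so $\gamma_1=1$.

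Therefore $\xpm_i$ is strictly increasing on $(\gamma^*,1]$. The only remaining care is the endpoint $\gamma=1$, where the factor $(1-\gamma)$ vanishes. This needs no special treatment, because the derivative formula above remains valid there.
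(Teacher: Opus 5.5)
Your argument is correct, and its skeleton matches the paper's proof. The paper also uses the strict decrease of $(\partial g_i^\gamma/\partial x)(\cdot,\bar{s}_i)$ (proved in Step~1 of the proof of Theorem~\ref{thm:opt-strat-perm}, so you need not assume it). It likewise reduces the claim to the sign of the mixed partial $\partial^2 g_i^\gamma/\partial x\,\partial\gamma=F_\gamma=-R_\gamma$ at $x_i^\gamma$, and it uses exactly your identity $(b-a)u-(1-\gamma)ab=\gamma b^2-a^2$ together with Assumption~\ref{assump1}(v).

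The one real difference is how you secure $\gamma b^2\ge a^2$. You carry the value at $\gamma^*$ forward by a continuation argument that relies on the monotonicity being proved. The paper instead reuses the a priori bound $x_i^\gamma\ge(1-\gamma)\bar{x}_i/(2-\gamma)$. This bound holds for every $\gamma\ge\gamma^*$ and was already obtained in that same Step~1 from the sign of \eqref{eq:gi-dx-xgm}. Combined with $\bar{s}_i\ge\bar{x}_i$ and $\gamma\ge1/2$, it gives $(1+\gamma)x_i^\gamma+\gamma\bar{s}_i\ge\bar{x}_i$, i.e.\ $\gamma b\ge a$. Hence $\gamma b^2\ge ab\ge a^2$, because $b\ge\bar{s}_i\ge\bar{x}_i\ge a$.

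The paper's route is pointwise in $\gamma$ and needs neither continuity at $\gamma^*$ nor the supremum argument. Yours yields a stronger, $\gamma$-independent lower bound and strict rather than weak monotonicity, at the cost of that extra machinery.

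There are two slips, neither fatal.
\begin{itemize}
\item Since $1-\gamma^*=\bar{x}_i/(\bar{x}_i+\bar{s}_i)$, the limit as $\gamma\downarrow\gamma^*$ is $(1-\gamma^*)\bar{x}_i/(2-\gamma^*)=\bar{x}_i^2/(2\bar{x}_i+\bar{s}_i)$, not $\bar{x}_i^2/(\bar{x}_i+2\bar{s}_i)$. This is harmless: under $\bar{s}_i\ge\bar{x}_i$ your value is the smaller of the two, and $\gamma b^2-a^2$ is increasing in $x$. With the correct value, your check reduces to $t^2+t>1$.
\item Because $F_\gamma=-R_\gamma$, the implicit derivative is $d x_i^\gamma/d\gamma=R_\gamma/F_x$, not $-R_\gamma/F_x$. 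Your subsequent statement that $R_\gamma<0$ suffices is the one consistent with the correct formula.
\end{itemize}
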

    % Theorem \ref{thm:xpm-non-mono} may appear counter-intuitive at first glance, as they suggest that intensified permanent price impact gives rise to a bifurcation in the strategic behaviors of bank $i$, yet this result is compatible with Theorem \ref{thm:opt-strat-perm}.  As implied by Theorem \ref{thm:opt-strat-perm}, in the regime where $\gamma \leq \sst_i/(\xst_i + \sst_i)$, bank $i$ strategically reduces its first-period sales compared to the case of no price impact. This strategic behavior continues as $\gamma$ rises within this regime.
    Firstly, we note that the condition $\gamma^*\geq1/2$ is equivalent to $\xst_i\leq \sst_i\coloneqq \sum_{j \neq i} \xst_j$, meaning that bank~$i$'s contribution to the price impact is smaller than the aggregate contribution of all other banks. Thus, it is typically satisfied in practice. Secondly, as discussed earlier, the worst-case scenario for bank~$i$  with $\gamma>\gamma^*$ corresponds to a situation in which all other banks collectively liquidate $\sst_i$ units of illiquid assets exclusively at $t = 1$. Therefore, Theorem~\ref{thm:xpm-non-mono} aligns with the intuition that, given $\xst_i \leq \sst_i$, a larger $\gamma$ leads to a sharper price decline at $t = 2$. This prompts bank $i$ to increase sales at $t = 1$ to avoid the anticipated future price distress. By combining this result with the analytical solution~\eqref{eq:closed sol} for $\xpm_i$ when $\gamma\leq\gamma^*$, the aforementioned V-shaped dependence holds generally.

    The left and right panels of Figure~\ref{fig:rbst-opt-strat1} depict a bank's robust optimal strategy and its worst-case scenario on the aggregate early liquidation size of other banks, respectively, as functions of $\gamma$ when the inverse demand function $Q$ is nonlinear; specifically $Q(x)=\exp(-0.2x)$. In this example, $\gamma^* \approx 0.7$, satisfying the condition in Theorem \ref{thm:xpm-non-mono}. Consistent with our theoretical results, the left panel displays the aforementioned non-monotonic behavior of $\xpm_i$ with the tipping point $\gamma^*$ and the upper bound $\xst_i/2$, and the right panel shows that under the worst-case scenario, higher $\gamma$ drives other banks to liquidate more at $t = 1$ when $\gamma \leq \gamma^*$, and to liquidate exclusively in the first period once $\gamma > \gamma^*$.

    \begin{figure}[t]
        {\includegraphics[width=0.5\textwidth]{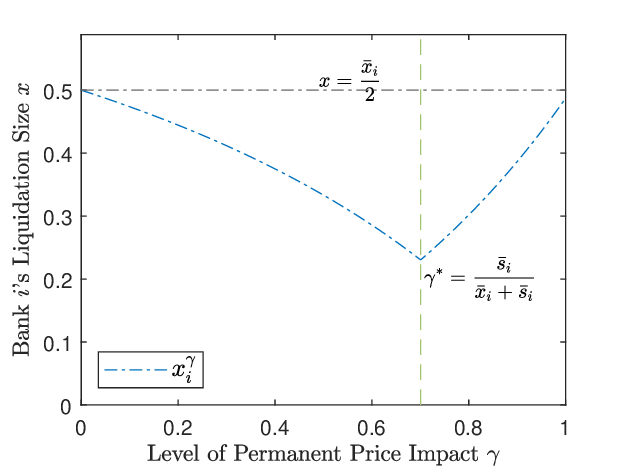}
            \includegraphics[width=0.5\textwidth]{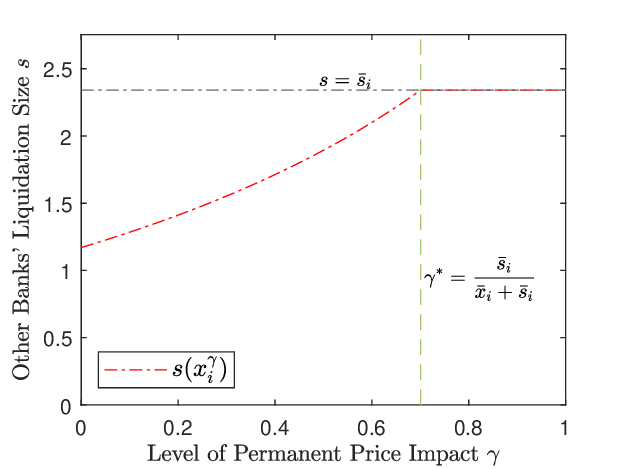}}
        \caption{A bank's robust optimal strategy and its worst-case scenario on the aggregate early liquidation size of other banks under various levels of permanent price impact when $\gamma^*$ is large.\label{fig:rbst-opt-strat1} We use a four-bank system with $i=1$, $L_1=\cdots=L_4 = 6$, $z_1=\cdots=z_4 = 1$, $w_1 = 0.3$, $w_2 = w_3 = 0.5$, and $w_4 = 0.8$. We set $Q(x) = \exp(-0.2x)$ and $s(x)=\argmin_{s \in [0, \sst_i]} g_i^\gamma(x, s)$, where $g_i^\gamma$ is defined as in \eqref{eq:ggamma}.}
    \end{figure}

    However, one can easily intuit that the said V-shaped dependence does not hold universally. Particularly, in an extreme case where bank $i$'s contribution to the price impact is significantly greater than the aggregate contribution of all other banks (i.e., $\xst_i\gg \sst_i$), the price impact would be predominantly driven by bank~$i$'s own liquidation. Hence, as $\gamma$ increases, the optimal response for bank $i$ would be to monotonically reduce its early liquidation size to alleviate the effects of persistent price impact. This intuition is rigorously verified in the following proposition.
    \begin{proposition}[Impact of $\xpm_i$: An Extreme Scenario]
        For any $i\in[m]$, if $Q''(\cdot)>0$ and $\gamma^*$ is sufficiently small, then
        % and define $\delta \coloneqq \min_{z \in [(\xst_i - \sst_i)/5, ~\xst_i + \sst_i]} \{-zQ''(z)/Q'(z) \}$. If $\sst_i/\xst_i \leq 2\delta/15$,
        the robust optimal strategy $\xpm_i$ decreases in $\gamma$ on $[0,1]$.
        \label{thm:rbst-dec}
    \end{proposition}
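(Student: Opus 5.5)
We split according to Theorem~\ref{thm:opt-strat-perm}. On $[0,\gamma^*]$ we have $\xpm_i=\frac{1-\gamma}{2-\gamma}\xst_i$. Its derivative in $\gamma$ is $-\xst_i/(2-\gamma)^2<0$, so $\xpm_i$ strictly decreases there, with no condition needed. The work is on $(\gamma^*,1]$, where $\xpm_i$ solves \eqref{eq:xpm-fst-ord}. We write this as $F(x,\gamma)=0$, where
\[
F(x,\gamma)\coloneqq \frac{\partial g_i^\gamma}{\partial x}(x,\sst_i)=Q(x+\sst_i)+xQ'(x+\sst_i)-Q(w)-(1-\gamma)(\xst_i-x)Q'(w),
\]
with $w\coloneqq\xst_i+\gamma\sst_i-(1-\gamma)x$. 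Since $F(\cdot,\gamma)$ is strictly decreasing (noted after Theorem~\ref{thm:opt-strat-perm}), the implicit function theorem gives $d\xpm_i/d\gamma=-F_\gamma/F_x$, which has the sign of $F_\gamma$. Continuity of $\xpm_i$ at $\gamma^*$ (the saddle point sits on the boundary $s_i=\sst_i$ there) glues the two pieces. So it suffices to show $F_\gamma(\xpm_i,\gamma)<0$ on $(\gamma^*,1]$ when $\gamma^*$ is small, i.e.\ when $\sst_i/\xst_i$ is small.

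Next we compute $F_\gamma$. Using $\partial w/\partial\gamma=\sst_i+x$, we get
\[
F_\gamma=-Q'(w)(\sst_i+x)+(\xst_i-x)Q'(w)-(1-\gamma)(\xst_i-x)(\sst_i+x)Q''(w)
=Q'(w)(\xst_i-\sst_i-2x)-(1-\gamma)(\xst_i-x)(\sst_i+x)Q''(w).
\]
Because $Q''>0$, the second term is $\le 0$, and it is strictly negative for $\gamma<1$. Because $Q'<0$, the first term is $\le0$ exactly when $\xst_i-\sst_i-2x\ge0$, i.e.\ when $\xpm_i\le(\xst_i-\sst_i)/2$.

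The main obstacle is this first term, because Theorem~\ref{thm:opt-strat-perm} only gives $\xpm_i\le\xst_i/2$. The plan is a sharper localization of $\xpm_i$ for $\gamma>\gamma^*$, obtained by evaluating the sign of $F$ at the point $x_0=(\xst_i-\sst_i)/2$. At $x_0$ we have $x_0+\sst_i=(\xst_i+\sst_i)/2$ and $w=(1+\gamma)(\xst_i+\sst_i)/2$. If $F(x_0,\gamma)\le 0$, monotonicity of $F$ gives $\xpm_i\le x_0$, and then both terms of $F_\gamma$ are nonpositive, the second strictly so.

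To verify $F(x_0,\gamma)\le0$, write $T=\xst_i+\sst_i$ and $h(u)\coloneqq Q(u)+uQ'(u)$. This $h$ is decreasing by Assumption~\ref{assump1}(iv)--(v). Since
\[
x_0Q'(T/2)=h(T/2)-Q(T/2)-\sst_iQ'(T/2)\quad\text{and}\quad(1-\gamma)(\xst_i-x_0)=\tfrac{1-\gamma}{2}T,
\]
we can rewrite
\[
F(x_0,\gamma)=\big[Q(T/2)+\tfrac{T}{2}Q'(T/2)\big]-\big[Q(w)+\tfrac{(1-\gamma)T}{2}Q'(w)\big]-\sst_iQ'(T/2).
\]
As $\sst_i\to0$, the last term vanishes, and the difference of brackets should be strictly negative uniformly in $\gamma\in[\gamma^*,1]$. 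That is the step to check carefully. It uses convexity of $Q$, i.e.\ $Q(w)-Q(T/2)\ge Q'(T/2)(w-T/2)$, together with $Q'(w)\ge Q'(T/2)$ and $w\ge T/2$. These combine to give a strictly negative quantity of order $TQ''$, which is where strict positivity of $Q''$ is used.

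The endpoint $\gamma=1$ needs separate handling, because there the second term of $F_\gamma$ vanishes; there we require $x<x_0$ strictly, again via the strict inequality. Since all quantities vary continuously in $\sst_i/\xst_i$ on a compact range of $\gamma$, the slack $\sst_iQ'$ is absorbed for $\gamma^*=1-\xst_i/T$ sufficiently small. This gives $F_\gamma<0$ and hence a decreasing $\xpm_i$ on all of $[0,1]$.
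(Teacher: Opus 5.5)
Your reduction is sound and matches the paper's. The closed form handles $[0,\gamma^*]$, and on $(\gamma^*,1]$ the sign of $dx_i^\gamma/d\gamma$ is that of $F_\gamma=Q'(w)(\bar x_i-\bar s_i-2x)-(1-\gamma)(\bar x_i-x)(\bar s_i+x)Q''(w)$. The gap is the localization step. You claim that $F(x_0,\gamma)\le 0$ for all $\gamma\in(\gamma^*,1]$ once $\bar s_i$ is small, and this is false. At $\gamma=\gamma^*$ the two branches of Theorem~\ref{thm:opt-strat-perm} coincide, so $x_i^{\gamma^*}=\bar x_i^2/(2\bar x_i+\bar s_i)$ and
\[
x_i^{\gamma^*}-\frac{\bar x_i-\bar s_i}{2}=\frac{\bar s_i(\bar x_i+\bar s_i)}{2(2\bar x_i+\bar s_i)}>0
\]
for every $\bar s_i>0$. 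By continuity of the root, $x_i^\gamma>x_0$, i.e.\ $F(x_0,\gamma)>0$, on a right neighbourhood of $\gamma^*$, however small $\bar s_i$ is. There the first term of $F_\gamma$ is strictly positive, and your argument says nothing about the sign of $F_\gamma$.

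Your own decomposition shows why. The bracket difference equals $-\int_0^\gamma(1-t)\tfrac{T^2}{4}Q''\big((1+t)\tfrac{T}{2}\big)\,dt$. This vanishes at $\gamma=0$ and is only about $-\bar s_i\tfrac{T}{4}Q''(T/2)$ at $\gamma=\gamma^*=\bar s_i/T$. The slack is $\bar s_i|Q'(T/2)|$, and Assumption~\ref{assump1}(v) gives $\tfrac{T}{2}Q''(T/2)<|Q'(T/2)|$, so the slack wins. The range $[\gamma^*,1]$ is not a fixed compact set bounded away from $0$: its left endpoint tends to $0$ together with $\bar s_i$, exactly where the bracket degenerates.

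What is missing is a second mechanism for $\gamma$ near $\gamma^*$, and this is how the paper proceeds. Fix a threshold $\gamma_0<1$; the paper uses $\sqrt{3}-1$.

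For $\gamma>\gamma_0$ your localization does work, including at $\gamma=1$. The bracket is decreasing in $\gamma$, so it is bounded away from zero by a margin independent of $\bar s_i$. The paper shows this by a Taylor expansion with remainder, using $\delta=\min_{[0,\bar x_i+\bar s_i]}Q''>0$.

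For $\gamma\in(\gamma^*,\gamma_0]$ you must allow $x=x_i^\gamma>x_0$ and let the second term of $F_\gamma$ dominate.
\begin{itemize}
\item The upper bound in \eqref{ineq:xpm-lb-ub-smlgm} gives $0<2x-(\bar x_i-\bar s_i)\le\bar s_i/(2-\gamma)$, so the first term is at most $\bar s_i|Q'(w)|\le\bar s_i|Q'(0)|$.
\item From $x_0\le x\le\bar x_i/2$ you get $(1-\gamma)(\bar x_i-x)(x+\bar s_i)Q''(w)\ge(1-\gamma_0)\tfrac{\bar x_i}{2}\cdot\tfrac{\bar x_i+\bar s_i}{2}\,\delta$. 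This is a margin independent of $\bar s_i$ and beats $\bar s_i|Q'(0)|$ once $\bar s_i$ is small.
\end{itemize}
With this case split added, your argument goes through and coincides with the paper's proof.
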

    In Figure~\ref{fig:rbst-opt-strat2}, we conduct an experiment similar to that in Figure~\ref{fig:rbst-opt-strat1}, but with $\gamma^* \approx 0.11$. In this case, the non-monotonicity vanishes; instead, $\xpm_i$ decreases monotonically with $\gamma$, validating Theorem~\ref{thm:rbst-dec}. The right panel of the figure mirrors the pattern in that of Figure \ref{fig:rbst-opt-strat1}: increasing with $\gamma$ until $\gamma \leq \gamma^*$ and plateauing at $\sst_i$ thereafter.

    % \begin{remark}
        % In practice, the parameter $\gamma$ is generally given. Then, the robust optimal strategy is mainly shaped by the relative magnitudes of $\xst_i$ and $\sst_i$. In extreme cases where $\sst_i \gg \xst_i$, implying that other banks liquidate extensively, it is likely that $\gamma \leq \sst_i/(\xst_i + \sst_i) \approx 1$. Thus, bank $i$ places more weight on mitigating the worst-case first-period price impacts caused by other banks than the induced permanent price impacts at $t=2$. Conversely, if $\sst_i \ll \xst_i$, suggesting that $\gamma \geq \sst_i/(\xst_i + \sst_i) \approx 0$, bank $i$ tailors its liquidation strategy by considering the price impacts caused by itself, particularly the persistent impact at $t=2$ induced by its own liquidation.
    % \end{remark}.
	
	% We provide the graphical illustrations of the behaviors characterized in Theorem \ref{thm:xpm-non-mono} and \ref{thm:rbst-dec}.  In Figure \ref{fig:rbst-opt-strat1}, . C In contrast, Figure \ref{fig:rbst-opt-strat2} considers the case in which
	
	% validate Theorem \ref{thm:xpm-non-mono} and \ref{thm:rbst-dec}, respectively.
	
	% The numerical results shown in Figure \ref{fig:rbst-opt-strat2} validate the monotonicity of the robust optimal strategy $\xpm_i$ with respect to the level of permanent impact $\gamma$ when $\sst_i$ is sufficiently small.
	\begin{figure}[t]
		{\includegraphics[width=0.5\textwidth]{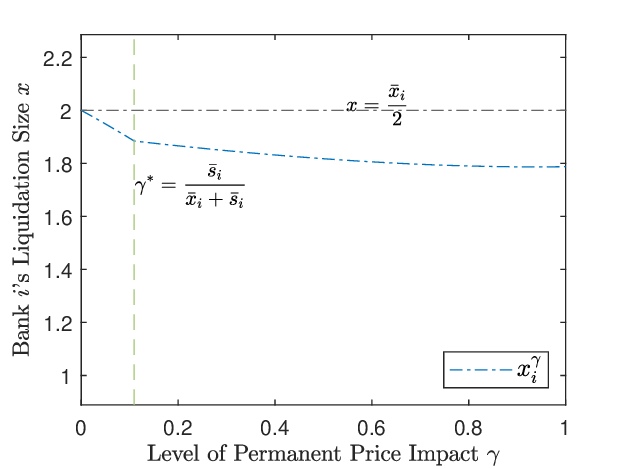}
			\includegraphics[width=0.5\textwidth]{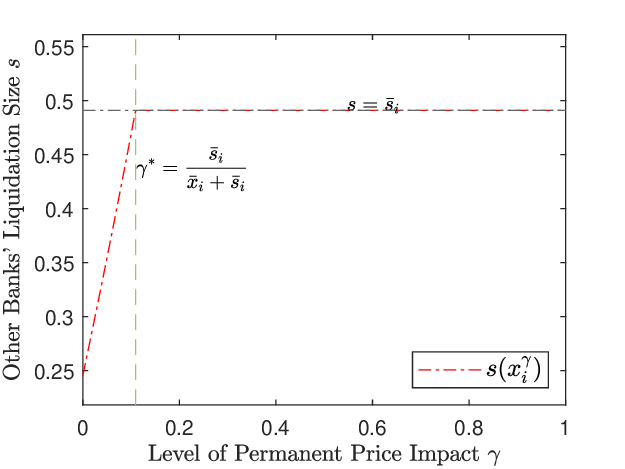}}
		\caption{A bank's robust optimal strategy and its worst-case scenario on the aggregate early liquidation size of other banks under various levels of permanent price impact when $\gamma^*$ is small. \label{fig:rbst-opt-strat2} We use a four-bank system with $i=1$, $L_1=\cdots=L_4 = 6$, $z_1 = 4$, $z_2=z_3=z_4=0.6$, $w_1 = 0.3$, $w_2 = w_3 = 0.5$, and $w_4 = 0.8$. We set $Q(x) = \exp(-0.2x)$ and $s(x)=\argmin_{s \in [0, \sst_i]} g_i^\gamma(x, s)$, where $g_i^\gamma$ is defined as in \eqref{eq:ggamma}.}
	\end{figure}

    \section{Robust Optimal Strategy with Interbank Exposures}	\label{sec:intb_liab}
		
    We now take into account debt exposures between banks and investigate how they affect the robust optimal strategy. Below, we present the associated model, which extends the framework introduced in \cite{CifuentesEtAl2005} and used in many follow-up studies~\citep{ChenEtAl2016,AminiEtAl2016c,WeberWeske2017,CapponiBernard2022}. Our extension accounts for (i) the possibility of early liquidation and (ii) each bank's liabilities to external entities beyond
the financial system. %into the framework while reverting to the assumption of no permanent impact, and extend the explicit characterization of the robust optimal strategy.
			
    Table~\ref{tab:init-bst} describes the initial balance sheet of each bank $i$ in an $m$-bank system operating over two periods. The difference from Table~\ref{tab:init-bst-noint} lies in that interbank assets $\{L_{ji}\}_{j\neq i}$ are added to the asset side, while total liabilities $L_i$ are now composed of liabilities to external entities (denoted by $b_i>0$) and interbank liabilities $\{L_{ij}\}_{j\neq i}$, i.e., $L_i = b_i+\sum_{j\neq i}L_{ij}$. Here, $L_{ij}$ denotes the nominal liabilities of bank $i$ owing to bank $j$, with $L_{ii}=0$, for all $i,j \in [m]$. We again assume that all balance sheet components are assumed to remain unchanged over the two periods ($t=1,2$) and all liabilities are cleared at $t=2$. %We still assume that $e_i < L_i$ for all $i \in [m]$, and banks must liquidate illiquid assets to repay their liabilities as far as possible.

%			 The rest of the financial system considered in this section remains as defined in Section \ref{sec:perm_impact}.
    \begin{table}[t]
        \centering\caption
        {Initial balance sheet of bank $i$ with interbank exposures before liquidation\label{tab:init-bst}}
        {\begin{tabular}{ccc}  % centered columns (4 columns)
                \toprule
                \textbf{Assets}  & &\textbf{Liabilities and Net Worth}   \\  % inserts table %heading
                \midrule
                Liquid assets $e_i$ & &External liabilities $b_i$   \\
                Interbank assets $L_{ji},  j \neq i$ & &Interbank liabilities $L_{ij},  j \neq i$    \\
                Illiquid assets $z_i$ with price $P$ & &Net worth $e_i+\sum_{j\neq i}L_{ji}+z_iP-L_i$     \\
                \bottomrule
        \end{tabular}\\
        {\footnotesize \sf The component $L_i$ represents bank $i$'s total liabilities, given by $L_i = b_i + \sum_{j\neq i}L_{ij}$.}}
    \end{table}

    % We assume that interbank assets are liquid as repayments are considered to be settled in cash. Then, the total liabilities and total liquid assets of bank $j$ are denoted by $L_j = b_j + \sum_{k \neq j}L_{jk}$ and $o_j = e_j + \sum_{k = 1}^m L_{kj}$, respectively. The initial net worth of bank $j$ is given by $w_j = e_j + \sum_{k=1}^{m}L_{kj} + z_j P - L_j$.
    It is also assumed that all liabilities are of equal seniority; if a bank's asset value at clearing is insufficient to cover total liabilities, its remaining assets are distributed to creditors on a \emph{pro rata} basis. To illustrate this, we introduce the relative liability matrix $\brLM = (\rLM_{ij})$, where $\rLM_{ij} = L_{ij}/L_i$ represents the proportion of bank $i$'s liabilities to bank $j$. %, satisfying that $\sum_{k=1}^m \rLM_{jk} = (L_j - b_j)/L_j < 1$. The assumption $_j < L_j$ for all $j \in [m]$ is maintained, and banks are required to liquidate illiquid assets to repay their liabilities to the greatest extent possible. As is commonly adopted in the literature\citep[see][]{EisenbergNoe2001, CifuentesEtAl2005, ChenEtAl2016}, we assume that banks' repayments are proportional to the notional values of their liabilities, which is implied by that all debt claims of each bank are of the same seniority.
    Then, if bank~$j$'s total debt repayment is $\pi_j$ for all $j\in[m]$, the total payment received by bank $i$ (i.e., the value of its interbank assets) becomes $\sum_{j=1}^m\rLM_{ji}\pi_j$ (or equivalently, $\brLM_i^\top\bpi$). Consequently, given that each bank~$i$ sells off $x_i$ units of illiquid assets at $t=1$, the market price $\psA$ of illiquid assets at $t=2$ and each bank~$i$'s clearing payment $\pi_j^\brLM$ are defined as the solution to the fixed-point equations, using the inverse demand function $Q$ satisfying Assumption~\ref{assump1} over two periods:\footnote{In this section, we set aside permanent price impact in order to focus exclusively on the impact of interbank exposures.}    \begin{equation}\label{eq:p2-l-rLM}
        \left\{~\begin{aligned}
            & \psA = Q\left(\sum_{j = 1}^{m} \frac{\left(L_j - e_j - x_{j}Q\big(\bone^\top\bx\big) - \brLM_j^\top \bpi^\brLM\right)^+}{p_2^\brLM } \wedge \left(z_j - x_{j}\right) \right), 		\\
            & \pi_i^\brLM = L_i \wedge \left(e_i + x_i Q\big(\bone^\top\bx\big) + (z_i - x_i) \psA + \brLM_i^\top \bpi^\brLM \right), ~~ i = 1,\ldots,m.
        \end{aligned}\right.
    \end{equation}
    Similar to Section~\ref{sec:perm_impact}, the existence and uniqueness of the above solution is guaranteed by \cite{AminiEtAl2016c}. Each bank $i$'s liquidation size at $t=2$ is then given by
    \begin{equation}
        \yf_i^\brLM(\bx) \coloneqq \frac{\left(L_i - e_i - x_iQ(\bone^\top \bx) - \brLM_i^\top \bpi^\brLM \right)^+}{p_2^\brLM} \wedge \left(z_i - x_i\right).
        \label{eq:psij-rLM}
    \end{equation}
     %In particular, $\yf_j^\bO(\bx) = \yf_j(\bx)$, where $\yf_j(\bx) $ is the second-period liquidation size of bank $j$ given $\bx$ in the example of Section \ref{sec:preliminaries}. Compared to \eqref{p_l_bar-def} and \eqref{x_bar-def}, \eqref{eq:p2-l-rLM} and \eqref{eq:psij-rLM} reflect the evolution in asset positions after the first-period liquidation, where bank $j$'s liquid assets accumulate to $e_j + x_j Q(\sum_{j = 1}^m x_j)$, while leaving a residual illiquid asset position of $z_j - x_j$ units.
%			Let $\bgao = (e_1, \dots, e_m)^\top$, $\bL = (L_1, \dots, L_m)^\top$, $\bpi^\brLM = (\pi_1^\brLM, \dots, \pi_m^\brLM)$, $\byf^\brLM(\bx) = (\yf_1^\brLM(\bx), \dots, \yf_m^\brLM(\bx))^\top$ denote the vectors of external liquid assets, total liabilities, liability repayments, and the second-period liquidation sizes, respectively.
Finally, bank $i$'s liquid asset value at clearing is  defined as
    \begin{equation}\label{eq:lA}
        \lrLM_i(x_i, \bx_{-i}) \coloneqq e_i + x_i Q\big(\bone^\top\bx\big) + \yf_i^\brLM(\bx) Q\big(\bone^\top\byf^\brLM(\bx)\big) + \brLM_i^\top \bpi^\brLM.
    \end{equation}

The subsequent analyses proceed in parallel with Section~\ref{sec:perm_impact}; we (i) relax the liquidation size at clearing, (ii) formulate the robust optimization problem (i.e., the counterpart of \eqref{prob:maximin-perm}), (iii) verify the validity of this formulation, and (iv) derive the robust optimal strategy, which is the optimal solution to this problem. Although the theoretical results here mirror those in Section~\ref{sec:perm_impact}, the proofs are nontrivial extensions and differ significantly due to the presence of $\bpi^\brLM$, which is characterized by a fixed-point equation.

Let us denote $\xbar_i\coloneqq\yf_i^\brLM(\bzero)$ for each $i\in[m]$, which represents bank $i$'s liquidation size at clearing in the absence of early liquidation, and we assume $\bxbar > \bzero$ since otherwise liquidation is not required for some banks. We then establish a variant of Lemma~\ref{lemma:yfpm-ub} that gives a robust bound of the liquidation size at $t=2$:

    \begin{lemma}
        For any $\bx \leq \bxbar$, $\yf_{j}^\brLM(\bx) \leq \xbar_j - x_j$ for all $j\in[m]$.
        \label{lemma:yf-ub-intb}
    \end{lemma}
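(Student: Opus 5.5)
Two monotone comparisons should suffice: first show that the clearing price with early liquidation, $\psA$, is at least the single-period clearing price $\pbar$ (the solution of \eqref{eq:p2-l-rLM} at $\bx=\bzero$); then show that the interbank payments received under $\bx$ are at least those received at $\bx=\bzero$. Combined with a first-period proceeds estimate, these reduce the claim to an elementary inequality. The structure parallels the proof of Lemma~\ref{lemma:yfpm-ub} (with $\gamma=0$), except that we must also control the fixed-point vector $\bpi^\brLM$ jointly with the price.

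\textbf{Step 1 (first-period proceeds).} Since $\bx\le\bxbar$, we have $\bone^\top\bx\le\bone^\top\bxbar$. Hence $Q(\bone^\top\bx)\ge Q(\bone^\top\bxbar)=\pbar$, so $x_jQ(\bone^\top\bx)\ge x_j\pbar$. Intuitively, selling $x_j$ units early raises at least as much cash as selling them at the single-period price.

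\textbf{Step 2 (joint comparison of price and payments).} I would show $\psA(\bx)\ge\pbar$ and $\bpi^\brLM(\bx)\ge\bpi^\brLM(\bzero)$ together, by monotone iteration in the style of \cite{AminiEtAl2016c}. Define the map $T(p,\bpi)$ given by the right-hand sides of \eqref{eq:p2-l-rLM} for fixed $\bx$. This map is increasing in $(p,\bpi)$. A larger price lowers the required sale quantities, which raises $Q$; it also raises the term $(z_i-x_i)p$. Larger payments likewise reduce the shortfalls.

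Start the iteration at $(\pbar,\bblbar)$, the $\bx=\bzero$ fixed point, where $\bblbar$ denotes the vector $\bpi^\brLM(\bzero)$ (written $\blbar$ in the paper's macros). I would check that $T(\pbar,\bblbar)\ge(\pbar,\bblbar)$, which requires two inequalities.

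\emph{Payments.} We need $e_i+x_iQ(\bone^\top\bx)+(z_i-x_i)\pbar\ge e_i+z_i\pbar$. This follows from Step 1.

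\emph{Price.} We need each bank's required sale to be no larger than under $\bx=\bzero$. Bank $j$'s shortfall drops by $x_jQ(\bone^\top\bx)\ge x_j\pbar$, so its required quantity at price $\pbar$ drops by at least $x_j$. The cap $z_j-x_j$ also drops by exactly $x_j$. Hence the new quantity is at most $\xbar_j-x_j\le\xbar_j$, where nonnegativity holds because $x_j\le\xbar_j$. It follows that the price update is at least $\pbar$.

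Monotone iteration from a sub-solution then converges to a fixed point that dominates $(\pbar,\bblbar)$. By the uniqueness of the solution of \eqref{eq:p2-l-rLM}, this fixed point equals $(\psA,\bpi^\brLM)$. I expect this step to be the main obstacle. One has to verify that monotone iteration of this coupled price–payment map converges and that its limit coincides with the unique solution. This requires continuity and boundedness of $T$ on the lattice $[\pbar,P]\times[\bzero,\bL]$.

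\textbf{Step 3 (conclusion).} Using $\psA\ge\pbar$, $\brLM_j^\top\bpi^\brLM\ge\brLM_j^\top\bblbar$, and Step 1, we obtain
\begin{equation}
\yf_j^\brLM(\bx)\le\frac{\big(L_j-e_j-x_j\pbar-\brLM_j^\top\bblbar\big)^+}{\pbar}\wedge(z_j-x_j).
\end{equation}
The right-hand side is at most $\big(\tfrac{(L_j-e_j-\brLM_j^\top\bblbar)^+}{\pbar}-x_j\big)^+\wedge(z_j-x_j)$. This in turn equals $\big(\tfrac{(L_j-e_j-\brLM_j^\top\bblbar)^+}{\pbar}\wedge z_j\big)-x_j=\xbar_j-x_j$, where we used $x_j\le\xbar_j$ to drop the outer positive part. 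This gives the claim.
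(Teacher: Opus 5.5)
Your proof is correct and follows the paper's structure: you establish $\psA\ge\pbar$ and $\brpA\ge\blbar$ by a monotone-iteration comparison, then run exactly the elementary chain of your Step~3. The one difference is the direction of the comparison: you iterate the system for the given $\bx$ upward from the sub-solution $(\pbar,\blbar)$, which gives both bounds at once and without contradiction, whereas the paper assumes $\pbar>\psA$ for contradiction, iterates the $\bx=\bzero$ system downward from $(\psA,\brpA)$ to get the price bound, and then obtains $\blbar=\bwNf(\bz\pbar)\le\brpA$ from monotonicity of the Eisenberg--Noe clearing map.
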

%			{\bf \itshape We prove Lemma \ref{lemma:yf-ub-intb} using a similar argument in the proof of Lemma \ref{lemma:yfpm-ub}. The proof of Lemma \ref{lemma:bx-ub-intb} is more complicated than that of Lemma \ref{lemma-xj1-bound} due to the nuanced dependence of repayments on liquidation size. We utilize the property of interbank repayments that they are $\ell_1$-nonexpansive in}

Based on the above lemma, given a realization $\bx \leq \bxbar$ of the early liquidation size, we can identify each bank $i$'s maximum possible income from fire sales at $t=1$ and $t=2$ as follows:
\begin{equation}
    \cNf_i(\bx) \coloneqq x_i Q\big(\bone^\top\bx\big) + \big(\xbar_i - x_i\big) Q\Big(\bone^\top\big(\bxbar - \bx\big)\Big).
    \label{op-cash-inter}
\end{equation}
This provides an upper bound on the sum of the middle two terms on the right-hand side of \eqref{eq:lA}, reflecting the worst-case scenario in which banks ultimately liquidate the same volume of illiquid assets as they would without early liquidation. Then, analogous to the construction of the second equation in~\eqref{eq:p2-l-rLM}, we can characterize the vector of clearing payments associated with this worst-case scenario, denoted by $\bwNf(\bcNf(\bx))$, as the solution to the following fixed-point equation:
    \begin{equation}
        \bwNf = \bL \wedge \left(\bgao + \bcNf(\bx) + \brLM^\top \bwNf\right),
        \label{wNf-fixed-point}
    \end{equation}
    where $\bcNf(\cdot) \coloneqq \big(\cNf_1(\cdot),  \dots,  \cNf_m(\cdot)\big)^\top$ and $\bwNf(\bcNf(\bx))$ is unique for any $\bx\in[\bzero,\bxbar]$ by Theorem~2 of~\cite{EisenbergNoe2001}.

    Consequently, the counterpart of \eqref{prob:maximin-perm} for the case with interbank exposures can be formulated as follows:
    \begin{equation}
        \begin{aligned}
            \max_{x_{i}}  \min_{\bx_{-i} } ~~& \ltN_i(x_i, \bx_{-i})
            \\  \textrm{s.t.}~~~&  \bzero \leq \bx \leq \bxbar,
        \end{aligned}
        \label{maximin-network}
    \end{equation}
    where $\ltN_i(x_i, \bx_{-i}) \coloneqq e_i + \cNf_i(\bx) + \brLM_i^\top\bwNf\big(\bcNf(\bx)\big)$ consists of the initial external liquid assets $e_i$, the maximum possible liquidation income $\cNf_i(\bx)$, and the worst-case interbank assets $\brLM_i^\top\bwNf(\bcNf(\bx))$. \iffalse Obviously, if $\brLM=\bO$, then $\bxbar=\bxst$, and $\ltN_i\left(x_i, \bx_{-i} \right)$ coincides with the objective of \eqref{maximin-prob} for all $i\in[m]$ and $\bx\in[\bzero,\bxst]$.

    Finally, to obtain bank $i$'s robust optimal strategy for the case with interbank liabilities, we formulate a variant of \eqref{maximin-prob} as follows:
    \begin{equation}
        \begin{aligned}
            \max_{x_{i}}  \min_{\bx_{-i} } ~~& \ltN_i\left(x_i, \bx_{-i} \right)
            \\  \textrm{s.t.}~~~&  \bzero \leq \bx \leq \bxbar.
        \end{aligned}
        \label{maximin-network}
    \end{equation}\fi
    In line with Proposition \ref{prop:eff-perm}, we demonstrate the validity of the above formulation in the following proposition, which addresses (Q2) in Section~\ref{sec:preliminaries} for the case of interbank exposures:
    \begin{proposition}[Validity of \eqref{maximin-network}]
        For any fixed $x_i \in [0, \xst_i]$, the following statements hold:
        \begin{enumerate}[label=(\roman*)]

            \item If $\min_{\bx_{-i} \in [\bzero,  \bxbar_{-i} ]} \ltN_i(x_i, \bx_{-i}) \geq L_i$, then
            $
            \min_{\bx_{-i} \in [\bzero,  \bxbar_{-i} ] } \lrLM_i(x_i, \bx_{-i}) = L_i.
            $

            \item If $\min_{\bx_{-i} \in [\bzero,  \bxbar_{-i}] } \ltN_i(x_i, \bx_{-i}) < L_i$, then
            $
            \min_{ \bx_{-i} \in [\bzero,  \bxbar_{-i} ]} \lrLM_i\left(x_i, \bx_{-i}\right) \geq \min_{\bx_{-i} \in [\bzero,  \bxbar_{-i} ]} \ltN_i\left(x_i, \bx_{-i}\right).
            $

        \end{enumerate}
        \label{prop:eff-intb}
    \end{proposition}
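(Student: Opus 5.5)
The plan mirrors the argument for Proposition~\ref{prop:eff-perm}, with Lemma~\ref{lemma:yf-ub-intb} and monotonicity of the Eisenberg--Noe clearing map replacing the direct price comparison. Throughout I read the hypothesis on $x_i$ as $x_i\in[0,\xbar_i]$.

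\textbf{Step 1: a universal upper bound.} I will first show that $\lrLM_i(x_i,\bx_{-i})\le L_i$ for every $\bx\in[\bzero,\bxbar]$. From \eqref{eq:psij-rLM}, either $\yf_i^\brLM(\bx)=(L_i-e_i-x_iQ(\bone^\top\bx)-\brLM_i^\top\bpi^\brLM)^+/\psA$, or $\yf_i^\brLM(\bx)=z_i-x_i$ with that ratio at least $z_i-x_i$. Since $Q(\bone^\top\byf^\brLM(\bx))=\psA$ by \eqref{eq:p2-l-rLM}, we have $\lrLM_i=e_i+x_iQ(\bone^\top\bx)+\yf_i^\brLM\psA+\brLM_i^\top\bpi^\brLM$. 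In the first case this equals $\max\{L_i,\,e_i+x_iQ(\bone^\top\bx)+\brLM_i^\top\bpi^\brLM\}$. In the second case it is at most $L_i$.

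The excess scenario, where first-period proceeds plus interbank receipts already exceed $L_i$, has to be excluded separately. I expect to handle it by contrapositive: I will show that such overfunding forces $x_i>\xbar_i$. The reason is that $x_iQ(\bone^\top\bx)$ is at most what bank $i$ would raise selling $\xbar_i$ at clearing, and receipts are monotone, as in the analogous step of Proposition~\ref{prop:eff-perm}. So on $[\bzero,\bxbar]$ we get $\lrLM_i\le L_i$, with equality exactly when bank $i$ is solvent, i.e.\ $\pi_i^\brLM=L_i$.

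\textbf{Step 2: comparing clearing vectors.} This is the key step. Fix $\bx\in[\bzero,\bxbar]$. By Lemma~\ref{lemma:yf-ub-intb}, $\yf_j^\brLM(\bx)\le\xbar_j-x_j$ for all $j$. Because $Q$ is decreasing, $\psA=Q(\bone^\top\byf^\brLM)\ge Q(\bone^\top(\bxbar-\bx))$. I then want the true liquid-asset vector to dominate the surrogate componentwise, that is, each bank's resources $e_j+x_jQ(\bone^\top\bx)+(z_j-x_j)\psA$ in \eqref{eq:p2-l-rLM} against $e_j+\cNf_j(\bx)$.

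The subtle point is that the true payment uses the value of \emph{all} remaining holdings, $(z_j-x_j)\psA$, not $\yf_j^\brLM\psA$. Since $z_j-x_j\ge\xbar_j-x_j$ and $\psA\ge Q(\bone^\top(\bxbar-\bx))$, we get $(z_j-x_j)\psA\ge(\xbar_j-x_j)Q(\bone^\top(\bxbar-\bx))$. Hence the true resources are at least $e_j+\cNf_j(\bx)$.

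The clearing map $\bpi\mapsto\bL\wedge(\bv+\brLM^\top\bpi)$ is monotone in both $\bv$ and $\bpi$, and its fixed point is unique (Theorem~2 of \cite{EisenbergNoe2001}). By a standard iteration from $\bL$ downward, it follows that $\bpi^\brLM\ge\bwNf(\bcNf(\bx))$. I expect the main obstacle to be that $\bpi^\brLM$ is coupled with $\psA$. I will handle this by treating $\psA$ as fixed at its equilibrium value and applying the comparison to the second equation of \eqref{eq:p2-l-rLM} alone.

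\textbf{Step 3: conclude the two statements.} Since $\pi_i^\brLM=L_i\wedge(\text{true resources of }i+\brLM_i^\top\bpi^\brLM)$, Step 2 gives
\[
\pi_i^\brLM\ \ge\ L_i\wedge\big(e_i+\cNf_i(\bx)+\brLM_i^\top\bwNf(\bcNf(\bx))\big)=L_i\wedge\ltN_i(x_i,\bx_{-i}).
\]
If bank $i$ is solvent, Step 1 gives $\lrLM_i=L_i$. Otherwise it sells everything, and $\lrLM_i$ equals its true resources, which by Step 2 are at least $\ltN_i$. In both cases
\[
\lrLM_i(x_i,\bx_{-i})\ \ge\ L_i\wedge\ltN_i(x_i,\bx_{-i}).
\]

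For (i), the hypothesis $\min_{\bx_{-i}}\ltN_i\ge L_i$ makes the right-hand side equal to $L_i$ for every $\bx_{-i}$. Combined with Step 1, this gives $\lrLM_i\equiv L_i$, so its minimum is $L_i$.

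For (ii), I take the minimum over $\bx_{-i}$ of both sides and use $\min_{\bx_{-i}}(L_i\wedge\ltN_i)=L_i\wedge\min_{\bx_{-i}}\ltN_i=\min_{\bx_{-i}}\ltN_i$, which holds under the hypothesis of (ii).
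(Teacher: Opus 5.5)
Your Step 2 matches the paper's comparison $\bpi^\brLM\ge\bwNf(\bcNf(\bx))$ and is fine. Your pointwise bound $\lrLM_i\ge L_i\wedge\ltN_i$ is also correct and does not need Step 1: when bank $i$ does not default, your own case formula gives $\lrLM_i=\max\{L_i,\,e_i+x_iQ(\bone^\top\bx)+\brLM_i^\top\bpi^\brLM\}\ge L_i$. So (ii) and the half $\min_{\bx_{-i}}\lrLM_i\ge L_i$ of (i) are established.

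The gap is Step 1, which is false as stated: $\lrLM_i\le L_i$ does not hold on all of $[\bzero,\bxbar]$. Take $\brLM=\bO$ and a bank with $0<\xbar_i<z_i$, so that $\xbar_i\pbar=L_i-e_i$. At $x_i=\xbar_i$ and $\bx_{-i}=\bzero$, bank $i$ raises $\xbar_iQ(\xbar_i)>\xbar_i\pbar$ at $t=1$. Hence $\yf_i^\brLM(\bx)=0$ and $\lrLM_i=e_i+\xbar_iQ(\xbar_i)>L_i$. Your justification, that $x_iQ(\bone^\top\bx)$ is at most what bank $i$ raises by selling $\xbar_i$ at clearing, holds only when $\bx_{-i}=\bxbar_{-i}$. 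That is also the only point at which the analogous step of Proposition~\ref{prop:eff-perm} is carried out. Consequently the conclusion ``$\lrLM_i\equiv L_i$'' in (i) is unfounded. What (i) actually needs is only $\min_{\bx_{-i}}\lrLM_i\le L_i$, i.e., $\lrLM_i(x_i,\bxbar_{-i})\le L_i$.

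Even at that single point, monotonicity of receipts works against you. At $\bu$ with $u_i=x_i$ and $\bu_{-i}=\bxbar_{-i}$, the other banks sell at the higher price $Q(\bone^\top\bu)\ge\pbar$. So bank $i$'s interbank receipts may exceed $\brLM_i^\top\blbar$ while its own proceeds fall below $\xbar_i\pbar$, and you must show the gain cannot outweigh the loss. This quantitative trade-off is the missing idea, and it is how the paper closes the argument:

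If $\yf_i^\brLM(\bu)>0$, the bound follows from your case formula.

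If $\yf_i^\brLM(\bu)=0$, then all $\yf_j^\brLM(\bu)=0$ by Lemma~\ref{lemma:yf-ub-intb}, so $\bpi^\brLM=\bwNf(\bu Q(\bone^\top\bu))$. The ingredients are:
\begin{itemize}
\item $e_i+\xbar_i\pbar+\brLM_i^\top\blbar\le L_i$, which follows from $\xbar_i>0$;
\item $\blbar=\bwNf(\bxbar\pbar)$;
\item monotonicity and $\ell_1$-nonexpansiveness of $\bwNf(\cdot)$ (Lemma~5 of \cite{EisenbergNoe2001});
\item the entries of $\brLM$ being below one.
\end{itemize}
Together these bound the receipt gain by $(\bone^\top\bxbar_{-i})\big(Q(\bone^\top\bu)-\pbar\big)$. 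This gives $\lrLM_i(x_i,\bxbar_{-i})-L_i\le(\bone^\top\bu)Q(\bone^\top\bu)-(\bone^\top\bxbar)Q(\bone^\top\bxbar)\le 0$ by Assumption~\ref{assump1}(iii).

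With Step 1 replaced by this single-point bound, the rest of your argument goes through.
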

    % The same as Proposition \ref{prop:eff-perm}, Proposition \ref{prop:eff-intb} also states that the optimal solution of \eqref{maximin-network} achieves the optimality of \eqref{maximin-intb-org} if $\min_{\bx_{-i} \in [\bzero,  \bxst_{-i} ]} \ltN_i(x_i, \bx_{-i}) \geq L_i $, and otherwise it still maximizes a lower bound of bank $i$'s worst-case terminal liquid asset value. The proof of Proposition \ref{prop:eff-intb} proceeds along the same lines as that of Proposition \ref{prop:eff-perm}, while the involvement of the interbank repayments slightly complicates the analysis. The details are presented in Section \ref{sec:proof_intb}.

    A salient feature of the problem involving interbank exposures---distinguishing it from the previous case---is that each bank must not only avoid extreme price impacts on its own assets over the two periods but also seek to mitigate price impacts on other banks to increase the payments it receives from them. Furthermore, these payments are implicitly characterized by the fixed-point equation~\eqref{wNf-fixed-point}, making the objective function in \eqref{maximin-network} neither convex nor concave in $x_i$ or $\bx_{-i}$. This precludes the use of the minimax theorem and the approach of aggregating other banks' early liquidation sizes, significantly complicating the technical analysis required to identify the optimal solution to~\eqref{maximin-network}. Nevertheless, the following theorem successfully characterizes the robust optimal strategy for this problem, which turns out to take a surprisingly simple form, structurally analogous to that of the illustrative example in Section~\ref{sec:preliminaries}.  %Different from \eqref{prob:maximin-perm}, to achieve the optimality of~\eqref{maximin-network}, bank $i$ should not only reduce its exposure to extreme price impacts in both periods, but also should seek to temper the price impacts experienced by other banks to increase their interbank repayments. Interestingly, however, in a consistent form with the robust optimal strategy $\xtld_i$ of \eqref{maximin-prob}, the optimal solution to the outer problem of \eqref{maximin-network} remains to liquidate half of the associated single-period size. This result is formally described in the following theorem.
    \begin{theorem}[Robust Optimal Strategy with Interbank Exposures]
        For each $i \in [m]$, the robust optimal strategy $x_i^\brLM$ that solves \eqref{maximin-network} is uniquely determined as $x_i^\brLM = \xbar_i/2$.
        \label{thm-maximin-network}
    \end{theorem}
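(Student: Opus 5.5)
The plan has three ingredients: a point symmetry of the relaxed objective $\ltN_i$ inherited from $\bcNf$, the monotonicity and nonexpansiveness of the Eisenberg--Noe clearing map, and a comparison of a candidate saddle point with deviations of bank $i$. The hard part is the lower bound of Step~3; Steps~1, 2 and~4 are routine given the paper's earlier results.

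\textbf{Step 1 (symmetry).} Put $T\coloneqq\bone^\top\bxbar$ and $S\coloneqq\bone^\top\bx$. The definition \eqref{op-cash-inter} gives, for every $j\in[m]$,
\[
\cNf_j(\bxbar-\bx)=(\xbar_j-x_j)Q(T-S)+x_jQ(S)=\cNf_j(\bx).
\]
So the whole vector $\bcNf(\cdot)$ is invariant under the reflection $\bx\mapsto\bxbar-\bx$. Since $\bwNf(\cdot)$ is the unique solution of \eqref{wNf-fixed-point} (Theorem~2 of \cite{EisenbergNoe2001}), it follows that
\[
\ltN_i(x_i,\bx_{-i})=\ltN_i(\xbar_i-x_i,\bxbar_{-i}-\bx_{-i}).
\]
Because the reflection maps $[\bzero,\bxbar_{-i}]$ onto itself, the inner value $V(x_i)\coloneqq\min_{\bx_{-i}}\ltN_i(x_i,\bx_{-i})$ satisfies $V(x_i)=V(\xbar_i-x_i)$. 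Hence it suffices to show that $V(\xbar_i/2)>V(x_i)$ for every $x_i\in[0,\xbar_i/2)$. That strict inequality gives both optimality and uniqueness.

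\textbf{Step 2 (monotone structure).} I will record the standard facts about the clearing map. First, $\bwNf(\bcNf)$ is nondecreasing in $\bcNf$. Second, it is $\ell_1$-nonexpansive along monotone perturbations: if $\bcNf\le\bcNf'$, then $\bzero\le\bwNf(\bcNf')-\bwNf(\bcNf)$ and
\[
\bone^\top\bigl(\bwNf(\bcNf')-\bwNf(\bcNf)\bigr)\le\bone^\top(\bcNf'-\bcNf),
\]
with $\brLM_i^\top$ of the increment bounded by the same quantity because $\sum_k\rLM_{jk}<1$. Consequently $\ltN_i=e_i+\cNf_i+\brLM_i^\top\bwNf$ is nondecreasing in every coordinate of $\bcNf$. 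Moreover, an increase of $\delta$ in $\cNf_i$ alone raises $\ltN_i$ by at least $\delta$, and an increase in the other coordinates raises it by at most their total. This is the tool that replaces the minimax theorem, which is unavailable here.

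\textbf{Step 3 (candidate saddle point and the lower bound).} The natural worst case for $x_i=\xbar_i/2$ is $\bx_{-i}=\bxbar_{-i}/2$, mirroring the illustrative example. At that point $\cNf_j=\xbar_jQ(T/2)$ for all $j$. For the lower bound $V(\xbar_i/2)\ge\ltN_i(\xbar_i/2,\bxbar_{-i}/2)$, take any $\bx_{-i}$. Bank $i$'s own term is
\[
\cNf_i=\tfrac{\xbar_i}{2}\bigl(Q(S)+Q(T-S)\bigr)\ge\xbar_iQ(T/2)
\]
by convexity of $Q$ (Assumption~\ref{assump1}(iv)). The other coordinates may fall below $\xbar_jQ(T/2)$, but only when $S$ is away from $T/2$. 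My plan is to show that the aggregate shortfall $\sum_{j\neq i}(\xbar_jQ(T/2)-\cNf_j)^+$ is dominated by the gain $\cNf_i-\xbar_iQ(T/2)$, using the convexity of $Q$ and Assumption~\ref{assump1}(iii),(v). Combined with the one-sided $\ell_1$ bound from Step~2, this yields the inequality. This aggregate comparison is the main obstacle. If it fails in general, I will first try to reduce it, via the symmetry of Step~1, to the case $S\le T/2$ and use a first-order (Taylor) comparison in $S$ around $T/2$.

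\textbf{Step 4 (strict upper bound away from $\xbar_i/2$).} For $x_i<\xbar_i/2$, I will exhibit an adversarial $\bx_{-i}$ for which $\ltN_i(x_i,\bx_{-i})<\ltN_i(\xbar_i/2,\bxbar_{-i}/2)$. The first choice to try is one that keeps $S=T/2$, namely $\bx_{-i}$ with $\bone^\top\bx_{-i}=T/2-x_i$, split proportionally to $\bxbar_{-i}$. Then every $\cNf_j$ with $j\ne i$ is unchanged, since $\cNf_j=\xbar_jQ(T/2)$ whenever $S=T/2$, and likewise $\cNf_i=\xbar_iQ(T/2)$. The gap must therefore come from a perturbation of $S$. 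Perturbing $S$ slightly in the direction that hurts bank $i$ lowers $\cNf_i$ at first order, with coefficient $(x_i-(\xbar_i-x_i))Q'(T/2)\neq0$, while the other banks' terms change only at second order in aggregate. Monotonicity from Step~2 then gives a strict decrease. This establishes $V(x_i)<V(\xbar_i/2)$ and hence uniqueness of $x_i^\brLM=\xbar_i/2$.
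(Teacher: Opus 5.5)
\textbf{There is a genuine gap in Step~3, and it cannot be repaired by a sharper estimate.} At $x_i=\xbar_i/2$, the point $\bx_{-i}=\bxbar_{-i}/2$ is in general \emph{not} the adversary's best response. So the lower bound $V(\xbar_i/2)\ge\ltN_i(\xbar_i/2,\bxbar_{-i}/2)$ is false.

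The aggregate comparison you plan already fails for every $S\neq T/2$. Since $\sum_j c_j=SQ(S)+(T-S)Q(T-S)$, we have
\[
\sum_{j\neq i}\bigl(\xbar_jQ(T/2)-c_j\bigr)=\bigl(c_i-\xbar_iQ(T/2)\bigr)+\bigl(TQ(T/2)-SQ(S)-(T-S)Q(T-S)\bigr).
\]
The last bracket is strictly positive by strict concavity of $u\mapsto uQ(u)$. So the other banks always lose strictly more in aggregate than bank $i$ gains. When a defaulting debtor of $i$ absorbs that loss, $\ltN_i$ falls.

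Here is a concrete counterexample:
\begin{itemize}
\item take $m=2$, $Q(u)=P-au$, and $A_{ij}=0<A_{ji}$;
\item take $L_j>e_j+z_jP$, so that $j$ is in default under \eqref{wNf-fixed-point} for every $\bx$;
\item take $L_i$ large enough that $\xbar_i>0$.
\end{itemize}
At $x_i=\xbar_i/2$ one gets $c_i\equiv\xbar_iQ(T/2)$ and $c_j=\xbar_jQ(T/2)-2a(x_j-\xbar_j/2)^2$. Hence
\[
\ltN_i(\xbar_i/2,x_j)=\ltN_i(\xbar_i/2,\xbar_j/2)-2aA_{ji}(x_j-\xbar_j/2)^2,
\]
which is minimized at $x_j\in\{0,\xbar_j\}$, not at $\xbar_j/2$. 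The theorem still holds in this example; only the saddle-point picture fails. Thus $V(\xbar_i/2)<\ltN_i(\xbar_i/2,\bxbar_{-i}/2)$. Your Step~4 only shows $V(x_i)<\ltN_i(\xbar_i/2,\bxbar_{-i}/2)$, so it does not give $V(x_i)<V(\xbar_i/2)$.

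Separately, Step~4's claim that the other banks' cash changes only at second order is wrong. Because $SQ(S)+(T-S)Q(T-S)$ is stationary at $S=T/2$, their aggregate first-order change is exactly minus that of $c_i$. A strict decrease can still be extracted from monotonicity, $\ell_1$-nonexpansiveness and $A_{ki}<1$, but that does not rescue Step~3.

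\textbf{What is missing is an argument that never locates the inner minimizer at $\xbar_i/2$.} The paper does this as follows.
\begin{itemize}
\item It re-parametrizes $\bx_{-i}$ by the total early sale $\zeta$ and the cash vector $\bc\in\cB(y,\zeta)$.
\item It takes an \emph{arbitrary} minimizer $(\tilde\zeta,\tilde{\bc})$ for $y=\xbar_i/2$, with $\tilde\zeta\le\bar\zeta/2$ by the symmetry you correctly identified.
\item For each $x<\xbar_i/2$ it transports this minimizer to a feasible response $(\zeta^*,\bd)$ for $x$, with $\zeta^*=\tilde\zeta\wedge\bigl((\bar\zeta-\xbar_i)/2+x\bigr)$. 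Concavity of $uQ(u)$ then makes total cash no larger.
\item Bank $i$'s cash drops to $d_i$, with $\tilde c_i-d_i>0$ and $\tilde c_i-d_i\ge d_\ell-\tilde c_\ell$ for the largest other bank $\ell$ (Lemma~\ref{lemma-B1}).
\item The remaining discrepancy is redistributed among the other banks within the bounds $\mu_j,\nu_j$.
\end{itemize}
Monotonicity and $\ell_1$-nonexpansiveness of $\bwNf$, together with $A_{ki}<1$, then give $d_i+\brLM_i^\top\bwNf(\bd)<\tilde c_i+\brLM_i^\top\bwNf(\tilde{\bc})$. That is, $V(x)<V(\xbar_i/2)$ whatever the worst case at $\xbar_i/2$ looks like.

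Your Steps~1--2 match the paper's ingredients. It is this comparison of worst cases, not a saddle point, that has to replace your Steps~3--4.
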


    %Due to the intertwined mapping structure of $\bwNf(\cdot)$, the objective $\ltN_i(x_i, \bx_{-i})$ becomes highly non-convex, and can no longer be expressed as a function of the aggregate liquidation size of other banks because the repayments depend on the full profile of other banks' strategies $\bx_{-i}$. Therefore, unlike in \eqref{maximin-prob} and \eqref{prob:maximin-perm}, the minimax theorem is not applicable. Despite this complication, $\ltN_i(x_i, \bx_{-i})$ remains symmetric around $(\xbar_i/2, \bxbar_{-i}/2)$. We therefore reformulate the inner problem of \eqref{maximin-network} in terms of the liquidation incomes $\bcNf(\bx)$, or simply $\bc$, as decision variables, which preserves the symmetry. Using the $\ell_1$-nonexpansiveness of $\bwNf(\cdot)$ \citep[see][Lemma 5]{EisenbergNoe2001}, we show that under the worst-case scenarios, the reduction in repayments received from other banks' is bounded by the aggregate losses in their liquidation incomes $\bc_{-i}$, which is further dominated by the losses in bank $i$'s own liquidation income $c_i$. By exploiting the symmetry, we demonstrate that the one-half liquidation strategy substantially reduces bank $i$'s worst-case losses from liquidation, and thus attains the optimality for the reformulated problem, even though it may incur minor losses in received repayments under the worst-case scenarios.
    Despite the aforementioned technical complications, the simple characterization of the robust optimal strategy is attributed to our finding that the sensitivity of $\bwNf(\bcNf(\bx))$ to $\bx$ is dominated by that of $\cNf_i(\bx)$ under bank $i$'s worst-case scenario. Consequently, the robust optimal strategy is primarily driven by a solution to the problem $\max_{x_i\in[0,\xbar_i]}\min_{\bx_{-i} \in [\bzero,  \bxbar_{-i}] } c_i(\bx)$. By applying symmetry and the first-order conditions, its solution is found to be $(\xbar_i/2, \bxbar_{-i}/2)$, leading to the above result.

    Practically speaking, if bank $i$ were to adopt an overly aggressive or conservative strategy for its early liquidation, the associated price impact under the worst-case scenario would generate negative externalities across the entire financial system, leading to significant losses in counterparties' liquidation income. Although a minority of these institutions may derive limited gains through asset fire sales, the resulting repayments from this group would be insufficient to counterbalance the aggregate shortfall. Hence, as bank $i$ would fail to hedge its own liquidations costs through incoming interbank payments, the robust optimal strategy converges toward a moderate and balanced early liquidation solution: selling half of the assets that would otherwise be liquidated under the no-early-liquidation benchmark. This finally addresses (Q1) in Section~\ref{sec:preliminaries} for the case involving interbank exposures.

    \section{Conclusion}
    \label{sec:conclusions}

    In this paper, we provide new insights into an individual bank's early liquidation strategy in the context of systemic risk. Using a stylized two-period model, we formulate a maximin problem that maximizes the worst-case value of the bank’s liquid assets at clearing, accounting for the uncertain liquidation decisions of counterparties. To tackle the analytical intractability of the said problem, we propose a relaxed version that is both practically valid and analytically tractable. Crucially, we show that this relaxation yields a unique, robust optimal strategy that admits a (semi-)closed-form expression. We further highlight the distinct effects of two key factors widely investigated in the systemic risk literature: permanent price impact and interbank exposures. Specifically, permanent price impact leads to structural changes in the robust optimal strategy, which typically exhibits a non-monotonic (V-shaped) dependence on the magnitude of the impact. In contrast, interbank exposures play a less significant role and do not drive such structural shifts.

    % We also apply the framework to the setting where banks are connected via interbank liabilities. We demonstrate that the relaxed formulation remains valid in this setting. Despite the intricate coupling between liquidation decisions and interbank repayments, the robust optimal strategy in the presence of interbank liabilities can be explicitly characterized, preserving the structural consistency with the no-interlinkage case.

    Several directions remain open for future research. Firstly, while our robust optimal strategy is explicitly characterized and offers insights into early liquidation based on the relaxed maximin problem, it remains inconclusive whether a tighter yet analytically tractable relaxation exists. Addressing this question may require more precise and explicit approximations of the dependence among liquidation sizes across two periods. Secondly, although our framework accommodates permanent price impact and interbank liabilities individually, integrating both features into a unified framework poses substantial analytical challenges such as nonconvexity and asymmetry. These issues prevent a direct extension of our current analysis and hence represent a promising path for subsequent studies. Lastly, we develop the two-period liquidation strategy without delving into the quantification of other sources of financing risks. In practice, institutions generally evaluate various risk factors comprehensively. For example, apart from liquidation, illiquid assets can also be used in collateralized borrowing. Therefore, further investigation of a versatile financing strategy would be an interesting research direction.

    \appendix

    \section{Proofs of Theoretical Results in Section \ref{sec:perm_impact}}    \label{sec:proof_perm}
    We first introduce several inequalities that will be useful in the proofs of our theoretical results. Specifically, for all $x > 0,  y \geq 0$ with $x + y < \ztot$, let $f(x, y) = xQ(x + y)$ where $Q$ is the inverse demand function. Then, under Assumption~\ref{assump1}, it is straightforward to check that
    \begin{subequations}
        \begin{align}
            & f_x(x, y) \coloneqq \frac{\partial f(x, y)}{\partial x} = Q(x + y) + xQ'(x + y) > 0,
            \label{useful-ineq-x}
            \\ & f_y(x, y) \coloneqq \frac{\partial f(x, y)}{\partial y} = xQ'(x + y) < 0,
            \label{useful-ineq-y}
            \\ & f_{xx}(x, y) \coloneqq \frac{\partial^2 f(x, y)}{\partial x^2} = 2Q'(x + y) + xQ''(x + y) < 0,
            \label{useful-ineq-xx}
            % \\ & f_{yy}(x, y) \coloneqq \frac{\partial^2 f(x, y)}{\partial y^2} = xQ''(x + y) \geq 0,
            % \label{useful-ineq-yy}
            \\ & f_{xy}(x, y) \coloneqq \frac{\partial^2 f(x, y)}{\partial x \partial y} = Q'(x + y) + xQ''(x + y) < 0.
            \label{useful-ineq-xy}
        \end{align}
    \end{subequations}
\iffalse
    \proof{Proof of Lemma \ref{lemma-xj1-bound}.} We prove the statement by contrapositive. Let $\Tset = \{j \in [m]:  \xst_j < x_{j} \}$. Assume that $\Tset \neq \emptyset$. Then, for all $j \in \Tset$, $\xst_j < z_j$ since $x_{j} \leq z_j$. Hence, by \eqref{single_x}, we obtain
        $$
            \xst_j Q\left(\sum_{k = 1}^{m} \xst_k \right) = \xst_j \pst = L_j - e_j,
        $$
        which implies the following relationship:
        \begin{align}
            \sum_{j \in \Tset} \left(L_j - e_j \right) & = \sum_{j \in \Tset} \xst_j Q\left(\sum_{k = 1}^{m} \xst_k \right)		\nonumber
            \\ & < \sum_{j \in \Tset} x_{j} Q\left(\sum_{j \in \Tset} x_{j} + \sum_{k \in \Tset^\mathrm{C}} \xst_k \right)
            \label{lemma-xj1-bound-ineq1}
            \\ & \leq \sum_{j \in \Tset} x_{j} Q\left(\sum_{k = 1}^{m} x_{k} \right)
            \label{lemma-xj1-bound-ineq2}
        \end{align}
        where \eqref{lemma-xj1-bound-ineq1} is obtained by \eqref{useful-ineq-x}, and \eqref{lemma-xj1-bound-ineq2} holds by \eqref{useful-ineq-y}. Therefore, there exists some $i \in \Tset$ such that
        $$
            x_{i} Q\left(\sum_{k = 1}^{m}x_{k} \right) > L_i - e_i,
        $$
        and the result follows.
    \endproof
\fi
		\proof{Proof of Lemma \ref{lemma:yfpm-ub}.} Let $\pst \coloneqq Q(\bone^\top \bxst)$. By \eqref{eq:p2-pm} and \eqref{eq:xj2-pm}, it is easy to check that
        \begin{equation}
            \xst_j \pst = \yfpm_j(\bzero) Q\left(\bone^\top \yfpm_j(\bzero)\right) = (L_j - e_j) \wedge (z_j \pst) \leq L_j - e_j.
            \label{ineq:xst-pst-ub}
        \end{equation}
        Assume by contradiction that $\pst > \ppm_2$, where $\ppm_2$ is defined as in  \eqref{eq:p2-pm}. Then, we first observe that $p_1\coloneqq Q(\bone^\top \bx)\geq Q(\bone^\top \bxst)=\pst>\ppm_2$ since $Q(\cdot)$ is strictly decreasing by Assumption \ref{assump1}(ii). This implies that for any $\gamma\in[0,1]$,
		\begin{equation}
			\left(L_j - e_j - x_j p_1 \right)^+ + \gamma x_j \ppm_2 = \left[L_j - e_j - x_j\left(p_1 - \gamma \ppm_2 \right) \right] \vee \left(\gamma x_j \ppm_2 \right) \leq (L_j - e_j) \vee (x_j \pst) = L_j - e_j,
			\label{ineq:ppm-pst-sf}
		\end{equation}
        where the last equality holds by \eqref{ineq:xst-pst-ub} and
    $u \vee v \coloneqq \max\{u, v\}$ for any $u,v\in\bbR$.

		Let
        \begin{equation}
		\begin{aligned}
			\rho_0 = Q\left(\sum_{j = 1}^m \frac{L_j - e_j}{\ppm_2} \wedge z_j \right)~~\text{and}~~ \rho_{k+1} = Q\left(\sum_{j = 1}^m \frac{L_j - e_j}{\rho_k} \wedge z_j \right)~\text{for}~ k = 0,1,2,\ldots
		\end{aligned}
        \end{equation}
		Then, by \eqref{ineq:ppm-pst-sf} and Assumption \ref{assump1}(ii), we have
		\begin{align*}
			\ppm_2 & = Q\left(\gamma \sum_{j = 1}^m x_j + \sum_{j = 1}^m \frac{\left(L_j - e_j - x_j p_1 \right)^+}{\ppm_2} \wedge (z_j - x_j) \right)
			\\ & = Q\left(\sum_{j = 1}^m \frac{\left(L_j - e_j - x_j p_1 \right)^+ + \gamma x_j \ppm_2}{\ppm_2} \wedge \big(z_j - (1 - \gamma)x_j\big) \right)
			\\ & \geq Q\left(\sum_{j = 1}^m \frac{L_j - e_j}{\ppm_2} \wedge z_j \right)
			\\ & = \rho_0.
		\end{align*}
		Furthermore, if $\rho_k \leq \ppm_2$ for some $k \geq 0$, then
        \begin{equation}
            \rho_{k + 1} = Q\left(\sum_{j = 1}^m \frac{L_j - e_j}{\rho_k} \wedge z_j \right) \leq Q\left(\sum_{j = 1}^m \frac{L_j - e_j}{\ppm_2} \wedge z_j \right) = \ppm_2.
        \end{equation}
        Hence, by induction, we obtain that $\rho_{k} \leq \ppm_2$ for all $k$, which suggests that $\pst \leq \ppm_2$ since $\rho_k \rightarrow \pst$ as $k$ increases. This contradicts the assumption and indicates that $\pst \leq \ppm_2$.

        Consequently, %since $\pst \leq \ppm_2$, $x_j \leq \xst_j$, and $\xst_j \pst \leq L_j - e_j$ implied by the definition \eqref{single_x},
        the following holds for any $\gamma \in [0, 1]$:
		\begin{align*}
			x_j + \yfpm_j(\bx) & = x_j + \frac{\left(L_j - e_j - x_{j}p_1 \right)^+}{\ppm_2} \wedge \left(z_j - x_{j}\right)
			\\ & \leq x_j + \frac{\left(L_j - e_j - x_{j}p_1 \right)^+}{\pst} \wedge \left(z_j - x_{j}\right)
			\\ & = \frac{\big(L_j - e_j - x_{j}(p_1 - \pst)\big) \vee \left(x_j \pst \right)}{\pst} \wedge z_j
			\\ & \leq \frac{\left(L_j - e_j \right) \vee \left(\xst_j \pst \right) }{\pst} \wedge z_j
			\\ &  = \frac{L_j - e_j}{\pst} \wedge z_j
			\\ & = \xst_j,
		\end{align*}
		which completes the proof.
		\endproof

        % {\color{red}
        \proof{Proof of Proposition \ref{prop:eff-perm}.} Fix $x_i \leq \xst_i$ and let $\bu^\gamma \coloneqq \argmin_{\bx_{-i} \in [\bzero, \bxst_{-i}]} \lpm_i(x_i, \bx_{-i})$. By the definition of $\bxst$ and \eqref{useful-ineq-x}, we observe that $e_i + x_i Q(x_i + \bone^\top \bxst_{-i}) \leq e_i + \xst_i Q(\bone^\top \bxst) \leq L_i$. Thus, we have
		\begin{equation}
        \begin{aligned}
			\lpm_i(x_i, \bu^\gamma) & \leq \lpm_i(x_i, \bxst_{-i})
			\\ & = e_i + x_i Q\big(x_i + \bone^\top \bxst_{-i}\big) + \left(L_i - e_i - x_i Q\big(x_i + \bone^\top \bxst_{-i} \big)\right) \wedge \left((z_i - x_i)\ppm_2 \right)
			\\ & \leq e_i + x_i Q\big(x_i + \bone^\top \bxst_{-i}\big) + L_i - e_i - x_i Q\big(x_i + \bone^\top \bxst_{-i} \big)
			\\ & = L_i,
		\end{aligned}\label{eq:lLineq}
		\end{equation}
		where the first equality holds by \eqref{eq:xj2-pm} to \eqref{def:l-perm}.
		
%		By the definition of $\bxst$, it is easy to check that $$\min_{ \bx_{-i} \in [\bzero,  \bxst_{-i} ]}\lfpm_i(x_i, \bx_{-i}) \leq \lfpm_i(x_i, \bxst_{-i}) \leq L_i.$$
		% For the first statement, by contrapositive, it suffices to show that if $\lpm_i(x_i, \bu^\gamma) < L_i$, then $\min_{ \bx_{-i} \in [\bzero,  \bxst_{-i} ]}\lfpm_i(x_i, \bx_{-i}) < L_i$.
        We now assume $\lpm_i(x_i, \bu^\gamma) < L_i$ and define $\bv^\gamma \in \bbR^m$ with $v_i^\gamma = x_i$ and $\bv_{-i}^\gamma = \bu^\gamma$. Then, by \eqref{def:l-perm}, it follows that
		\begin{equation}
			\yfpm_{i}(\bv^\gamma) = \frac{\lpm_i(x_i, \bu^\gamma) - e_i - x_i Q\left(\bone^\top \bv^\gamma\right)}{Q\big(\gamma \bone^\top \bv^\gamma + \sum_{j = 1}^m \yfpm_j(\bv^\gamma) \big)} < \frac{\big(L_i - e_i - x_i Q(\bone^\top \bv^\gamma)\big)^+}{Q\big(\gamma \bone^\top \bv^\gamma + \sum_{j = 1}^m \yfpm_j(\bv^\gamma) \big)}.
		\end{equation}
		Hence, $\yfpm_i(\bv^\gamma) = z_i - x_i$ by \eqref{eq:xj2-pm}. Since $\xst_i \leq z_i$ by the definition of $\xst_i$ and $\yfpm_j(\bv^\gamma) \leq \xst_j - x_j$ for all $j$ by Lemma \ref{lemma:yfpm-ub}, we obtain
        \begin{equation}\label{eq:ltildel}
		\begin{aligned}
			\min_{ \bx_{-i} \in [\bzero,  \bxst_{-i} ]}\lfpm_i(x_i, \bx_{-i})  & \leq \lfpm_i\left(x_{i},  \bu^\gamma \right)
			\\ & = e_i + x_i Q \big(\bone^\top \bv^\gamma \big) + \left(\xst_i - x_i \right) Q\Big(\gamma \bone^\top \bv^\gamma + \bone^\top \big(\bxst - \bv^\gamma \big) \Big)
			\\ & \leq e_i + x_i Q \big(\bone^\top \bv^\gamma \big) + (z_i - x_i) Q\Bigg(\gamma \bone^\top \bv^\gamma + \sum_{j = 1}^{m} \yfpm_{j}(\bv^\gamma) \Bigg)
			\\ & = e_i + x_i Q \big(\bone^\top \bv^\gamma \big) + \yfpm_i(\bv^\gamma) Q\Bigg(\gamma \bone^\top \bv^\gamma + \sum_{j = 1}^{m} \yfpm_{j}(\bv^\gamma) \Bigg)
			\\ & = \lpm_i(x_i, \bu^\gamma).
		\end{aligned}
        \end{equation}
		Consequently, $\min_{ \bx_{-i} \in [\bzero,  \bxst_{-i} ]}\lfpm_i(x_i, \bx_{-i})<L_i$, which demonstrates the first statement by contrapositive.
		
		The second statement is straightforward if $\lpm_i(x_i, \bu^\gamma) = L_i $. Otherwise, it holds by \eqref{eq:ltildel}.
		\endproof
\proof{Proof of Theorem~\ref{thm:opt-strat-perm}.} Fix $i\in[m]$. For any $x \in [0, \xst_i]$ and $s \in [0, \sst_i]$, we define $$g_i^\gamma(x, s) \coloneqq  e_i + x Q(x + s) + (\xst_i - x) Q(\xst_i + \sst_i - (1 - \gamma)(x + s)).$$ Then, \eqref{prob:maximin-perm} can be rewritten as
	\begin{equation}
		\max_{x \in [0, \, \xst_i]} \, \min_{s \in \left[0, \, \sst_i \right]} \; g_i^\gamma(x, \, s).
		\label{equiv-maxmin}
	\end{equation}
    Using this reformulation, we prove the theorem in three steps.
		% \begin{equation}
		% 	\frac{(1 - \gamma)\xst_i}{2 - \gamma} \leq \xfe_i^\gamma \leq \frac{\xst_i - (1 - \gamma)\sst_i}{2 - \gamma}.
		% 	\label{ineq:xpm-lb-ub-biggm}
		% \end{equation}
		% Otherwise,
		% \begin{equation}
		% 	\xfe_i^\gamma \leq \frac{(1 - \gamma)\xst_i}{2 - \gamma}.
		% 	\label{ineq:xpm-lb-ub-smlgm}
		% \end{equation}
		% Therefore, for any $\gamma \in [0, 1]$,
		% \begin{equation}
		% 	\frac{(1 - \gamma)\xst_i}{2 - \gamma} \vee \xfe_i^\gamma + \frac{\sst_i + \gamma \xst_i}{2 - \gamma} \wedge \sst_i \leq \frac{\xst_i + \sst_i}{2 - \gamma}.
		% \end{equation}
		% \label{lemma:xpm-lb-ub}
	% \end{lemma}
    %
    % \proof{Proof of Lemma \ref{lemma:xpm-lb-ub}}
	
    \textbf{Step 1.} Let $\xfe_i^\gamma$ be the value satisfying \eqref{eq:xpm-fst-ord} with respect to $x$. We establish its existence and uniqueness given $\gamma \geq \sst_i/(\xst_i + \sst_i)$. By Assumption \ref{assump1}(iv) and \eqref{useful-ineq-xx}, we observe that $g_i^\gamma(\cdot,\sst_i)$ is strictly concave on $[0, \xst_i]$. Also, it is easy to see that \eqref{eq:xpm-fst-ord} holds if and only if
	\begin{align}\label{eq:gxprime}
		\frac{\partial g_i^\gamma}{\partial x}\left(\xfe_i^\gamma, \sst_i \right) =0. %& = Q\left(\xfe_i^\gamma + \sst_i\right) + \xfe_i^\gamma Q'\left(\xfe_i^\gamma + \sst_i \right) - Q\left(\xst_i + \gamma \sst_i - (1 - \gamma) \xfe_i^\gamma \right)
		% \\ & ~~~~ - \left(1 - \gamma \right) \left(\xst_i - \xfe_i^\gamma \right) Q'\left(\xst_i + \gamma \sst_i - (1 - \gamma) \xfe_i^\gamma \right)
		% \\ & = 0.
	\end{align}
	% Note that for any $x \in [0, \xst_i]$,
	% \begin{align*}
	% 	\frac{\partial^2 g_i}{\partial x^2}\left(x, \sst_i \right) & = 2Q'(x + \sst_i) + xQ''(x + \sst_i) + 2(1 - \gamma)Q'\left(\xst_i + \gamma \sst_i - (1 - \gamma) x \right)
	% 	\\ & ~~~~ + (1 - \gamma)^2 (\xst_i - x) Q''\left(\xst_i + \gamma \sst_i - (1 - \gamma) x \right)
	% 	\\ & \leq 2Q'(x + \sst_i) + xQ''(x + \sst_i)
	% 	\\ & ~~~~ + (1 - \gamma) \left[2Q'\left(\xst_i + \gamma \sst_i - (1 - \gamma) x \right) + (\xst_i - x) Q''\left(\xst_i + \gamma \sst_i - (1 - \gamma) x \right) \right]
	% 	\\ & < 0
	% \end{align*}
	% by Assumption \ref{assump1}(iv) and \eqref{useful-ineq-xx}. Therefore, it suffices to show that if $\gamma > \sst_i/(\xst_i + \sst_i)$, we have
	% \begin{equation}
	% 	\frac{\partial g_i}{\partial x}\left(\frac{(1 - \gamma)\xst_i}{2 - \gamma}, \sst_i \right) \geq 0, ~~~~ ~~~~ \frac{\partial g_i}{\partial x}\left(\frac{\xst_i - (1 - \gamma)\sst_i}{2 - \gamma}, \sst_i \right) \leq 0,
	% 	\label{ineq:gi-x-fd-2p}
	% \end{equation}
	% and if $\gamma \leq \sst_i/(\xst_i + \sst_i)$, we have
	% \begin{equation}
	% 	\frac{\partial g_i}{\partial x}\left(0, \sst_i \right) \geq 0, ~~~~ ~~~~ \frac{\partial g_i}{\partial x}\left(\frac{(1 - \gamma)\xst_i}{2 - \gamma}, \sst_i \right) \leq 0.
	% \end{equation}
	Furthermore, a straightforward calculation yields that
    \begin{equation}
	\begin{aligned}
		\frac{\partial g_i^\gamma}{\partial x}\left(\frac{(1 - \gamma)\xst_i}{2 - \gamma}, \sst_i \right)
         & = Q\left(\frac{(1 - \gamma)\xst_i}{2 - \gamma} + \sst_i \right) + \frac{(1 - \gamma)\xst_i}{2 - \gamma} Q'\left(\frac{(1 - \gamma)\xst_i}{2 - \gamma} + \sst_i \right)
		\\ & ~~~~ - Q\left(\frac{(1 - \gamma)\xst_i}{2 - \gamma} + \gamma (\xst_i + \sst_i) \right) - \frac{(1 - \gamma)\xst_i}{2 - \gamma}Q'\left(\frac{(1 - \gamma)\xst_i}{2 - \gamma} + \gamma (\xst_i + \sst_i) \right),
	\end{aligned}
    \label{eq:gi-dx-xgm}
    \end{equation}
    and thus, by \eqref{useful-ineq-xy}, this value is non-negative if and only if $\gamma \geq \sst_i/(\xst_i + \sst_i)$. Similarly, if $\gamma > \sst_i/(\xst_i + \sst_i) $, by Assumption \ref{assump1}~(iv), we obtain
		\begin{equation}
			\begin{aligned}
				& \frac{\partial g_i^\gamma}{\partial x}\left(\frac{\xst_i}{2}-\frac{(1-\gamma)\sst_i}{2(2-\gamma)}, \sst_i\right)
				\\ & = Q\left(\frac{\xst_i}{2} + \frac{(3 - \gamma)\sst_i}{2(2-\gamma)} \right) + \left(\frac{\xst_i}{2} - \frac{(1 - \gamma)\sst_i}{2(2-\gamma)} \right)  Q'\left(\frac{\xst_i}{2} + \frac{(3 - \gamma)\sst_i}{2(2-\gamma)} \right) - Q\left(\frac{(1 + \gamma)\xst_i}{2} + \frac{(1+2\gamma-\gamma^2)\sst_i}{2(2 - \gamma)} \right)
				\\ & ~~~~ - \frac{(1 - \gamma)(2-\gamma)\xst_i + (1 - \gamma)^2\sst_i}{2(2 - \gamma)} Q'\left(\frac{(1 + \gamma)\xst_i}{2} + \frac{(1+2\gamma-\gamma^2)\sst_i}{2(2 - \gamma)}\right)
				\\ & \leq \left(\frac{\xst_i}{2} - \frac{(1 - \gamma)\sst_i}{2(2-\gamma)} \right)  Q'\left(\frac{\xst_i}{2} + \frac{(3 - \gamma)\sst_i}{2(2-\gamma)} \right) - \frac{\gamma\xst_i - (1-\gamma)\sst_i}{2}  Q'\left(\frac{\xst_i}{2} + \frac{(3 - \gamma)\sst_i}{2(2-\gamma)} \right)
				\\ & ~~~~ - \frac{(1 - \gamma)(2-\gamma)\xst_i + (1 - \gamma)^2\sst_i}{2(2 - \gamma)} Q'\left(\frac{(1 + \gamma)\xst_i}{2} + \frac{(1+2\gamma-\gamma^2)\sst_i}{2(2 - \gamma)} \right)
				\\ & \leq \left(\frac{\xst_i}{2} - \frac{(1 - \gamma)\sst_i}{2(2-\gamma)} \right)  Q'\left(\frac{\xst_i}{2} + \frac{(3 - \gamma)\sst_i}{2(2-\gamma)} \right) - \frac{\gamma\xst_i - (1-\gamma)\sst_i}{2}  Q'\left(\frac{\xst_i}{2} + \frac{(3 - \gamma)\sst_i}{2(2-\gamma)} \right)
				\\ & ~~~~ - \frac{(1 - \gamma)(2-\gamma)\xst_i + (1 - \gamma)^2\sst_i}{2(2 - \gamma)} Q'\left(\frac{\xst_i}{2} + \frac{(3 - \gamma)\sst_i}{2(2-\gamma)} \right)
				\\ & = 0,
			\end{aligned}
			\label{ineq:gi-p1-ub2}
		\end{equation}
		where the second inequality holds since $$
		\frac{\xst_i}{2} + \frac{(3 - \gamma)\sst_i}{2(2-\gamma)} < \frac{(1 + \gamma)\xst_i}{2} + \frac{(1+2\gamma-\gamma^2)\sst_i}{2(2 - \gamma)}~~\text{for all}~\gamma\geq\sst_i/(\xst_i+\sst_i).
		$$
		
    \iffalse
    we obtain
	\begin{align*}
		\frac{\partial g_i^\gamma}{\partial x}\left(\frac{\xst_i - (1 - \gamma)\sst_i}{2 - \gamma}, \sst_i \right) %& = Q\left(\frac{\xst_i  + \sst_i}{2 - \gamma} \right) + \frac{\xst_i - (1 - \gamma)\sst_i}{2 - \gamma} Q'\left(\frac{\xst_i  + \sst_i}{2 - \gamma} \right) - Q\left(\frac{\xst_i  + \sst_i}{2 - \gamma} \right)
		%\\ & ~~~~ - \frac{(1 - \gamma)^2(\xst_i + \sst_i)}{2 - \gamma}Q'\left(\frac{\xst_i  + \sst_i}{2 - \gamma} \right)
		& = \big(\gamma \xst_i + (1 - \gamma) \sst_i \big)Q'\left(\frac{\xst_i  + \sst_i}{2 - \gamma} \right)
		 < 0~~\text{for all}~\gamma\in[0,1].
	\end{align*}
    \fi
     By combining all the above results and the continuity of $\partial g_i^\gamma/\partial x(\cdot, \sst_i)$, $\xfe_i^\gamma$ exists and is unique if $\gamma \geq \sst_i/(\xst_i + \sst_i)$, and we have
    \begin{equation}
        \frac{(1 - \gamma)\xst_i}{2 - \gamma} \leq \xfe_i^\gamma \leq \frac{\xst_i}{2}-\frac{(1-\gamma)\sst_i}{2(2-\gamma)}~~\text{if}~~\gamma \geq \frac{\sst_i}{\xst_i + \sst_i}.
        \label{ineq:xpm-lb-ub-smlgm}
    \end{equation}
    \iffalse
    we have
		\begin{equation}
		\left\{\begin{aligned}	&\frac{(1 - \gamma)\xst_i}{2 - \gamma} \leq \xfe_i^\gamma \leq \frac{\xst_i}{2}-\frac{(1-\gamma)\sst_i}{2(2-\gamma)}&&~\text{if}~ \gamma > \frac{\sst_i}{\xst_i + \sst_i};\\
			&\xfe_i^\gamma \leq \frac{(1 - \gamma)\xst_i}{2 - \gamma}&&~\text{otherwise.}
            \end{aligned}\right.
			\label{ineq:xpm-lb-ub-smlgm}
		\end{equation}
    \fi
    %     This implies that for any $\gamma \in [0, 1]$,
    % \begin{equation}
    %     \frac{(1 - \gamma)\xst_i}{2 - \gamma} \vee \xfe_i^\gamma + \frac{\sst_i + \gamma \xst_i}{2 - \gamma} \wedge \sst_i \leq \frac{\xst_i + \sst_i}{2 - \gamma}.
    % \end{equation}
    % Then the proof is complete.
	% \endproof

    % \proof{Proof of Theorem \ref{thm:opt-strat-perm}.}
	% For any $x \in [0, \xst_i]$, $s \in [0, \sst_i]$, define $$g_i^\gamma(x, s) \coloneqq  e_i + x Q(x + s) + (\xst_i - x) Q(\xst_i + \sst_i - (1 - \gamma)(x + s)).$$ Then, \eqref{maximin-prob} can be rewritten as
	% \begin{equation}
	% 	\max_{x \in [0, \, \xst_i]} \, \min_{s \in \left[0, \, \sst_i \right]} \; g_i^\gamma(x, \, s).
	% 	\label{equiv-maxmin}
	% \end{equation}
	% Hence, we aim to show that $\xpm_i = (1 - \gamma)\xst_i/(2 - \gamma) \vee \xfe_i^\gamma$ is the optimal solution to \eqref{equiv-maxmin}.
	
	\textbf{Step 2.} Let
    $$
    z^\gamma\coloneqq \left\{\begin{aligned}	&\frac{(1 - \gamma)\xst_i}{2 - \gamma} &&~\text{if}~ \gamma \leq \frac{\sst_i}{\xst_i + \sst_i};\\
			&\xfe_i^\gamma &&~\text{otherwise.}
            \end{aligned}\right.
    $$
    \iffalse
    $$
    z^\gamma\coloneqq\frac{(1 - \gamma)\xst_i}{2 - \gamma} \vee \xfe_i^\gamma.
    $$
    \fi
    For each $x\in[0,\xst_i]$, define $\Smin(x) = \argmin\limits_{s \in [0, \sst_i]} g_i^\gamma(x, s)$ and
	\begin{equation}
		s(x) = \left\{\begin{aligned}
			&\inf \Smin(x), && \text{if } x < z^\gamma; \\
			&\frac{\sst_i + \gamma\xst_i}{2 - \gamma} \wedge \sst_i, && \text{if } x = z^\gamma; \\
			&\sup \Smin(x), && \text{if } x > z^\gamma.
		\end{aligned}\right.
		\label{def:s-fun}
	\end{equation}
	% Note that for any $\gamma \in [0, 1]$,
	% \begin{equation}
	% 	\gamma \leq \frac{\sst_i}{\xst_i + \sst_i} ~~~~ \Longleftrightarrow ~~~~ \frac{\sst_i + \gamma \xst_i}{2 - \gamma} \leq \sst_i.
	% 	\label{ineq:wrst-s}
	% \end{equation}
	If $\gamma > \sst_i/(\xst_i + \sst_i)$, Assumption \ref{assump1}(ii) and \eqref{ineq:xpm-lb-ub-smlgm} lead to the following relationship:
    \begin{equation}
    \begin{aligned}
		\frac{\partial}{\partial s} g_i^\gamma\big(z^\gamma, s(z^\gamma)\big) &= \frac{\partial}{\partial s} g_i^\gamma( \xfe_i^\gamma, \sst_i ) \\
        & =  \xfe_i^\gamma Q'\left( \xfe_i^\gamma + \sst_i \right) - \left(1 - \gamma\right)\left(\xst_i -  \xfe_i^\gamma \right) Q'\left(\xst_i + \sst_i - (1 - \gamma)  (\xfe_i^\gamma+\sst_i) \right)		\notag
		\\ & = Q\left(\xst_i + \sst_i - (1 - \gamma)\left( \xfe_i^\gamma + \sst_i\right) \right) - Q\left( \xfe_i^\gamma + \sst_i \right)		\notag
%		\\ & < \frac{\xst_i}{\xst_i + \sst_i} Q\left(\xst_i + \sst_i - \frac{\xst_i}{\xst_i + \sst_i}\left(\xtld_i + \sst_i\right) \right) - Q\left(\xtld_i + \sst_i \right)		\notag
		\\ & \leq 0,
%		\label{ineq:gi-derv-s-min1}
	\end{aligned}
    \end{equation}
    where the third equality stems from \eqref{eq:xpm-fst-ord} and the inequality holds by observing that $\xfe_i^\gamma \leq \xst_i/2 - (1-\gamma)\sst_i/[2(2-\gamma)] < ({\xst_i - (1 - \gamma)\sst_i})/({2 - \gamma})$ if $\gamma > \sst_i/(\xst_i + \sst_i) $.
    % replacing $\xfe_i^\gamma$ with $({\xst_i - (1 - \gamma)\sst_i})/({2 - \gamma})$.
    On the other hand, if $\gamma \leq \sst_i/(\xst_i + \sst_i)$, then \eqref{ineq:xpm-lb-ub-smlgm} suggests that
	% \begin{equation}
	% 	s(z^\gamma) = \begin{cases}
	% 		 \dfrac{\sst_i + \gamma\xst_i}{2 - \gamma}, ~~ & ~~ \text{if } \gamma \leq \dfrac{\sst_i}{\xst_i + \sst_i}, \\[10pt]
	% 		\sst_i, ~~ & ~~ \text{otherwise}.
	% 	\end{cases}
	% 	\label{eq:wrst-s}
	% \end{equation}
 %    Then, $s([(1 - \gamma)\xst_i/(2 - \gamma)] \vee \xfe_i^\gamma) \in \Smin([(1 - \gamma)\xst_i/(2 - \gamma)] \vee \xfe_i^\gamma)$ and thus $s(x) \in \Smin(x)$ for all $x$ since $g_i^\gamma(x, s)$ is continuous in $s$ for each $x$ and it can be obtained that
    \begin{equation}
	\begin{aligned}
		% & \frac{\partial^2}{\partial s^2} g_i\left(x, s \right) \geq 0, ~\text{ for all } (x, s) \in [0, \xst_i] \times [0, \sst_i],
		\frac{\partial}{\partial s} g_i^\gamma\big(z^\gamma, s(z^\gamma) \big) &= \frac{\partial}{\partial s} g_i^\gamma\left(\frac{(1 - \gamma)\xst_i}{2 - \gamma}, \frac{\sst_i + \gamma \xst_i}{2 - \gamma} \right)\\
        &=\frac{(1 - \gamma)\xst_i}{2 - \gamma}Q'\left(\frac{\xst_i+\sst_i}{2 - \gamma} \right) - \frac{(1 - \gamma)\xst_i}{2 - \gamma}Q'\left(\xst_i + \sst_i - (1 - \gamma) \frac{\xst_i+\sst_i}{2 - \gamma} \right)\\
        &=0.
	\end{aligned}
    \end{equation}
    These results imply that $s(z^\gamma) \in \Smin(z^\gamma)$ for all $\gamma\in[0,1]$ since $g_i^\gamma(x,\cdot)$ is convex for each $x\in[0,\xst_i]$, and consequently, $s(x)\in\Smin(x)$ for all $x\in[0,\xst_i]$ by the continuity of $g_i^\gamma(x,\cdot)$ on $[0,\sst_i]$ for each $x\in[0,\xst_i]$.
% 	and
% 	\begin{equation}
% 		\frac{\partial}{\partial s} g_i\left(\xpm_i, \sst_i \right) \leq 0, ~~ \text{if } \gamma > \dfrac{\sst_i}{\xst_i + \sst_i}
% 		\label{ineq:gi-s-min}
% 	\end{equation}
% 	by observing that
% 	\begin{align*}
% 		\frac{\partial}{\partial s} g_i\left(\xpm_i, \sst_i \right) & = \xpm_i Q'\left(\xpm_i + \sst_i \right) - \left(1 - \gamma\right)\left(\xst_i - \xpm_i \right) Q'\left(\xst_i + \gamma \sst_i - (1 - \gamma) \xpm_i \right)		\notag
% 		\\ & = Q\left(\xst_i + \sst_i - (1 - \gamma)\left(\xpm_i + \sst_i\right) \right) - Q\left(\xpm_i + \sst_i \right)		\notag
% %		\\ & < \frac{\xst_i}{\xst_i + \sst_i} Q\left(\xst_i + \sst_i - \frac{\xst_i}{\xst_i + \sst_i}\left(\xtld_i + \sst_i\right) \right) - Q\left(\xtld_i + \sst_i \right)		\notag
% 		\\ & \leq Q\left(\xst_i + \sst_i - \frac{1 - \gamma}{2 - \gamma}(\xst_i + \sst_i) \right) - Q\left(\frac{1 - \gamma}{2 - \gamma}\xst_i + \sst_i \right)
% 		\\ & = Q\left(\frac{\xst_i + \sst_i}{2 - \gamma} \right) - Q\left(\frac{\xst_i + \sst_i}{2 - \gamma} + \frac{(1 - \gamma)\sst_i - \gamma \xst_i}{2 - \gamma} \right)
% 		\\ & \leq 0,
% %		\label{ineq:gi-derv-s-min1}
% 	\end{align*}
%     where the second equality holds by \eqref{eq:xpm-fst-ord}, the first inequality is obtained by Assumption \ref{assump1}(ii) and Lemma \ref{lemma:xpm-lb-ub}, and the second inequality by the fact that $\gamma > \sst_i/(\xst_i + \sst_i)$.

    Furthermore, by Assumption \ref{assump1}(ii) and \eqref{useful-ineq-xy}, the following holds for all $(x, s) \in [0, \xst_i] \times [0, \sst_i]$:
    \begin{equation}
	\begin{aligned}\label{ineq:supm-gi}
		\frac{\partial^2 g_i^\gamma(x, s)}{\partial x \partial s} & = Q'(x + s) + x Q''(x + s)  + (1 - \gamma)Q'\left(\xst_i - x + \sst_i - s - \gamma(x + s)\right)
		\\ & ~~~~ + (1 - \gamma)^2(\xst_i - x) Q''\left(\xst_i - x + \sst_i - s - \gamma(x + s)\right)
		\\ & \leq Q'(x + s) + x Q''(x + s) + (1 - \gamma) Q'\left(\xst_i - x + \sst_i - s - \gamma(x + s)\right)
		\\ & ~~~~ + (1 - \gamma)(\xst_i - x) Q''\left(\xst_i - x + \sst_i - s - \gamma(x + s)\right)
		\\ & < 0.
	\end{aligned}
    \end{equation}
	Then, by Lemma 1 of~\cite{Amir2005}, $g_i^\gamma(x, s)$ is submodular in $[0, \xst_i] \times [0, \sst_i]$, i.e.,
    $$
    g_i^\gamma(x, s) + g_i^\gamma(x', s') \geq g_i^\gamma(x \wedge x', s \wedge s') + g_i^\gamma(x \vee x', s \vee s')~~\text{for all}~(x', s') \in [0, \xst_i] \times [0, \sst_i].
    $$
    Thus, Topkis's Monotonicity Theorem~\citep[see, e.g.,][Theorem 1]{Amir2005} implies that $\inf \Smin(x)$ and $\sup \Smin(x)$ are non-decreasing in $x$. This leads to the following relationship:
    \begin{equation}
		\begin{cases}
        s(x)=\inf\Smin(x) \leq \inf\Smin(z^\gamma) \leq s(z^\gamma), ~~ & \text{if } x <z^\gamma; \\s(x)=\sup\Smin(x) \geq \sup\Smin(z^\gamma) \geq s(z^\gamma) ~~ & \text{if } x > z^\gamma,
		\end{cases}
		\label{ineq:min-sx-1}
	\end{equation}
	which demonstrates that $s(x)$ is non-decreasing in $x$.

    \textbf{Step 3.} We finally prove $\xpm_i=z^\gamma$ by showing that $g_i^\gamma(\cdot, s(\cdot))$ is strictly increasing on $[0, z^\gamma)$ and is strictly decreasing on $(z^\gamma, \xst_i]$, which immediately implies that $\xpm_i = z^\gamma \leq \xst_i/2$ for any $\gamma \in [0, 1]$ by \eqref{ineq:xpm-lb-ub-smlgm} and the definition of $z^\gamma$.
	We observe that for any $x, \tilde{x}$ with $x < \tilde{x}$,
	\begin{equation}
		\begin{aligned}
        & g_i^\gamma\big(\tilde{x}, s(\tilde{x})\big) - g_i^\gamma\big(x, s(x)\big)
        \\ & = \tilde{x} Q\big(\tilde{x} + s(\tilde{x}) \big) + (\xst_i - \tilde{x}) Q\Big(\xst_i + \sst_i - (1 - \gamma)\big(\tilde{x} + s(\tilde{x})\big) \Big) 			
        \\ & ~~~~ - x Q\big(x + s(x) \big) - (\xst_i - x) Q\Big(\xst_i + \sst_i - (1 - \gamma)\big(x + s(x)\big) \Big)  	
        \\ & = (\tilde{x} - x)\bigg(Q\big(\tilde{x} + s(\tilde{x}) \big) - Q\Big(\xst_i + \sst_i - (1 - \gamma)\big(\tilde{x} + s(\tilde{x})\big) \Big) \bigg)		
        \\ & ~~~~  + x \Big(Q\big(\tilde{x} + s(\tilde{x})\big) - Q\big(x + s(x) \big) \Big) 			
        \\ & ~~~~  + (\xst_i - x) \bigg(Q\Big(\xst_i + \sst_i - (1 - \gamma)\big(\tilde{x} + s(\tilde{x})\big) \Big) - Q\Big(\xst_i + \sst_i - (1 - \gamma)\big(x + s(x)\big) \Big) \bigg).		
		\end{aligned}\label{eq:gdiff}
	\end{equation}
	
    We first fix $y, \tilde{y} \in [0, z^\gamma)$ with $y < \tilde{y}$.
	We assume that $s(y) < \sst_i$. Then, by \eqref{eq:gdiff}, there exist $\theta_1 \in (y + s(y), \tilde{y} + s(\tilde{y}) )$ and $\theta_2 \in (\xst_i + \sst_i - (1-\gamma)(\tilde{y} + s(\tilde{y})), \xst_i + \sst_i - (1-\gamma)(y + s(y)) )$ such that %, $\eta_1 \in (y + s(y), \theta_1)$ and $\eta_2 \in (\theta_2, \xst_i + \sst_i - (1-\gamma)(y + s(y)) )$, such that
	\begin{align}
		& g_i^\gamma(\tilde{y}, s(\tilde{y}) ) - g_i^\gamma(y, s(y) )
		\\ & = (\tilde{y} - y)\bigg(Q\big(\tilde{y} + s(\tilde{y}) \big) - Q\Big(\xst_i + \sst_i - (1 - \gamma)\big(\tilde{y} + s(\tilde{y})\big) \Big) \bigg)
		\\ & ~~~~ + \Delta_0 \big(y Q'(\theta_1) - (1 - \gamma)(\xst_i - y) Q'(\theta_2) \big)      		\label{mean-val-1}
		\\ & > \Delta_0 \big(y Q'(\theta_1) - (1 - \gamma)(\xst_i - y) Q'(\theta_2) \big)		
		\label{first-ineq}
		\\ & \geq \Delta_0 \big(y Q'(\theta_1) - (1 - \gamma)(\xst_i - y) Q'(\theta_2) \big)- \Delta_0 \frac{\partial}{\partial s} g_i^\gamma(y, s(y))					\label{sec-ineq}	
		\\ & = \Delta_0 y \Big(Q'(\theta_1) - Q'\big(y + s(y) \big) \Big)			\nonumber
        \\ & ~~~~ + \Delta_0 (1 - \gamma)(\xst_i - y) \bigg(Q'(u_0) - Q'(\theta_2)\bigg)					\nonumber
		% \\ & ~~~~ + \Delta_0 (1 - \gamma)(\xst_i - y) \bigg(Q'\Big(\xst_i + \sst_i - (1 - \gamma)\big(y + s(y)\big)\Big) - Q'(\theta_2)\bigg)					\nonumber
		% \\ & = \Delta_0 y\left(\theta_1 - y - s(y) \right) Q''(\eta_1)
		% \\ & ~~~~ + \Delta_0 (1 - \gamma)(\xst_i - y) \left(\xst_i + \sst_i - (1 - \gamma)(y + s(y)) - \theta_2\right) Q''(\eta_2)					\label{mean-val-2}
		\\ & \geq 0,
		\label{pos-diff}
	\end{align}
	where $\Delta_0 \coloneqq \tilde{y} + s(\tilde{y}) - (y + s(y)) > 0$ and $u_0 \coloneqq \xst_i + \sst_i - (1 - \gamma)(y + s(y))$, \eqref{mean-val-1} follows from the mean value theorem, \eqref{first-ineq} is satisfied since $Q$ is strictly decreasing by Assumption \ref{assump1}(ii) and $\tilde{y} + s(\tilde{y}) < z^\gamma+s(z^\gamma)\leq (\xst_i+\sst_i)/(2-\gamma)$ by \eqref{ineq:xpm-lb-ub-smlgm}, \eqref{sec-ineq} holds because
	% \begin{equation}
	% 	\frac{\partial}{\partial s} g_i^\gamma(y, s(y)) = y Q'(y + s(y) ) - (1 - \gamma)(\xst_i - y) Q'\left(\xst_i + \sst_i - (1 - \gamma)(y + s(y)) \right) \geq 0,
	% \end{equation}
	% which stems from the fact that
    $g_i^\gamma(y, \cdot)$ is convex and $s(y)$ is its minimizer with $s(y) < \sst_i$, and \eqref{pos-diff} stems from the convexity of $Q$ in Assumption \ref{assump1}(iv). On the other hand, if $s(y) = \sst_i$, $s(y)=s(\tilde y) = \sst_i\leq s(z^\gamma)$ since $s(\cdot)$ is non-decreasing and $y < z^\gamma$. This implies that %, we have $\sst_i = s(y) \leq s\{[(1 - \gamma)\xst_i/(2 - \gamma)] \vee \xfe_i^\gamma\}$.
     $\gamma \geq \sst_i/(\xst_i + \sst_i)$ and $(1 - \gamma)\xst_i/(2 - \gamma) \leq \xfe_i^\gamma$ by the definition of $s(\cdot)$ and \eqref{ineq:xpm-lb-ub-smlgm}, respectively. Hence, we have $y<\tilde y\leq z^\gamma = \xfe_i^\gamma$. %, which gives
	% \begin{equation*}
	% 	y < \tilde{y} < [(1 - \gamma)\xst_i/(2 - \gamma)] \vee \xfe_i^\gamma = \xfe_i^\gamma.
	% \end{equation*}
	Furthermore, by \eqref{eq:gxprime} and the strict concavity of  $g_i^\gamma(\cdot,\sst_i)$,  we have %for any $x < \xfe_i^\gamma$,
	% \begin{align*}
	% 	\frac{\partial^2 g_i}{\partial x^2}\left(x, \sst_i \right) & = 2Q'(x + \sst_i) + xQ''(x + \sst_i) + 2(1 - \gamma)Q'\left(\xst_i + \gamma \sst_i - (1 - \gamma) x \right)
	% 	\\ & ~~~~ + (1 - \gamma)^2 (\xst_i - x) Q''\left(\xst_i + \gamma \sst_i - (1 - \gamma) x \right)
	% 	\\ & \leq 2Q'(x + \sst_i) + xQ''(x + \sst_i)
	% 	\\ & ~~~~ + (1 - \gamma) \left[2Q'\left(\xst_i + \gamma \sst_i - (1 - \gamma) x \right) + (\xst_i - x) Q''\left(\xst_i + \gamma \sst_i - (1 - \gamma) x \right) \right]
	% 	\\ & < 0
	% \end{align*}
	% by \eqref{useful-ineq-xx}, we have
	\begin{equation}
		\frac{\partial g_i^\gamma}{\partial x}(x, s(x)) = \frac{\partial g_i^\gamma}{\partial x}(x, \sst_i) > \frac{\partial g_i^\gamma}{\partial x}(\xfe_i^\gamma, \sst_i) = 0~~\text{for any }x\in [y,z^\gamma).
	\end{equation}
This indicates that $g_i^\gamma(y, s(y)) = g_i^\gamma(y, \sst_i) < g_i^\gamma(\tilde{y}, \sst_i) = g_i^\gamma(\tilde{y}, s(\tilde{y}))$.

    We next fix $t, \tilde{t} \in (z^\gamma, \xst_i]$ with $t < \tilde{t}$. If $s(t) < \sst_i$, we have $s(z^\gamma) = {(\sst_i + \gamma\xst_i)}/{(2 - \gamma)}$ by \eqref{def:s-fun}
and $z^\gamma=(1-\gamma)\xst_i/(2-\gamma)$ by \eqref{ineq:xpm-lb-ub-smlgm}. Then, by letting $\Delta_1 \coloneqq \tilde{t} + s(\tilde{t}) - (t + s(t)) > 0$, there exist $\zeta_1 \in (t + s(t), \tilde{t} + s(\tilde{t}) )$ and $\zeta_2 \in (\xst_i + \sst_i - (1-\gamma)(\tilde{t} + s(\tilde{t})), \xst_i + \sst_i - (1-\gamma)(t + s(t)) )$ such that %, $\eta_1 \in (y + s(y), \theta_1)$ and $\eta_2 \in (\theta_2, \xst_i + \sst_i - (1-\gamma)(y + s(y)) )$, such that
	\begin{align}
		& g_i^\gamma(\tilde{t}, s(\tilde{t}) ) - g_i^\gamma(t, s(t) )
		\\ & = (\tilde{t} - t)\bigg(Q\big(\tilde{t} + s(\tilde{t}) \big) - Q\Big(\xst_i + \sst_i - (1 - \gamma)\big(\tilde{t} + s(\tilde{t})\big) \Big) \bigg)
		\\ & ~~~~ + \Delta_1 \big(t Q'(\zeta_1) - (1 - \gamma)(\xst_i - t) Q'(\zeta_2) \big)      		\label{mean-val-2}
		\\ & < \Delta_1 \big(tQ'(\zeta_1) - (1 - \gamma)(\xst_i - t)  Q'(\zeta_2) \big)		
		\label{first-ineq-2}
        \\ & < \Delta_1 \big((1 - \gamma)(\xst_i - t) Q'(\zeta_1) - t Q'(\zeta_2) \big)		
		\label{ineq:gfo-ub}
		\\ & \leq \Delta_1 \big((1 - \gamma)(\xst_i - t) Q'(\zeta_1) - t Q'(\zeta_2) \big) + \Delta_1 \frac{\partial}{\partial s} g_i^\gamma(t, s(t))					\label{sec-ineq-2}	
		\\ & = \Delta_1 t \Big(Q'(t + s(t)) - Q'\big(\zeta_2 \big) \Big)			\nonumber
        \\ & ~~~~ + \Delta_1 (1 - \gamma)(\xst_i - t) \bigg(Q'(\zeta_1) - Q'(u_1)\bigg)					\nonumber
		% \\ & ~~~~ + \Delta_1 (1 - \gamma)(\xst_i - t) \bigg(Q'(\zeta_1) - Q'\Big(\xst_i + \sst_i - (1 - \gamma)\big(t + s(t)\big)\Big)\bigg)					\nonumber
		% \\ & = \Delta_0 y\left(\theta_1 - y - s(y) \right) Q''(\eta_1)
		% \\ & ~~~~ + \Delta_0 (1 - \gamma)(\xst_i - y) \left(\xst_i + \sst_i - (1 - \gamma)(y + s(y)) - \theta_2\right) Q''(\eta_2)					\label{mean-val-2}
		\\ & \leq 0,
		\label{pos-diff-2}
	\end{align}
	where $u_1 \coloneqq \xst_i + \sst_i - (1 - \gamma)(t + s(t))$,  \eqref{mean-val-2} is due to \eqref{eq:gdiff} and the mean value theorem, \eqref{first-ineq-2} is satisfied since $\tilde{t} + s(\tilde{t}) > z^\gamma+s(z^\gamma) = (\xst_i+\sst_i)/(2-\gamma)$, \eqref{ineq:gfo-ub} follows by the fact that $t > z^\gamma = (1 - \gamma)\xst_i/(2 - \gamma)$ and $Q'(\cdot)<0$, \eqref{sec-ineq-2} holds because
	% \begin{equation}
	% 	\frac{\partial}{\partial s} g_i^\gamma(y, s(y)) = y Q'(y + s(y) ) - (1 - \gamma)(\xst_i - y) Q'\left(\xst_i + \sst_i - (1 - \gamma)(y + s(y)) \right) \geq 0,
	% \end{equation}
	% which stems from the fact that
    $g_i^\gamma(t, \cdot)$ is convex and $s(t)$ is its minimizer with $s(t) < \sst_i$, and \eqref{pos-diff-2} stems from Assumption \ref{assump1}(iv) and the inequalities $\zeta_2 < \xst_i +\sst_i - (1 - \gamma)(t + s(t)) < t + s(t) < \zeta_1$. On the other hand, if $s(t) = \sst_i$,  %, which gives
	% \begin{equation*}
	% 	y < \tilde{y} < [(1 - \gamma)\xst_i/(2 - \gamma)] \vee \xfe_i^\gamma = \xfe_i^\gamma.
	% \end{equation*}
	then for any $x\in(t,\xst_i]$, we have $x > z^\gamma$. By \eqref{eq:gi-dx-xgm} and \eqref{useful-ineq-xy}, if $\gamma \leq \sst_i/(\xst_i + \sst_i)$, $$\frac{\partial g_i^\gamma}{\partial x} \left(\frac{(1-\gamma)\xst_i}{2-\gamma}, \sst_i \right) \leq 0, $$ and thus,
     %for any $x < \xfe_i^\gamma$,
	% \begin{align*}
	% 	\frac{\partial^2 g_i}{\partial x^2}\left(x, \sst_i \right) & = 2Q'(x + \sst_i) + xQ''(x + \sst_i) + 2(1 - \gamma)Q'\left(\xst_i + \gamma \sst_i - (1 - \gamma) x \right)
	% 	\\ & ~~~~ + (1 - \gamma)^2 (\xst_i - x) Q''\left(\xst_i + \gamma \sst_i - (1 - \gamma) x \right)
	% 	\\ & \leq 2Q'(x + \sst_i) + xQ''(x + \sst_i)
	% 	\\ & ~~~~ + (1 - \gamma) \left[2Q'\left(\xst_i + \gamma \sst_i - (1 - \gamma) x \right) + (\xst_i - x) Q''\left(\xst_i + \gamma \sst_i - (1 - \gamma) x \right) \right]
	% 	\\ & < 0
	% \end{align*}
	% by \eqref{useful-ineq-xx}, we have
	\begin{equation*}
		\frac{\partial g_i^\gamma}{\partial x}(x, s(x)) = \frac{\partial g_i^\gamma}{\partial x}(x, \sst_i) < \frac{\partial g_i^\gamma}{\partial x}(z^\gamma, \sst_i) \leq 0
	\end{equation*}
by \eqref{eq:gxprime}, the definition of $z^\gamma$,  and the strict concavity of  $g_i^\gamma(\cdot,\sst_i)$. This suggests that $g_i^\gamma(t, s(t)) = g_i^\gamma(t, \sst_i) > g_i^\gamma(\tilde{t}, \sst_i) = g_i^\gamma(\tilde{t}, s(\tilde{t}))$.
    Consequently, the result follows.
    % {\color{blue}
    %  Fix $t, \tilde{t} \in (z^\gamma, \xst_i]$ with $t < \tilde{t}$. Thus, . Then, $s(t) = s(\tilde{t}) \geq s(z^\gamma) \geq s(\xfe_i)$ since $s(\cdot)$ is non-decreasing and $t < \tilde{t}$. Furthermore, we have $\tilde t > t > z^\gamma \geq \xfe_i^\gamma$.
    % }
    \iffalse
	Using a similar argument, it can be shown that $g_i^\gamma(\cdot, s(\cdot))$ is strictly decreasing on $(z^\gamma, \xst_i]$.
    \fi
    %
	\label{proofthm1}
	\endproof

    \proof{Proof of Theorem \ref{thm:xpm-non-mono}.}
    Fix $i\in[m]$ and define $g_i^\gamma(x, s) \coloneqq  e_i + x Q(x + s) + (\xst_i - x) Q(\xst_i + \sst_i - (1 - \gamma)(x + s))$ for each $x \in [0, \xst_i]$ and $s \in [0, \sst_i]$. Then, \eqref{prob:maximin-perm} can be rewritten as
	\begin{equation}
		\max_{x \in [0, \, \xst_i]} \, \min_{s \in \left[0, \, \sst_i \right]} \; g_i^\gamma(x, \, s).
	\end{equation}
    According to the proof of Theorem~\ref{thm:opt-strat-perm}, we observe that if $\gamma > \sst_i/(\xst_i + \sst_i)$, then $\xpm_i = \xfe_i^\gamma$ and $s(\xpm_i) \coloneqq \argmin_{s\in[0,\sst_i]}g_i^\gamma(\xpm_i,s) = \bar s_i$.
    Furthermore, since $(\partial/\partial x)g_i^\gamma(\cdot,\sst_i)$ is strictly decreasing on $[0, \xst_i]$ by Assumption~\ref{assump1}(iv) and \eqref{useful-ineq-xx},
    % The proof of the first statement is straightforward and omitted here. For the second statement, since by \eqref{useful-ineq-xx},
    % \begin{align}
    %     \frac{\partial^2 g_i}{\partial x^2}(x, \sst_i) & = 2Q'(x + \sst_i) + xQ''(x + \sst_i)
    %     \\ & ~~~~ + 2(1 - \gamma)Q'\left(\xst_i + \gamma\sst_i - (1 - \gamma)x\right) + (1 - \gamma)^2Q''\left(\xst_i + \gamma\sst_i - (1 - \gamma)x\right)
    %     \\ & \leq 2Q'(x + \sst_i) + xQ''(x + \sst_i)
    %     \\ & ~~~~ + 2(1 - \gamma)\left[Q'\left(\xst_i + \gamma\sst_i - (1 - \gamma)x\right) + Q''\left(\xst_i + \gamma\sst_i - (1 - \gamma)x\right)\right]
    %     \\ & \leq 0,
    % \end{align}
    % then by \eqref{eq:xpm-fst-ord},
    it suffices to show that $\gamma > \sst_i/(\xst_i + \sst_i)$ and $\xst_i\leq \sst_i$ imply
    \begin{equation}
        \frac{\partial^2 g_i}{\partial x \partial \gamma}(\xfe_i^\gamma, \sst_i) \geq 0.
    \end{equation}

    Assume that $\gamma>\sst_i/(\xst_i + \sst_i)$ and $\xst_i\leq \sst_i$. Then, we have
    $$
    \frac{(1 - \gamma^2)\xst_i}{2 - \gamma} + \gamma \sst_i \geq \xst_i
    $$
    since it holds when $\gamma = \sst_i/(\xst_i + \sst_i)$ and the left-hand side increases in $\gamma$. Thus, by \eqref{ineq:xpm-lb-ub-smlgm}, we obtain $(1 + \gamma)\xfe_i^\gamma + \gamma \sst_i \geq \xst_i$. This suggests that $\gamma\left(\xfe_i^\gamma + \sst_i \right)^2 \geq \left(\xfe_i^\gamma + \sst_i \right)\left(\xst_i - \xfe_i^\gamma\right)\geq\left(\xst_i - \xfe_i^\gamma\right)^2$, and therefore, we observe that
%
    % \begin{align}
    %     (\xst_i + \sst_i)\gamma^2 - (2\sst_i + \xst_i)\gamma + \xst_i & = (\xst_i + \sst_i) \left[\gamma^2 - \left(1 + \frac{\sst_i}{\xst_i + \sst_i} \right)\gamma + \frac{\xst_i}{\xst_i + \sst_i} \right]
    %     \\ & < (\xst_i + \sst_i) \left[\gamma^2 - \left(1 + \frac{\xst_i}{\xst_i + \sst_i} \right)\gamma + \frac{\xst_i}{\xst_i + \sst_i} \right]
    %     \\ & = (\xst_i + \sst_i) \left[\gamma(\gamma - 1) - \frac{\xst_i}{\xst_i + \sst_i} (\gamma - 1) \right]
    %     \\ & = (\xst_i + \sst_i) (\gamma - 1) \left(\gamma - \frac{\xst_i}{\xst_i + \sst_i} \right)
    %     \\ & \leq 0,
    % \end{align}
    % implying that
    % \begin{equation}
    %     (1 + \gamma)\xfe_i^\gamma + \gamma \sst_i \geq \frac{(1 - \gamma^2)\xst_i}{2 - \gamma} + \gamma \sst_i \geq \xst_i ~~
    %     \Longrightarrow ~~ \gamma\left(\xfe_i^\gamma + \sst_i \right) \geq \xst_i - \xfe_i^\gamma
    %     \label{ineq:xfe-gm-lb}
    % \end{equation}
    % by Lemma \ref{lemma:xpm-lb-ub}. Therefore,
    \begin{equation}
    \begin{aligned}
        & \left(2\xfe_i^\gamma + \sst_i - \xst_i \right) \left(\xst_i + \gamma \sst_i - (1 - \gamma)\xfe_i^\gamma \right)
        - (1 - \gamma) \left(\xfe_i^\gamma + \sst_i\right) \left(\xst_i - \xfe_i^\gamma \right) \\&= \gamma \left(\xfe_i^\gamma + \sst_i \right)^2 - \left(\xst_i - \xfe_i^\gamma \right)^2\\&\geq0,
        % \\ & \geq (1 - \gamma) \left(\xfe_i^\gamma + \sst_i\right) \left(\xst_i - \xfe_i^\gamma \right) + \gamma^2 \left(\xfe_i^\gamma + \sst_i \right)^2 - \left(\xst_i - \xfe_i^\gamma \right)
        % \\ & = (1 - \gamma) \left(\xfe_i^\gamma + \sst_i\right) \left(\xst_i - \xfe_i^\gamma \right) + \left[\gamma \left(\xfe_i^\gamma + \sst_i \right) - \left(\xst_i - \xfe_i^\gamma \right) \right]^2
        % \\ & \geq (1 - \gamma) \left(\xfe_i^\gamma + \sst_i\right) \left(\xst_i - \xfe_i^\gamma \right),
        \label{ineq:xfe-abgm}
    \end{aligned}
    \end{equation}
    which also gives $2\xfe_i^\gamma + \sst_i - \xst_i \geq 0$.
    % \begin{equation}
    %     2\xfe_i^\gamma + \sst_i - \xst_i \geq \frac{(1 - \gamma) \left(\xfe_i^\gamma + \sst_i\right) \left(\xst_i - \xfe_i^\gamma \right)}{\xst_i + \gamma \sst_i - (1 - \gamma)\xfe_i^\gamma} = \frac{(1 - \gamma) \left(\xfe_i^\gamma + \sst_i\right) \left(\xst_i - \xfe_i^\gamma \right)}{\xst_i - \xfe_i^\gamma + \gamma \left(\xfe_i^\gamma + \sst_i \right)} \geq 0.
    %     \label{ineq:2xfe+sbar-xbar}
    % \end{equation}
    Consequently, we obtain the following relationship:
    \begin{equation}
    \begin{aligned}
        &\frac{\partial^2 g_i}{\partial x \partial \gamma}(\xfe_i^\gamma, \sst_i)\\ & = (\xst_i - \sst_i - 2\xfe_i^\gamma) Q'\big(\xst_i + \gamma\sst_i - (1 - \gamma )\xfe_i^\gamma\big)
         - (1 - \gamma)(\xst_i - \xfe_i^\gamma)(\xfe_i^\gamma + \sst_i) Q''\big(\xst_i + \gamma\sst_i - (1 - \gamma )\xfe_i^\gamma\big)
        \\ & \geq -(2\xfe_i^\gamma + \sst_i - \xst_i)\Big(Q'\big(\xst_i + \gamma\sst_i - (1 - \gamma )\xfe_i^\gamma\big)
        +  \big(\xst_i + \gamma\sst_i - (1 - \gamma)\xfe_i^\gamma\big) Q''\big(\xst_i + \gamma\sst_i - (1 - \gamma )\xfe_i^\gamma\big)\Big)
        % \\ & = -(2\xfe_i^\gamma + \sst_i - \xst_i) \big[\big(\xst_i + \gamma\sst_i - (1 - \gamma)\xfe_i^\gamma\big) Q'\big(\xst_i + \gamma\sst_i - (1 - \gamma ) \xfe_i^\gamma \big) \big]'
        \\ & \geq 0,
    \end{aligned}
    \end{equation}
    where the first and second inequalities hold by \eqref{ineq:xfe-abgm} and Assumption~\ref{assump1}(v), respectively.
	\endproof

    % {\color{red}
    \proof{Proof of Proposition \ref{thm:rbst-dec}.}
    Fix $i\in[m]$ and $\xst_i>0$. By Theorem \ref{thm:opt-strat-perm}, it suffices to prove that $\xpm_i$ decreases in $\gamma$ if $\gamma > \sst_i/(\xst_i + \sst_i)$ and $\sst_i$ is sufficiently small. Define $g_i^\gamma(x, s) \coloneqq  e_i + x Q(x + s) + (\xst_i - x) Q(\xst_i + \sst_i - (1 - \gamma)(x + s))$ for each $x \in [0, \xst_i]$ and $s \in [0, \sst_i]$.  Fix $\gamma > \sst_i/(\xst_i + \sst_i)$.
    We observe that if $\gamma>\sqrt{3}-1$, then for some $\theta \in [(\xst_i + \sst_i)/2, (1+\gamma)(\xst_i + \sst_i)/2]$,
    \begin{equation}
        \begin{aligned}
            & \frac{\partial g_i}{\partial x}\bigg(\frac{\xst_i - \sst_i}{2}, \sst_i\bigg)
            \\ & = Q\bigg(\frac{\xst_i + \sst_i}{2}\bigg) + \frac{\xst_i - \sst_i}{2} Q'\bigg(\frac{\xst_i + \sst_i}{2}\bigg) - Q\bigg(\frac{(1+\gamma)(\xst_i + \sst_i)}{2}\bigg) - \frac{(1 - \gamma)(\xst_i + \sst_i)}{2}Q'\bigg(\frac{(1+\gamma)(\xst_i + \sst_i)}{2} \bigg)
            \\ & = \frac{\xst_i - \sst_i}{2} Q'\bigg(\frac{\xst_i + \sst_i}{2}\bigg) - \frac{\gamma(\xst_i + \sst_i)}{2} Q'\bigg(\frac{\xst_i + \sst_i}{2}\bigg) - \frac{\gamma^2(\xst_i + \sst_i)^2}{8}Q''(\theta) - \frac{(1 - \gamma)(\xst_i + \sst_i)}{2}Q'\left(\frac{(1+\gamma)(\xst_i + \sst_i)}{2} \right)
            \\ & \leq \frac{\xst_i - \sst_i}{2} Q'\bigg(\frac{\xst_i + \sst_i}{2}\bigg) - \frac{\gamma(\xst_i + \sst_i)}{2} Q'\bigg(\frac{\xst_i + \sst_i}{2}\bigg) - \frac{\gamma^2(\xst_i + \sst_i)^2}{8}Q''(\theta) - \frac{(1 - \gamma)(\xst_i + \sst_i)}{2}Q'\bigg(\frac{\xst_i + \sst_i}{2} \bigg)
            \\ & = - \sst_i Q'\bigg(\frac{\xst_i + \sst_i}{2} \bigg) - \frac{\gamma^2(\xst_i + \sst_i)^2}{8}Q''(\theta)
            \\ & \leq -Q'\bigg(\frac{\xst_i + \sst_i}{2} \bigg) \bigg(\sst_i + \frac{(\sqrt{3}-1)^2\xst_i^2 \delta}{8 Q'(0)} \bigg)
            \\ & =-Q'\bigg(\frac{\xst_i + \sst_i}{2} \bigg) \bigg(\sst_i + \frac{(2-\sqrt{3})\xst_i^2 \delta}{4 Q'(0)} \bigg),
        \end{aligned}
        \label{ineq:gi-p1-ub}
    \end{equation}
    where $\delta = \min_{x \in [0, \xst_i + \sst_i]}Q''(x) > 0$, the second equality follows from the first-order Taylor expansion, and the two inequalities hold since $ Q'(0) \leq Q'((\xst_i + \sst_i)/2) \leq Q'((1+\gamma)(\xst_i + \sst_i)/2) \leq 0$ by Assumption \ref{assump1}.

    As in the proof of Theorem \ref{thm:opt-strat-perm}, define $\xfe_i^\gamma$ as the solution of \eqref{eq:xpm-fst-ord} with respect to $x$ given $\gamma > \sst_i/(\xst_i + \sst_i)$. In the remainder of the proof, we assume that $\sst_i <-{(2-\sqrt{3})\xst_i^2 \delta}/({4 Q'(0)})$ and claim that
    \begin{equation}
        \frac{\partial^2 g_i}{\partial x \partial \gamma}(\xfe_i^\gamma, \sst_i) = (\xst_i - \sst_i - 2\xfe_i^\gamma) Q'\big(\xst_i + \gamma\sst_i - (1 - \gamma )\xfe_i^\gamma\big) - (1 - \gamma)(\xst_i - \xfe_i^\gamma)(\xfe_i^\gamma + \sst_i) Q''\big(\xst_i + \gamma\sst_i - (1 - \gamma )\xfe_i^\gamma\big)\leq 0.
    \end{equation}
    This is trivial if  $\xfe_i^\gamma < (\xst_i - \sst_i)/2$. Thus, suppose that $\xfe_i^\gamma \geq (\xst_i - \sst_i)/2$. Then, since $(\partial/\partial x)g_i^\gamma(\cdot,\sst_i)$ is decreasing on $[0, \xst_i]$ and $(\partial/\partial x)g_i^\gamma(\xfe_i^\gamma,\sst_i) = 0$ by the proof of Theorem~\ref{thm:opt-strat-perm},
    % by analogy with \eqref{ineq:gi-p1-ub2} and \eqref{ineq:xfe-xst-ub},
    \eqref{ineq:xpm-lb-ub-smlgm} and \eqref{ineq:gi-p1-ub} suggests $\gamma\leq \sqrt{3}-1$ and
    \begin{equation}\label{eq:xleqgeq}
        \frac{\xst_i-\sst_i}{2} \leq \xfe_i^\gamma \leq  \frac{\xst_i}{2} - \frac{(1 - \gamma)\sst_i}{2(2 - \gamma)},
    \end{equation}
    % respectively.
    We therefore obtain the following result:
    \begin{equation}
        \begin{aligned}
            \frac{\partial^2 g_i}{\partial x \partial \gamma}(\xfe_i^\gamma, \sst_i)
            & \leq - \frac{\sst_i}{2 - \gamma} Q'\big(\xst_i + \gamma\sst_i - (1 - \gamma )\xfe_i^\gamma\big) - (1 - \gamma)\bigg(\frac{\xst_i}{2} + \frac{(1 - \gamma)\sst_i}{2(2-\gamma)} \bigg) \frac{(\xst_i + \sst_i)}{2} Q''\big(\xst_i + \gamma\sst_i - (1 - \gamma )\xfe_i^\gamma\big)
            \\ & \leq - \sst_i Q'\big(\xst_i + \gamma\sst_i - (1 - \gamma )\xfe_i^\gamma\big) - (2-\sqrt{3})\frac{\xst_i^2}{4} Q''\big(\xst_i + \gamma\sst_i - (1 - \gamma )\xfe_i^\gamma\big)
            \\ & = -Q'\big(\xst_i + \gamma\sst_i - (1 - \gamma )\xfe_i^\gamma\big)  \bigg( \sst_i + (2-\sqrt{3})\frac{\xst_i^2}{4} \frac{ Q''\big(\xst_i + \gamma\sst_i - (1 - \gamma )\xfe_i^\gamma\big)}{Q'\big(\xst_i + \gamma\sst_i - (1 - \gamma )\xfe_i^\gamma\big)} \bigg)
            \\ & \leq -Q'\big(\xst_i + \gamma\sst_i - (1 - \gamma )\xfe_i^\gamma\big) \bigg(\sst_i + \frac{(2-\sqrt{3}) \xst_i^2 \delta}{4Q'(0)} \bigg)
            \\ & < 0,
        \end{aligned}
        \label{ineq:gip2-ub}
    \end{equation}
    where the first two inequalities comes from \eqref{eq:xleqgeq} and the third inequality holds by Assumption \ref{assump1}. Consequently, the above claim is demonstrated. This leads to the desired result because the proof of Theorem~\ref{thm:opt-strat-perm} shows that $(\partial/\partial x)g_i^\gamma(\cdot,\sst_i)$ is strictly decreasing on $[0, \xst_i]$ and that $\xpm_i = \xfe_i^\gamma$ and $s(\xpm_i) \coloneqq \argmin_{s\in[0,\sst_i]}g_i^\gamma(\xpm_i,s) = \bar s_i$.
    \endproof

    \section{Proofs of Theoretical Results in Section \ref{sec:intb_liab}}  \label{sec:proof_intb}
    % {\color{red}
    Throughout this section, we denote the 1-norm of a vector $\bu\in\bbR^d$ by $\|\bu\|_1 \coloneqq \sum_{k = 1}^{d} |u_k|$, and we define $\pbar\coloneqq Q(\bone^\top\bxbar)$.

    \proof{Proof of Lemma \ref{lemma:yf-ub-intb}.}
    To facilitate the proof, we use the following lemma, whose proof is provided in Appendix~\ref{app:auxiliary}.
    \begin{lemma}\label{lemma:psA-pbar-comp}
        $\pbar \leq \psA$.
    \end{lemma}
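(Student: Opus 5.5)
We prove Lemma~\ref{lemma:psA-pbar-comp} by contradiction, in the same spirit as the iterative argument in the proof of Lemma~\ref{lemma:yfpm-ub}. The new ingredient is that the iteration now runs on the pair (price, clearing vector) instead of on the price alone. Fix $\bx\in[\bzero,\bxbar]$ and write $p_1\coloneqq Q(\bone^\top\bx)$. Since $\xbar_j=\yf_j^\brLM(\bzero)$ by definition, the number $\pbar=Q(\bone^\top\bxbar)$ is exactly the price component of the unique solution $(\pbar,\blbar)$ of \eqref{eq:p2-l-rLM} at $\bx=\bzero$.

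For a general $\bx$, define the map $\Phi_\bx(p,\bpi)=(\Phi^p_\bx,\Phi^\pi_\bx)$ by the right-hand sides of \eqref{eq:p2-l-rLM}, with $\psA$ replaced by $p$ and $\bpi^\brLM$ by $\bpi$:
\begin{equation*}
\Phi^p_\bx(p,\bpi)=Q\Big(\sum_j \tfrac{(L_j-e_j-x_jp_1-\brLM_j^\top\bpi)^+}{p}\wedge(z_j-x_j)\Big),\qquad
\Phi^\pi_{\bx,i}(p,\bpi)=L_i\wedge\big(e_i+x_ip_1+(z_i-x_i)p+\brLM_i^\top\bpi\big).
\end{equation*}
Then $(\psA,\bpi^\brLM)$ is the unique fixed point of $\Phi_\bx$, and $(\pbar,\blbar)$ is the unique fixed point of $\Phi_\bzero$ (uniqueness by \cite{AminiEtAl2016c}).

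Suppose, for contradiction, that $\pbar>\psA$. Since $\bx\le\bxbar$ and $Q$ is decreasing, $p_1\ge\pbar>\psA$. The first step is to record that $\Phi_\bzero$ is monotone nondecreasing in $(p,\bpi)$. Raising $p$ or $\bpi$ lowers the liquidated volume, hence raises the price because $Q'<0$, and it also raises each payment.

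The second step is a comparison: $\Phi_\bx(p,\bpi)\ge\Phi_\bzero(p,\bpi)$ whenever $p\le p_1$.
\begin{itemize}
\item For the payments this is immediate, since $x_ip_1+(z_i-x_i)p\ge z_ip$.
\item For the price, note that both arguments of the minimum shrink when $\bx$ is introduced: $(a-x_jp_1)^+/p\le a^+/p$ and $z_j-x_j\le z_j$. So the liquidated volume under $\Phi_\bx$ is no larger than under $\Phi_\bzero$, and the price is no smaller. Here it matters that there is no permanent-impact term in this section.
\end{itemize}

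Applying the comparison at $(\psA,\bpi^\brLM)$, which is allowed because $\psA<p_1$, gives
\begin{equation*}
(\psA,\bpi^\brLM)=\Phi_\bx(\psA,\bpi^\brLM)\ge\Phi_\bzero(\psA,\bpi^\brLM).
\end{equation*}
So $(\psA,\bpi^\brLM)$ is a supersolution of $\Phi_\bzero$. Iterating $\Phi_\bzero$ from this point and using its monotonicity produces a componentwise nonincreasing sequence that stays below $(\psA,\bpi^\brLM)$. The sequence is bounded below, by $Q(\sum_jz_j)>0$ for the prices and by $\bzero$ for the payments. By continuity of $Q$ and of the min/max operations, it converges to a fixed point of $\Phi_\bzero$. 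By uniqueness that limit is $(\pbar,\blbar)$, which gives $\pbar\le\psA$ and contradicts the assumption.

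The main obstacle I expect is justifying the passage to the limit and the uniqueness step rigorously. The map involves the division by $p$, so the iterates must stay bounded away from zero; this follows from $Q(\sum_j z_j)>0$ under Assumption~\ref{assump1}. Beyond that, I must make sure the existence-and-uniqueness result of \cite{AminiEtAl2016c} applies to the joint system with the $(\cdot)^+$ and the truncation at $z_j$, so that the monotone limit is forced to coincide with $(\pbar,\blbar)$. If uniqueness for the joint system is delicate, a fallback is to run the iteration from the top, starting at $(P,\bL)$, for both $\Phi_\bx$ and $\Phi_\bzero$ simultaneously. Induction with the comparison step then shows that the $\Phi_\bx$-iterates dominate the $\Phi_\bzero$-iterates at every stage, and the two limits, being the greatest fixed points, inherit the ordering. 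This route also removes the need for the contradiction hypothesis, apart from ensuring $p\le p_1$ along the way, which holds because all iterates of the price remain at most $p_1$ once one starts the comparison at the stage where the price has dropped below $p_1$.
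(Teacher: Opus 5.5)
Your proof is correct and follows essentially the same route as the paper. Both assume $\pbar>\psA$, so that $p_1>\psA$. Both then compare the $\bx$-system with the $\bzero$-system at $(\psA,\bpi^\brLM)$, iterate the $\bzero$-map jointly in price and clearing vector, show the iterates stay below $(\psA,\bpi^\brLM)$, and identify the limit as $(\pbar,\blbar)$ via the uniqueness result of \cite{AminiEtAl2016c}. Your supersolution-plus-monotonicity framing also justifies the convergence of the iterates, which the paper only asserts, so the greatest-fixed-point fallback is unnecessary; as written, its handling of prices above $p_1$ is loose anyway.
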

    Since $\bwNf(\cdot)$ is increasing and $p_1 \coloneqq Q(\bone^\top \bx) \geq Q(\bone^\top \bxbar) = \pbar$, Lemma \ref{lemma:psA-pbar-comp} implies that $$\blbar = \bwNf(\bz \pbar) \leq \bwNf(\bx p_1 + (\bz - \bx)p_2^\brLM) = \brpA.$$ Consequently, the following  holds for all $j$:
    \begin{equation}
        \begin{aligned}
            \yf_j^\brLM(\bx) & = \frac{\left(L_j - e_j - x_{j}p_1 - \brLM_j^\top \brpA \right)^+}{p_2^\brLM} \wedge \left(z_j - x_{j}\right)
            \\ & \leq \frac{\left(L_j - e_j - x_{j}p_1 - \brLM_j^\top \blbar \right)^+}{\pbar} \wedge \left(z_j - x_{j}\right)
            \\ &\leq \frac{\left(L_j - e_j - x_{j}\pbar - \brLM_j^\top \blbar \right)^+}{\pbar} \wedge \left(z_j - x_{j}\right)
            \\ & =\left( \frac{L_j - e_j - \brLM_j^\top \blbar}{\pbar}-x_j\right) \wedge \left(z_j - x_{j}\right)
            \\ & = \xbar_j-x_j,
        \end{aligned}
    \end{equation}
    where the second last equality holds since $x_j\leq\xbar_j\leq(L_j - e_j - \brLM_j^\top \blbar)/\pbar$ for all $j$. This completes the proof. 			\endproof

    \proof{Proof of Proposition \ref{prop:eff-intb}.}
        To explicitly show the dependence of $\bwNf^\brLM$ on the first-period liability repayment $\bx$, we write $\bwNf^\brLM=\bwNf^\brLM_\bx$ in this proof. We first show that
        \begin{equation}
            \bwNf^\brLM_\bx \geq \bwNf(\bcNf(\bx))~~\text{for all}~\bx \leq \bxbar.
            \label{ineq:bwNf-rl-org-comp}
        \end{equation}
         Define ${\bf \yf}^\brLM(\bx) \coloneqq (\yf_1^\brLM(\bx), \dots, \yf_m^\brLM(\bx))^\top$. Let
		$
			\bphi_0 = \bL \wedge \left(\bgao + \bx Q(\bone^\top \bx) + (\bxbar - \bx) Q\big(\bone^\top (\bxbar - \bx) \big) + \brLM^\top \bwNf^\brLM_\bx \right)
		$
		and
		$
			\bphi_{k+1} = \bL \wedge \left(\bgao + \bx Q(\bone^\top \bx) + (\bxbar - \bx) Q\big(\bone^\top (\bxbar - \bx) \big) + \brLM^\top \bphi_{k}\right) \text{ for } k = 0, 1, 2, \dots.
		$
		Then, $\bwNf^\brLM_\bx \geq \bphi_0$ since for any $j \in [m]$,%, if $\wNf_j^\brLM = L_j$, then $\wNf_j^\brLM \geq \phi_{0, j}$ by the definition of $\bphi_0$. Otherwise, if $\wNf_j^\brLM < L_j$, $\yf_j^\brLM(\bx) = z_j - x_j$ by the definition of $\yf_j^\brLM(\bx)$, and thus
        $$
		\begin{aligned}
			\wNf_{\bx, j}^\brLM & = L_j \wedge \left(e_j + x_j Q(\bone^\top \bx) + (z_j - x_j) Q\big(\bone^\top  {\bf \yf}^\brLM(\bx)\big) + \brLM_j^\top \bwNf^\brLM_\bx \right)
			\\ & \geq L_j \wedge \left(e_j + x_j Q(\bone^\top \bx) + (\xbar_j - x_j) Q\big(\bone^\top  (\bxbar - \bx)\big) + \brLM_j^\top \bwNf^\brLM_\bx \right)
			\\ & = \phi_{0, j},
		\end{aligned}
        $$
        where the inequality follows from Lemma \ref{lemma:yf-ub-intb} and the fact that $\bxbar \leq \bz$. Therefore, $\bwNf^\brLM_\bx \geq \bphi_0$. Furthermore, using a similar argument, it can be shown that if $\bwNf^\brLM_\bx \geq \bphi_k$ for some $k \geq 0$, then $\bwNf^\brLM_\bx \geq \bphi_{k+1}$. %either $\wNf_j^\brLM = L_j \geq \phi_{k+1, j}$ or
		% $$
  %       \begin{aligned}
		% 	\wNf_j^\brLM & = L_j \wedge \left[e_j + x_j Q(\bone^\top \bx) + (z_j - x_j) Q\left(\bone^\top  {\bf \yf}^\brLM(\bx)\right) + \brLM_j^\top \bwNf^\brLM \right]
		% 	\\ & \geq L_j \wedge \left[e_j + x_j Q(\bone^\top \bx) + (\xbar_j - x_j) Q\left(\bone^\top  (\bxbar - \bx)\right) + \brLM_j^\top \bphi_k \right]
		% 	\\ & = \phi_{k+1, j}.
		% \end{aligned}
  %       $$
		Hence, $\bwNf_\bx^\brLM \geq \bphi_k$ for all $k$ by induction, which suggests \eqref{ineq:bwNf-rl-org-comp} since $\bphi_k \rightarrow \bwNf(\bcNf(\bx))$ as $k$ increases.

        Next, we prove that $\lrLM_i(x_i, \bxbar_{-i}) \leq L_i$ for any $x_i \leq \xbar_i$. Define a vector $\bu \in \bbR^m$ with $u_i = x_i$ and $\bu_{-i} = \bxbar_{-i} $. Note that $\yf_j^\brLM(\bu) = 0$ for all $j\neq i$ by Lemma~\ref{lemma:yf-ub-intb}. If $\yf_i^\brLM(\bu) > 0$, then $L_i - e_i - x_i Q(\bone^\top \bu) - \brLM^\top \bwNf^\brLM_\bu > 0$ by \eqref{eq:psij-rLM}, %by the definition of $\yf_i^\brLM(\cdot)$, we have $(L_i - e_i - x_i Q(\bone^\top \bu) - \brLM^\top \bwNf(\bc^\brLM(\bu)))^+ > 0$, implying that $L_i - e_i - x_i Q(\bone^\top \bu) - \brLM^\top \bwNf(\bc^\brLM(\bu)) > 0$,
        and hence,
        \begin{equation}
            \begin{aligned}
                \lrLM_i(x_i, \bxbar_{-i}) & = e_i + x_i Q(\bone^\top \bu) + \yf_i^\brLM(\bu)Q(\yf_i^\brLM(\bu)) + \brLM_i^\top \bwNf^\brLM_\bu
                \\ & = \Big(L_i \vee \big(e_i + x_i Q(\bone^\top \bu) + \brLM^\top \bwNf^\brLM_\bu \big) \Big) \wedge \Big( e_i + x_i Q(\bone^\top \bu) + (z_i - x_i) Q(\yf_i^\brLM(\bu)) + \brLM_i^\top \bwNf^\brLM_\bu \Big)
                \\ & = L_i \wedge \Big( e_i + x_i Q(\bone^\top \bu) + (z_i - x_i) Q(\yf_i^\brLM(\bu)) + \brLM_i^\top \bwNf^\brLM_\bu \Big)
                \\ & \leq L_i,
            \end{aligned}
        \end{equation}
        where the second equality holds by \eqref{eq:psij-rLM}. If $\yf_i^\brLM(\bu) = 0$, then $\bwNf^\brLM_\bu = \bwNf(\bu Q(\bone^\top \bu))$ by \eqref{eq:p2-l-rLM} and \eqref{eq:psij-rLM}, and therefore,
        $$
        \begin{aligned}
            \lrLM_i(x_i, \bxbar_{-i}) & = e_i + x_i Q(\bone^\top \bu) + \brLM^\top \bwNf(\bu Q(\bone^\top \bu))
            \\ & \leq e_i + x_i Q(\bone^\top \bu) + \brLM_i^\top \bwNf(\bu Q(\bone^\top \bu)) - \left(e_i + \xbar_i \pbar + \brLM_i^\top \blbar \right) + L_i
            \\ & = x_i Q(\bone^\top \bu) - \xbar_i Q(\bone^\top \bxbar) + \brLM_i^\top \big(\bwNf(\bu Q(\bone^\top \bu)) - \bwNf(\bxbar Q(\bone^\top \bxbar)) \big) + L_i
            \\ & \leq x_i Q(\bone^\top \bu) - \xbar_i Q(\bone^\top \bxbar) + \brLM_i^\top \Big(\bwNf\big((\bu Q(\bone^\top \bu)) \vee (\bxbar Q(\bone^\top \bxbar)) \big) - \bwNf(\bxbar Q(\bone^\top \bxbar)) \Big) + L_i
            \\ & \leq x_i Q(\bone^\top \bu) - \xbar_i Q(\bone^\top \bxbar) + \big\|(\bu Q(\bone^\top \bu)) \vee (\bxbar Q(\bone^\top \bxbar)) -  \bxbar Q(\bone^\top \bxbar) \big\|_1
            \\ & = x_i Q(\bone^\top \bu) - \xbar_i Q(\bone^\top \bxbar) + (\bone^\top \bxbar_{-i})\big(Q(\bone^\top \bu) - Q(\bone^\top \bxbar)\big) + L_i
            \\ & = (\bone^\top \bu) Q(\bone^\top \bu) - (\bone^\top \bxbar) Q(\bone^\top \bxbar) + L_i
            \\ & \leq L_i,
        \end{aligned}
        $$
        where the first inequality comes from the definition of $\xbar_i$ and the assumption that $\bxbar>\bzero$, the second inequality holds because $\bwNf(\cdot)$ is increasing, the third inequality follows since $\bwNf(\cdot)$ is non-expansive and $\brLM_{i} < \bone$, the third equality is satisfied as $\bu_{-i} Q(\bone^\top \bu) = \bxbar_{-i} Q(x_i + \bone^\top \bxbar_{-i}) \geq \bxbar_{-i} Q(\bone^\top \bxbar)$ by Assumption~\ref{assump1}(ii) and $u_i Q(\bone^\top \bu) = x_i Q(x_i + \bone^\top \bxbar_{-i}) \leq \xbar_i Q(\bone^\top \bxbar)$ by \eqref{useful-ineq-x}, and the last inequality stems from Assumption \ref{assump1}(iii).

		Finally, we demonstrate the two statements. Define $\bv \in \bbR^m$ where $v_i = x_i$ and $\bv_{-i} = \argmin_{\bx_{-i} \in [\bzero, \bxst_{-i}]} \lrLM_i(x_i, \bx_{-i})$. Note that $\lrLM_i(x_i, \bv_{-i}) \leq \lrLM_i(x_i, \bxbar_{-i}) \leq L_i$. %It suffices to show that either $\ell_i^\brLM(x_i, \bv_{-i}^\brLM) = L_i$ or $\cNf_i(\bv) \leq c_i^\brLM(\bv)$ if $\ell_i^\brLM(x_i, \bv_{-i}^\brLM) < L_i$, which implies that $\ltN_i(x_i, \bv_{-i}^\brLM) \leq \ell_i^\brLM(x_i, \bv_{-i}^\brLM)$ by \eqref{ineq:bwNf-rl-org-comp}.
        Assume that $\lrLM_i(x_i, \bv_{-i}) < L_i$. Then, by \eqref{eq:lA}, we get
		\begin{equation}
			\yf_i^\brLM(\bv) = \frac{\lrLM_i(x_i, \bv_{-i}^\brLM) - e_i -x_i Q\big(\bone^\top \bv \big)- \brLM_i^\top \bwNf^\brLM_\bv}{Q\left(\sum_{j=1}^m \yf_j^\brLM(\bv) \right)} < \frac{\left(L_i - e_i-x_i Q\big(\bone^\top \bv \big) - \brLM_i^\top \bwNf^\brLM_\bv \right)^+}{Q\left(\sum_{j=1}^m \yf_j^\brLM(\bv) \right)},
		\end{equation}
		which implies that $\yf_i^\brLM(\bv) = z_i - x_i $ by \eqref{eq:psij-rLM}. Therefore, we obtain
        \begin{equation}\label{eq:lAldag}
		\begin{aligned}
			\lrLM_i(x_i, \bv_{-i}) %& = e_i + x_i Q\big(\bone^\top \bv \big) + \yf_i^\brLM(\bv) Q\Big(\sum_{j = 1}^m \yf_j^\brLM(\bv) \Big) + \brLM_i^\top \bwNf\big(\bc^\brLM(\bv) \big)
			& = e_i + x_i Q\big(\bone^\top \bv \big) + (z_i - x_i) Q\Big(\sum_{j = 1}^m \yf_j^\brLM(\bv) \Big) + \brLM_i^\top \bwNf^\brLM_\bv
			\\ & \geq e_i + x_i Q\big(\bone^\top \bv \big) + (\xbar_i - x_i) Q\Big(\sum_{j = 1}^m (\xbar_j - v_j) \Big) + \brLM_i^\top \bwNf\big(\bcNf(\bv) \big)
			\\ & = \ltN_i(x_i, \bv_{-i}),
		\end{aligned}
        \end{equation}
		where the inequality holds by \eqref{ineq:bwNf-rl-org-comp}, Lemma \ref{lemma:yf-ub-intb}, and the fact that $\xbar_i \leq z_i$.
        Consequently, $\min_{ \bx_{-i} \in [\bzero,  \bxst_{-i} ]}\ltN_i(x_i, \bx_{-i})<L_i$, which demonstrates the first statement by contrapositive.
		
		The second statement is straightforward if $\lrLM_i(x_i, \bv_{-i}^\brLM) = L_i $. Otherwise, it holds by \eqref{eq:lAldag}.
    \endproof
    % }

    \proof{Proof of Theorem \ref{thm-maximin-network}.}
        Fix $i \in [m]$. If $\xbar_i = 0$, the statement is straightforward. We assume that $\xbar_i > 0$. Note that $\ltN_i(x,  \bx_{-i}) = \ltN_i(\xbar_i - x,  \bxbar_{-i} - \bx_{-i})$ for all $x \in [0,  \xbar_i]$. Thus, it suffices to show that for all $x \in [0,  \xbar_i/2)$,
        \begin{equation}
            \min_{\bx_{-i} \in [\bzero,  \bxbar_{-i} ]} \ltN_i\left(\frac{\xbar_i}{2},  \bx_{-i} \right) > \min_{\bx_{-i} \in [\bzero,  \bxbar_{-i} ]} \ltN_i\left(x,  \bx_{-i} \right)
            \label{ltN-min-comp}
        \end{equation}
        We prove \eqref{ltN-min-comp} in the following four steps.

        \textbf{Step 1.} Let $\clbar = \sum_{j = 1}^{m}\xbar_j$ and $\sbar_{i, j} = \clbar - \xbar_i - \xbar_j$ for $j \in [m]$. For all $y \in [0,  \xbar_i/2]$, $j \neq i$, and $ \tcl \in \Zset(y) \coloneqq [y,  \clbar - \xbar_i + y]$, define
        \begin{equation}
            \begin{aligned}
                \hlf_j\left(y, \tcl \right) & \coloneqq \left(\tcl - y - \sbar_{i, j} \right)^+ \left(Q(\tcl) - Q(\clbar - \tcl) \right)
                 + \xbar_j Q(\clbar - \tcl),
            \end{aligned}
            \label{hlf-def}
        \end{equation}
        \begin{equation}
            \begin{aligned}
                \hrt_j\left(y, \tcl \right) & \coloneqq \left(\xbar_j \wedge \left(\tcl - y \right) \right) \left(Q(\tcl) - Q(\clbar - \tcl) \right)
                 + \xbar_j Q(\clbar - \tcl),
            \end{aligned}
            \label{hrt-def}
        \end{equation}
        and
        $$\Bset(y, \tcl) \coloneqq \left\{ \bc \in \bbR^{m} \Bigg|
        \renewcommand*{\arraystretch}{1.4}\begin{array}{l}
              \bone^\top \bc = \tcl \left(Q(\tcl) - Q(\clbar - \tcl)\right) + \clbar Q(\clbar - \tcl),
              c_i = y\left(Q(\tcl) - Q(\clbar - \tcl)\right) + \xbar_i Q(\clbar - \tcl),   			\nonumber
            \\  c_j \in \left[\hlf_j(y, \tcl) \wedge \hrt_j(y, \tcl),   \hlf_j(y, \tcl) \vee \hrt_j(y, \tcl) \right], ~\forall j \neq i
        \end{array}
        \right\}.$$
    %		\begin{equation}
    %			
    %		\end{equation}
        For each $y \in [0,  \xbar_i/2]$, $j \neq i$, and $\tcl \in \Zset(y)$, $(\tcl - y - \sbar_{i, j})^+ \leq \xbar_j \wedge (\tcl - y)$, and thus, we have
        \begin{equation}
            \begin{cases}
                \hlf_j(y, \tcl) \leq \hrt_j(y, \tcl),  & \text{if } \tcl \leq \clbar/2; \\
                \hlf_j(y, \tcl) > \hrt_j(y, \tcl), & \text{otherwise}.
            \end{cases}
            \label{hlf-hrt-comp}
        \end{equation}
        Observe that
        \begin{equation}
            \sum_{j \neq i} \left(\tcl - y - \sbar_{i, j} \right)^+ \leq \tcl - y \leq \sum_{j \neq i} \left(\xbar_j \wedge \left(\tcl - y \right) \right),
            \label{tcl-y-ineq}
        \end{equation}
        where the first inequality holds because if $\tcl - y > \sbar_{i, k^*}$, where $k^* \coloneqq \argmin_{k \neq i}\sbar_{i, k}$,
        $$
            \sum_{j \neq i} (\tcl - y - \sbar_{i, j})^+ \leq \tcl - y - \sbar_{i, k^*} + \sum_{j \neq i, k^*}\xbar_j = \tcl - y,
        $$
        and $\sum_{j \neq i} (\tcl - y - \sbar_{i, j})^+ = 0 \leq \tcl - y$ otherwise. Accordingly, if $\tcl > \clbar/2$, we obtain
        $$
            \sum_{j \neq i} \hlf_j(y, \tcl) \leq  (\tcl - y) \left(Q(\tcl) - Q(\clbar - \tcl) \right)
                + (\clbar - \xbar_i) Q(\clbar - \tcl)
             \leq \sum_{j \neq i} \hrt_j(y, \tcl),
        $$
        and otherwise
        $$
            \sum_{j \neq i} \hlf_j(y, \tcl) \geq   (\tcl - y) \left(Q(\tcl) - Q(\clbar - \tcl) \right)
             + (\clbar - \xbar_i) Q(\clbar - \tcl)
             \geq \sum_{j \neq i} \hrt_j(y, \tcl).
        $$
        This implies that $\Bset(y, \tcl)$ is non-empty for any $y \in [0,  \xbar_i/2]$ and $\tcl \in \Zset(y)$.

        \textbf{Step 2.} For $y \in [0,  \xbar_i/2]$, we claim that
        \begin{equation}
                \gtld_i(y)  \coloneqq \min_{\tcl \in \Zset(y)}\min_{\bc \in \Bset(y, \tcl)} \left\{e_i + c_i + \brLM_i^\top \bwNf(\bc) \right\}
                 = \min_{\bx_{-i} \in [\bzero,  \bxbar_{-i} ]} \ltN_i(y, \bx_{-i}).
            \label{gi-ltNi-eq}
        \end{equation}
        We first prove that $\gtld_i(y) \leq \min_{\bx_{-i} \in [\bzero,  \bxbar_{-i} ]} \ltN_i(y, \bx_{-i})$. For any $\bx^0$ with $x_i^0 = y$ and $\bx_{-i}^0 \in [\bzero,  \bxbar_{-i} ]$, let $\tcl^0 = \bone^\top \bx^0$ and $\bc^0 = \bcNf(\bx^0)$. Then, by \eqref{op-cash-inter}, for all $j \in [m]$,
        \begin{equation}
            c_j^0 = x_j^0 \left(Q(\tcl^0) - Q(\clbar - \tcl^0) \right) + \xbar_j Q(\clbar - \tcl^0).
            \label{cj0-eq}
        \end{equation}
        Obviously, $\tcl^0 \in \Zset(y)$. Moreover,  $(\tcl^0 - \sbar_{i, j} - y)^+ \leq x_j^0 \leq \tcl^0 - y$, which implies that for all $j \neq i$,
        $$
            \hlf_j(y, \tcl) \wedge \hrt_j(y, \tcl) \leq c_j^0 \leq \hlf_j(y, \tcl) \vee \hrt_j(y, \tcl).
        $$
        by \eqref{hlf-def}, \eqref{hrt-def} and \eqref{cj0-eq}. Therefore, we have $\bc^0 \in \Bset(y, \tcl^0)$. Consequently, since $\ltN_i(y, \bx_{-i}^0) = e_i + c_i^0 + \brLM_i^\top \bwNf(\bc^0)$ and $\gtld_i(y) \leq e_i + c_i^0 + \brLM_i^\top \bwNf(\bc^0)$, $\gtld_i(y) \leq \min_{\bx_{-i} \in [\bzero,  \bxbar_{-i} ]} \ltN_i(y, \bx_{-i})$ holds.

        We now show that $\gtld_i(y) \geq \min_{\bx_{-i} \in [\bzero,  \bxbar_{-i} ]} \ltN_i(y, \bx_{-i})$. Fix $\tcl^1 \in \Zset(y)$ and $\bc^1 \in \Bset(y, \tcl^1)$. Assume that $\tcl^1 \neq \clbar/2$. Let $\bx^1 \in \bbR^m$ with $x_i^1 = y$ and
        $$
            x_j^1 = \frac{c_j^1 - \xbar_j Q(\clbar - \tcl^1)}{Q(\tcl^1) - Q(\clbar - \tcl^1)}, \quad  j \neq i.
            \label{xj-equiv-def}
        $$
        It is easy to see that $x_j^1 \in [(\tcl^1 - \sbar_{i, j} - y)^+,   \xbar_j \wedge (\tcl^1 - y)] \subset [0, \xbar_j]$ for $j \neq i$, which implies that $\bx_{-i}^1 \in [\bzero,  \bxbar_{-i}]$. Further, it can be verified that $\bc^1 = \bcNf(\bx^1)$ by \eqref{op-cash-inter}. Otherwise, if $\tcl^1 = \clbar/2$, $\clbar/2 - y \leq \clbar - \xbar_i$ since $\tcl^1 \in \Zset(y)$. Then there exists some $\bx^1 \in [\bzero, \bxbar]$ such that $\bone^\top \bx_{-i}^1 = \clbar/2 - y$, $x_i^1 = y$, and $\cNf_j(\bx^1) = \xbar_j Q(\clbar/2) = c_j^1$ for all $j \in [m]$, implying that $\bc^1 = \bcNf(\bx^1)$.

        Hence,  $ \min_{\bx_{-i} \in [\bzero,  \bxbar_{-i} ]} \ltN_i(y, \bx_{-i}) \leq e_i + c_i^1 + \brLM_i^\top \bwNf(\bc^1)$ for any feasible $\tcl^1$ and $\bc^1$, which implies that $\min_{\bx_{-i} \in [\bzero,  \bxbar_{-i} ]} \ltN_i(y, \bx_{-i}) \leq \gtld_i(y)$. As a consequence, \eqref{gi-ltNi-eq} follows.

        \textbf{Step 3.} We first fix
        \begin{equation}
            \tlb \in \argmin_{\tcl \in \Zset\left(\xbar_i/2\right)} \left\{\min_{\cvec \in \Bset\left(\xbar_i/2,   \tcl\right)} \left\{e_i + c_i + \brLM_i^\top \bwNf(\cvec) \right\} \right\},
            \label{tlb-def}
        \end{equation}
        and
        \begin{equation}
            \bctl \in \argmin_{\cvec \in \Bset\left(\xbar_i/2,  \tlb\right)}\left\{e_i + c_i + \brLM_i^\top \bwNf(\cvec) \right\}\subset\Bset\left(\xbar_i/2,  \tlb\right).
            \label{bctl-def}
        \end{equation}
        For any $\bc^1 \in \Bset(\xbar_i/2,  \tcl)$ and $\bc^2 \in \Bset(\xbar_i/2,  \clbar - \tcl)$, we have $\bone^\top \bc^1 = \bone^\top \bc^2$ and $c_i^1 = c_i^2$. Also, note that
        $$\begin{aligned}
            \hlf_j\left(\frac{\xbar_i}{2},  \tcl\right)
            & = \left(\tcl - \sbar_{i, j} - \frac{\xbar_i}{2}\right)^+ \left(Q(\tcl) - Q(\clbar - \tcl) \right)
            + \xbar_j Q(\clbar - \tcl)
            \\ & = \left(\xbar_j - (\clbar - \tcl) + \frac{\xbar_i}{2} \right)^+ \left(Q(\tcl) - Q(\clbar - \tcl) \right)
            + \xbar_j Q(\clbar - \tcl)
            \\ & = \left(\xbar_j \wedge \left(\clbar - \tcl - \frac{\xbar_i}{2}\right) \right) \left(Q(\clbar - \tcl) - Q(\tcl) \right)
             + \xbar_j Q(\tcl)
            \\ & = \hrt_j\left(\frac{\xbar_i}{2},  \clbar - \tcl\right).
        \end{aligned}$$
        Thus, $\Bset(\xbar_i/2,  \tcl) = \Bset(\xbar_i/2,  \clbar - \tcl)$ by definition. Hence, without loss of generality, we assume that
        \begin{equation}
            \tlb \leq \clbar/2.
            \label{tlb-assump}
        \end{equation}

        Fix $x \in [0, \xbar_i/2)$ and let
        \begin{equation}
            \clst = \begin{cases}
                \tlb,  & \text{if } 0 \leq \tlb \leq (\clbar - \xbar_i)/2 + x;  \\
                (\clbar - \xbar_i)/2 + x,  & \text{if } (\clbar - \xbar_i)/2 + x < \tlb \leq \clbar/2.
            \end{cases}
            \label{clst-def}
        \end{equation}
        Since $\tlb \in \Zset(\xbar_i/2)$ by \eqref{tlb-def}, it is easy to check that
        \begin{equation}
            x + \left(\tlb - \frac{\xbar_i}{2}\right) \leq \clst \leq \left(\frac{\clbar - \xbar_i}{2} + x \right) \wedge \tlb.
            \label{clst-ineq}
        \end{equation}

        We next define
        \begin{equation}
            \ettl \coloneqq \tlb Q(\tlb) + (\clbar - \tlb) Q(\clbar - \tlb),
            \label{ettl-def}
        \end{equation}
        and
        \begin{equation}
            \etst \coloneqq \clst Q(\clst) + (\clbar - \clst) Q(\clbar - \clst).
            \label{etst-def}
        \end{equation}
        Since $\clst \leq \tlb \leq \clbar - \tlb \leq \clbar - \clst$ and the mapping $x \mapsto xQ(x)$ is concave by \eqref{useful-ineq-xx}, we obtain
        \begin{equation}
            \etst \leq \ettl.
            \label{csum-lb-concav}
        \end{equation}

        Define $\lid \coloneqq \argmax_{j \neq i}\xbar_j$. By \eqref{clst-ineq}, for all $j \neq i, \lid$, we have $(\tlb - \sbar_{i, j} - \xbar_i/2)^+ \leq (\clst - \sbar_{i, j} - x)^+ \leq (\xbar_j/2 - \sbar_{i, j}/2)^+$. Hence, $\hlf_j(\xbar_i/2,  \tlb) = \xbar_j Q(\clbar - \tlb)$ and $\hlf_j(x, \clst) = \xbar_j Q(\clbar - \clst)$ for all $j \neq i, \lid$. This implies that
        \begin{equation}
            \hlf_j(x, \clst) \leq \hlf_j\left(\frac{\xbar_i}{2},  \tlb \right), \quad  \text{for all } j \neq i, \lid,
            \label{cj-lb-concav}
        \end{equation}
        since $\clst \leq \tlb$ and $Q(x)$ is decreasing in $x$ by Assumption \ref{assump1}(ii). Further, we claim that
        \begin{equation}
            \hrt_j\left(\frac{\xbar_i}{2},  \tlb \right) \leq \hrt_j(x, \clst), \quad  \text{for all } j \neq i.
            \label{cj-ub-decreQ}
        \end{equation}
        For each $j \neq i$, we observe the following relationship
        \begin{equation}
        \begin{aligned}
            & \hrt_j\left(x, \clst\right) - \hrt_j\left(\frac{\xbar_i}{2},  \tlb \right)			\nonumber
            \\ &= \left(\xbar_j \wedge \left(\clst - x \right) \right) \left(Q\left(\clst \right) - Q\left(\clbar - \clst \right) \right)	\notag
             + \xbar_j \left(Q(\clbar - \clst) - Q(\clbar - \tlb) \right)		\nonumber
             \\&\qquad- \left(\xbar_j \wedge \left(\tlb - \frac{\xbar_i}{2} \right) \right) \left(Q\left(\tlb\right) - Q\left(\clbar - \tlb\right) \right)				\nonumber
            \\ &\geq \left(\xbar_j \wedge \left(\clst - x \right) \right) \left(Q\left(\clst \right) - Q\left(\tlb\right) \right)
            + \left(\xbar_j - \clst + x \right)^+ \left(Q\left(\clbar - \clst \right) - Q\left(\clbar - \tlb\right) \right) 	\notag
            \\ &\geq \left(\xbar_j \wedge \left(\clst - x \right) - \left(\xbar_j - \clst + x \right)^+ \right)	 \left(Q\left(\clst \right) - Q\left(\tlb\right) \right),			
            \label{hrt-comp-st-tl}
        \end{aligned}
        \end{equation}
        where the first inequality holds by \eqref{clst-ineq} and the fact that $Q(\tlb) \geq Q(\clbar - \tlb)$, and the second inequality is obtained since $Q(\clbar - \clst) - Q(\clbar - \tlb) \geq Q(\tlb) - Q(\clst)$ by Assumption \ref{assump1}(ii) and (iv). Moreover, the right-hand side of the second inequality is nonnegative because $\clst = \tlb $ if $\clst < x + (\clbar - \xbar_i)/2$ and $\xbar_j \wedge (\clst - x ) \geq (\xbar_j - \clst + x )^+$ otherwise. This establishes \eqref{cj-ub-decreQ}.

        Define
        \begin{equation}
            d_i \coloneqq x Q(\clst) + (\xbar_i - x) Q(\clbar - \clst),
            \label{di-def}
        \end{equation}
        and
        \begin{equation}
            d_\lid \coloneqq \hlf_\lid(x, \clst) \vee \ctl_\lid.
            \label{dl-def}
        \end{equation}
        Then by \eqref{etst-def} and \eqref{di-def}, we can obtain that
        \begin{equation}
            \begin{aligned}
                & \etst - d_i
                \\ & = \left(\clst - x \right) \left(Q(\clst) - Q(\clbar - \clst) \right) + (\clbar - \xbar_i) Q(\clbar - \clst)		
                \\ & = \left(\left(\clst - \sbar_{i, \lid} - x \right)^+ + \sbar_{i, \lid} \wedge \left(\clst - x \right) \right) \left(Q(\clst) - Q(\clbar - \clst) \right)
                + \left(\xbar_\lid + \sbar_{i, \lid} \right) Q(\clbar - \clst)			
                \\ & = \hlf_\lid(x, \clst) + \sbar_{i, \lid} Q(\clbar - \clst)	
                + \left(\sbar_{i, \lid} \wedge \left(\clst - x \right)\right) \left(Q(\clst) - Q(\clbar - \clst) \right)		
                \\ &   \leq \hlf_\lid(x, \clst) + \sbar_{i, \lid} Q(\clbar - \clst)	
                + \sum_{j \neq i, \lid}\left(\xbar_j \wedge \left(\clst - x \right) \right) \left(Q(\clst) - Q(\clbar - \clst) \right)
                \\ & = \hlf_\lid(x, \clst) + \sum_{j \neq i, \lid} \hrt_j(x, \clst).
            \end{aligned}
            \label{etst-di-ub}
        \end{equation}
        Furthermore, since
        $$
            \begin{aligned}
                \sum_{j \neq i, \lid}(\clst - \sbar_{i, j} - x)^+ & \leq \sum_{j \neq i, \lid}\left(\clht - \sbar_{i, j} - x \right)^+ + \clst - \clht
                 = \clst - x - \xbar_\lid
            \end{aligned}
        $$
        and $\clht = \clbar - \xbar_i + x \geq \clst$, we have
        \begin{equation}
            \begin{aligned}
                & \etst - d_i
                \\ & = \left(\clst - x \right) \left(Q(\clst) - Q(\clbar - \clst) \right) + (\clbar - \xbar_i) Q(\clbar - \clst)				
                \\ & = \left[\xbar_\lid \wedge (\clst - x) + \left(\clst - x - \xbar_\lid \right)^+ \right] \left(Q(\clst) - Q(\clbar - \clst) \right)
                + \left(\xbar_\lid + \sbar_{i, \lid} \right) Q(\clbar - \clst)
                \\ & = \hrt_\lid(x, \clst) + \sbar_{i, \lid} Q(\clbar - \clst)	
                + \left(\clst - x - \xbar_\lid \right)^+ \left(Q(\clst) - Q(\clbar - \clst)  \right)		
                \\ &   \geq \hrt_\lid(x, \clst) + \sbar_{i, \lid} Q(\clbar - \clst)
                + \sum_{j \neq i, \lid}\left(\clst - \sbar_{i, j} - x \right)^+ \left(Q(\clst) - Q(\clbar - \clst)  \right)
                \\ & = \hrt_\lid(x, \clst) + \sum_{j \neq i, \lid} \hlf_j(x, \clst).
            \end{aligned}
            \label{etst-di-lb}
        \end{equation}

        % Lastly, we establish the following result:
        % \begin{lemma}
        %     \label{lemma-B1}
        %     For fixed $i \in [m]$, let $\lid = \argmax_{j \neq i} \xbar_j$. Define $\bctl$, $d_i$, and $d_\lid$ as in \eqref{bctl-def}, \eqref{di-def}, and \eqref{dl-def}, respectively. If \eqref{clst-def} holds, then $\ctl_i - d_i > 0 $ and
        %         $\ctl_i - d_i \geq d_\lid - \ctl_\lid.$
        % \end{lemma}
        % The proof of the lemma is deferred to the end of this section.

        \textbf{Step 4.}  Recall that $d_i$ and $d_\lid$ are defined in \eqref{di-def} and \eqref{dl-def}, respectively. We define
        \begin{equation}
            \etdlt \coloneqq \etst - d_i - d_\lid - \sum_{j \neq i, \lid} \ctl_j,
            \label{etdlt-def}
        \end{equation}
        and for all $j \neq i, \lid$,
        \begin{equation}
            d_j \coloneqq \begin{cases}
                \ctl_j + \dfrac{\hrt_j(x, \clst) - \ctl_j}{\sum_{k \neq i, \lid}(\hrt_k(x, \clst) - \ctl_k)}\etdlt,  & \text{if } \etdlt > 0; \\[15pt]
                \ctl_j + \dfrac{\ctl_j - \hlf_j(x, \clst)}{\sum_{k \neq i, \lid}(\ctl_k - \hlf_k(x, \clst))}\etdlt,  & \text{otherwise}.
            \end{cases}
            \label{dj-def}
        \end{equation}
        We first claim that
        \begin{equation}
            \bd = (d_1, \dots, d_m)^\top \in \Bset(x, \clst).
            \label{bd-fsb}
        \end{equation}
        Observe that
        $$
            \bone^\top \bd = \sum_{j \neq i, \lid} d_j + d_i + d_\lid = \sum_{j \neq i, \lid} \ctl_j + \etdlt + d_i + d_\lid = \etst.
        $$
        Furthermore, $\clst \leq \clbar/2$ by \eqref{clst-def}. Thus, it is enough to show that $\hlf_j(x, \clst) \leq d_j \leq \hrt_j(x, \clst)$ for all $j \neq i$. Note that $d_\lid \geq \hlf_\lid(x, \clst)$ by \eqref{dl-def}. Also, since $\bctl \in \Bset(\xbar_i/2, \tlb)$, \eqref{cj-lb-concav} and \eqref{cj-ub-decreQ} imply that
        \begin{equation}
            \ctl_\lid \leq \hrt_\lid(x, \clst),
            \label{ctll-ub-clst}
        \end{equation}
        and
        \begin{equation}
            \hlf_j(x, \clst) \leq \ctl_j \leq \hrt_j(x, \clst), \quad  \text{for all } j \neq i, \lid.
            \label{ctlj-lb-ub-clst}
        \end{equation}
        Additionally, by \eqref{dl-def}, \eqref{etst-di-ub}, \eqref{etst-di-lb}, and \eqref{ctll-ub-clst}, it is straightforward to see that
        \begin{equation}
            \begin{cases}
                \etdlt \leq \sum_{k \neq i, \lid}(\hrt_k(x, \clst) - \ctl_k),  & \text{if } \etdlt > 0; \\
                \etdlt \geq -\sum_{k \neq i, \lid}(\ctl_k - \hlf_k(x, \clst)),  & \text{otherwise.}
            \end{cases}
            \label{etdlt-ineq}
        \end{equation}
        Accordingly, $\hlf_j(x, \clst) \leq d_j \leq \hrt_j(x, \clst)$ for all $j \neq i$, and hence, \eqref{bd-fsb} holds.

        Next, we want to show that
        \begin{equation}
            \ctl_i + \brLM_i^\top \bwNf(\bctl) > d_i + \brLM_i^\top \bwNf(\bd).
            \label{d-c-obj-comp}
        \end{equation}
        To that end, we use the following lemma whose proof is deferred to Appendix~\ref{app:auxiliary}.
        \begin{lemma}
            \label{lemma-B1}
            For fixed $i \in [m]$, let $\lid = \argmax_{j \neq i} \xbar_j$. Define $\bctl$, $d_i$, and $d_\lid$ as in \eqref{bctl-def}, \eqref{di-def}, and \eqref{dl-def}, respectively. If \eqref{clst-def} holds, then $\ctl_i - d_i > 0 $ and
                $\ctl_i - d_i \geq d_\lid - \ctl_\lid.$
        \end{lemma}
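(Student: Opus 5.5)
The plan is to prove both inequalities by direct computation with the explicit formulas. Throughout I use $\clst\le\tlb\le\clbar/2$ from \eqref{tlb-assump} and \eqref{clst-def}, the bounds \eqref{clst-ineq}, and $\epsilon\coloneqq\xbar_i/2-x>0$.

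\textbf{First inequality.} Since $\bctl\in\Bset(\xbar_i/2,\tlb)$, the definition of $\Bset$ gives $\ctl_i=\tfrac{\xbar_i}{2}\big(Q(\tlb)+Q(\clbar-\tlb)\big)$. Comparing with \eqref{di-def}, I split
\begin{equation*}
\ctl_i-d_i=\tfrac{\xbar_i}{2}\big[Q(\tlb)-Q(\clst)\big]+\tfrac{\xbar_i}{2}\big[Q(\clbar-\tlb)-Q(\clbar-\clst)\big]+\epsilon\big[Q(\clst)-Q(\clbar-\clst)\big].
\end{equation*}
I then treat the two branches of \eqref{clst-def} separately.

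If $\clst=\tlb$, the first two brackets vanish. The third is strictly positive unless $\tlb=\clbar/2$. That case cannot occur here, since it would require $\clbar/2\le(\clbar-\xbar_i)/2+x$, i.e.\ $x\ge\xbar_i/2$.

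If $\clst=(\clbar-\xbar_i)/2+x<\tlb$, then $\clbar-2\clst=2\epsilon$ and, by \eqref{clst-ineq}, $\tlb-\clst\le\epsilon$. By convexity of $Q$ (Assumption~\ref{assump1}(iv)), the first two brackets together are at least $-\tfrac{\xbar_i}{2}(\tlb-\clst)\big(|Q'(\clst)|-|Q'(\clbar-\tlb)|\big)$, because the loss on $[\clst,\tlb]$ is partially offset by the gain on $[\clbar-\tlb,\clbar-\clst]$, where $|Q'|$ is smaller. The third term equals $\epsilon\big(Q(\clst)-Q(\clst+2\epsilon)\big)$.

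To show the positive term dominates, I would rewrite everything through the revenue function $f(u)\coloneqq uQ(u)$. Note that $\clst+(\clbar-\clst)=\tlb+(\clbar-\tlb)=\clbar$. I then aim to reduce $\ctl_i-d_i>0$ to a symmetric-spread comparison, namely $\etst\le\ettl$ from \eqref{csum-lb-concav} together with its strict form, and invoke the concavity of $f$ (\eqref{useful-ineq-xx}) and Assumption~\ref{assump1}(v). If that reduction does not close, I would fall back on a mean-value argument in $\epsilon$: the derivative of $\ctl_i-d_i$ along the segment $x\mapsto$ (with $\clst$ moving at the same rate) has a sign controlled by \eqref{useful-ineq-xy}.

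\textbf{Second inequality.} By \eqref{dl-def}, if $d_\lid=\ctl_\lid$, the claim follows from the first part. Otherwise $d_\lid=\hlf_\lid(x,\clst)$. Since $\ctl_\lid\ge\hlf_\lid(\xbar_i/2,\tlb)$ whenever $\tlb\le\clbar/2$ (by \eqref{hlf-hrt-comp} and the definition of $\Bset$), it suffices to show
\begin{equation*}
\hlf_\lid(x,\clst)-\hlf_\lid(\xbar_i/2,\tlb)\le\ctl_i-d_i .
\end{equation*}
Write $a\coloneqq(\clst-x-\sbar_{i,\lid})^+$ and $b\coloneqq(\tlb-\xbar_i/2-\sbar_{i,\lid})^+$. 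By \eqref{clst-ineq}, $a\le b+\epsilon$. The left-hand side then expands to
\begin{equation*}
a\big(Q(\clst)-Q(\clbar-\clst)\big)-b\big(Q(\tlb)-Q(\clbar-\tlb)\big)+\xbar_\lid\big(Q(\clbar-\clst)-Q(\clbar-\tlb)\big).
\end{equation*}
I would compare this term by term with the three-term expression for $\ctl_i-d_i$ above, using $a\le\xbar_\lid$ and the fact that $\lid$ maximizes $\xbar_j$ over $j\neq i$. In the branch $\clst=\tlb$, the difference is $(a-b)\big(Q(\clst)-Q(\clbar-\clst)\big)\le\epsilon\big(Q(\clst)-Q(\clbar-\clst)\big)$, which is exactly the third term, so the bound is immediate.

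\textbf{Main obstacle.} I expect the main difficulty in both inequalities to be the branch $\clst=(\clbar-\xbar_i)/2+x<\tlb$. There the first-period and second-period price gaps pull in opposite directions, and a careful use of convexity of $Q$ (Assumption~\ref{assump1}(iv)) and concavity of $f(u)=uQ(u)$ (Assumption~\ref{assump1}(iii), (v)) is needed. For this branch I would first attempt to express both sides through $f$ evaluated at symmetric pairs summing to $\clbar$, so that a single application of concavity finishes it.
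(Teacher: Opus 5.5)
Your treatment of the branch $\clst=\tlb$ is correct for both inequalities. So is your reduction of the second inequality to $\hlf_\lid(x,\clst)-\hlf_\lid(\xbar_i/2,\tlb)\le\ctl_i-d_i$; this matches the paper. The gap is the branch $\clst=(\clbar-\xbar_i)/2+x<\tlb$, which carries all the content of the lemma, and there you only announce strategies.

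The one estimate you state in that branch is wrong. Convexity gives $Q(\clbar-\tlb)-Q(\clbar-\clst)\ge(\tlb-\clst)|Q'(\clbar-\clst)|$, whereas $(\tlb-\clst)|Q'(\clbar-\tlb)|$ is an upper bound. Even corrected, this estimate cannot finish the proof. The convexity defect is weighted by $\xbar_i/2$, but the positive term is weighted only by your $\epsilon=\xbar_i/2-x$. The resulting sufficient condition already fails for $|Q'(u)|$ close to $c/u$, which Assumption~\ref{assump1}(v) allows, with $\tlb=\clbar/2$ and $\clbar-\xbar_i$ small.

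The missing observation is that in this branch the other banks sell the same amount $\sigma\coloneqq(\clbar-\xbar_i)/2$ in each period, i.e.\ $\clst-x=\clbar-\clst-(\xbar_i-x)=\sigma$. Set $\delta\coloneqq\tlb-\clst\in(0,\epsilon]$ and $f(p,\sigma)=pQ(p+\sigma)$. The paper regroups
\begin{equation*}
\ctl_i-d_i=\big[f(x+\delta,\sigma)-f(x,\sigma)\big]-\big[f(\xbar_i-x,\sigma)-f(\xbar_i-x-\delta,\sigma)\big]+(\epsilon-\delta)\big(Q(\tlb)-Q(\clbar-\tlb)\big).
\end{equation*}
Strict concavity of $p\mapsto f(p,\sigma)$, applied to two intervals of length $\delta$ on either side of $\xbar_i/2$, then gives positivity. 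So the sign is controlled by \eqref{useful-ineq-xx}, not by \eqref{useful-ineq-xy} as your fallback suggests.

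Your primary plan can in fact be completed, and it would be a genuinely different route:
\begin{equation*}
\ctl_i-d_i=(\ettl-\etst)+\sigma\big[Q(\clst)+Q(\clbar-\clst)-Q(\tlb)-Q(\clbar-\tlb)\big]+(\epsilon-\delta)\big(Q(\tlb)-Q(\clbar-\tlb)\big).
\end{equation*}
The last two terms are nonnegative by convexity of $Q$ and $\tlb\le\clbar/2$. Moreover $\ettl>\etst$ strictly, since $u\mapsto uQ(u)$ is strictly concave and $\clst<\tlb\le\clbar/2$. But this identity is the whole argument, and it is not in your proposal.

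The second inequality in this branch is also left open, and a term-by-term comparison with $\ctl_i-d_i$ cannot settle it. Here $\clst-x-\sbar_{i,\lid}=(\xbar_\lid-\sbar_{i,\lid})/2$. So when $\xbar_\lid>\sbar_{i,\lid}$, your $a$ equals $(\xbar_\lid-\sbar_{i,\lid})/2$, which can be far larger than $\xbar_i$. After cancellation, the required inequality has the same form as the first one, but the nonnegative defect $[Q(\clst)-Q(\tlb)]-[Q(\clbar-\tlb)-Q(\clbar-\clst)]$ is weighted by $\xbar_i/2+a$ instead of $\xbar_i/2$. Hence it does not follow from the first part.

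The idea you need is to merge bank $i$ with the $\hlf_\lid$-allocation of bank $\lid$. With $\vthst\coloneqq x+(\clst-x-\sbar_{i,\lid})^+$ and $\vthtl\coloneqq\xbar_i/2+(\tlb-\xbar_i/2-\sbar_{i,\lid})^+$,
\begin{align*}
&\ctl_i+\hlf_\lid\big(\tfrac{\xbar_i}{2},\tlb\big)-d_i-\hlf_\lid(x,\clst)\\
&\qquad=\vthtl\big(Q(\tlb)-Q(\clbar-\tlb)\big)-\vthst\big(Q(\clst)-Q(\clbar-\clst)\big)+(\xbar_i+\xbar_\lid)\big(Q(\clbar-\tlb)-Q(\clbar-\clst)\big).
\end{align*}
Now split $\xbar_\lid=(\xbar_\lid-\sbar_{i,\lid})^++\xbar_\lid\wedge\sbar_{i,\lid}$. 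The second piece contributes a nonnegative term. What remains is the first-part argument for a single seller of size $\xbar_i+(\xbar_\lid-\sbar_{i,\lid})^+$. This seller sells $\vthst$ at $t=1$, exactly $\epsilon$ below half, with $\vthtl\ge\vthst+\delta$, and the residual volume is again split equally between the periods. The maximality of $\lid$ is not used in this lemma.
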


        If $\bwNf(\bctl) = \bwNf(\bd)$, \eqref{d-c-obj-comp} holds by Lemma \ref{lemma-B1}. Otherwise, let $\bcht \coloneqq \bd \wedge \bctl$. Then, $\cht_\lid = \ctl_\lid$ by \eqref{dl-def}, $\cht_i = d_i$ by Lemma \ref{lemma-B1}, and for all $j \neq i, \lid$,
        \begin{equation}
            \cht_j = \begin{cases}
                \ctl_j, \quad \quad & \text{if } \etdlt > 0; \\
                d_j, \quad \quad & \text{otherwise.}
            \end{cases}
            \label{chtj-def}
        \end{equation}
        by \eqref{dj-def} and \eqref{ctlj-lb-ub-clst}. Therefore, we have
        \begin{align}
            & \ctl_i + \brLM_i^\top \bwNf(\bctl) - \left(d_i + \brLM_i^\top \bwNf(\bd) \right)		\notag
            \\ & = \ctl_i - d_i + \brLM_i^\top \left(\bwNf(\bctl) - \bwNf(\bd) \right)			\nonumber
            \\ & \geq \ctl_i - d_i + \brLM_i^\top \left(\bwNf(\bcht) - \bwNf(\bd) \right)
            \label{gtld-diff-chat}
            \\ & > \ctl_i - d_i - \left\|\bwNf(\bcht) - \bwNf(\bd) \right\|_1
            \label{gtld-diff-noA}
            \\ & \geq \ctl_i - d_i - \left\|\bcht - \bd \right\|_1
            \label{gtld-diff-nonexp}
            \\ & = \ctl_i - d_i + \sum_{j \neq i} \left(\cht_j - d_j \right), 		\nonumber
            \\ & = \begin{cases}
                \ettl - \etst, \quad \quad & \text{if } \etdlt > 0; \\
                \ctl_i - d_i - (d_\lid - \ctl_\lid), \quad \quad & \text{otherwise},
            \end{cases}
            \label{gtld-diff-sum}
        \end{align}
        where \eqref{gtld-diff-chat} holds since $\brLM_i$ is nonnegative and $\bwNf(\cdot)$ is increasing~\citep[see][Lemma 5]{EisenbergNoe2001}, \eqref{gtld-diff-noA} is satisfied since $A_{ki} \leq \sum_{j = 1}^{m}A_{kj}  < 1$ for all $k \in [m]$,  \eqref{gtld-diff-nonexp} stems from the fact that $\bwNf(\cdot)$ is $\ell_1$-nonexpansive~\citep[see][Lemma 5]{EisenbergNoe2001}, and \eqref{gtld-diff-sum} follows since we have $\bone^\top \bctl = \ettl$ and $\bone^\top \bd = \etst$. Consequently, by \eqref{csum-lb-concav} and Lemma \ref{lemma-B1}, \eqref{d-c-obj-comp} follows.

        Note that $\tlb \geq \xbar_i/2$ since $\tlb \in \Zset(\xbar_i/2)$. Thus, $\clst \in \Zset(x)$ by \eqref{clst-ineq}, and we finally have
        \begin{align}
            \gtld_i\left(\frac{\xbar_i}{2} \right) & = e_i + \ctl_i + \brLM_i^\top \bwNf(\bctl)
            \label{gtldi-eq}
            \\ & > e_i + d_i + \brLM_i^\top \bwNf(\bd)			\nonumber
            \\ & \geq \min_{\tcl \in \Zset(x)} \min_{\cvec \in \Bset(x, \tcl)} \left\{e_i + c_i + \brLM_i^\top \bwNf(\cvec) \right\}
            \label{di-min-Bset-ineq}
            \\ & = \gtld_i(x),
            \label{gtld-x-eq}
        \end{align}
        where \eqref{gtldi-eq} stems from \eqref{gi-ltNi-eq}, \eqref{tlb-def} and \eqref{bctl-def}, and \eqref{di-min-Bset-ineq} holds since $\clst \in \Zset(x)$ and $\bd \in \Bset(x, \clst)$. Then, by \eqref{gi-ltNi-eq}, we have \eqref{ltN-min-comp}, and the proof is complete.
    \endproof

\section{Proofs for Auxiliary Results}\label{app:auxiliary}
    \proof{Proof of Lemma \ref{lemma:psA-pbar-comp}.}
    Assume by contradiction that $\pbar > p_2^\brLM$. For $k=0,1,2,\ldots,$ define
    \begin{equation}
        \rho_{k+1}^\dagger = Q\left(\sum_{j = 1}^{m} \frac{\left(L_j - e_j - \brLM^\top \blambda_k^\dagger\right)^+}{\rho_k^\dagger } \wedge z_j \right)
        ~~\text{and}~~
        \blambda_{k+1}^\dagger = \bL \wedge \left(\bgao + \bz \rho_k^\dagger + \brLM^\top \blambda_k^\dagger \right),
    \end{equation}
    where
    \begin{equation}
            \rho_0^\dagger \coloneqq Q\left(\sum_{j = 1}^{m} \frac{\left(L_j - e_j - \brLM^\top \brpA\right)^+}{p_2^\brLM } \wedge z_j \right)
            ~~\text{and}~~
            \blambda_0^\dagger \coloneqq \bL \wedge \left(\bgao + \bz \rho_0^\dagger + \brLM^\top \brpA \right).
    \end{equation}
    Fix $\bx\leq\bxbar$. Then, $p_1 \coloneqq Q(\bone^\top \bx) \geq Q(\bone^\top \bxbar) = \pbar > p_2^\brLM$. Moreover, by \eqref{eq:p2-l-rLM} and Assumption~\ref{assump1}(ii), we have
    \begin{align*}
        p_2^\brLM & = Q\left(\sum_{j = 1}^{m} \frac{\left(L_j - e_j - x_{j}p_1 - \brLM^\top \brpA\right)^+}{p_2^\brLM } \wedge \left(z_j - x_{j}\right) \right)
        \\ & \geq Q\left(\sum_{j = 1}^{m} \frac{\left(L_j - e_j - \brLM^\top \brpA\right)^+}{p_2^\brLM } \wedge z_j \right)
        \\ & = \rho_0^\dagger.
    \end{align*}
    Consequently, we observe that
        $\brpA = \bL \wedge \left(\bgao + \bx p_1 + (\bz - \bx) p_2^\brLM + \brLM^\top \brpA \right) \geq \bL \wedge \left(\bgao + \bz \rho_0^\dagger + \brLM^\top \brpA \right) = \blambda_0^\dagger$,

    If $\rho_k^\dagger \leq p_2^\brLM$ and $\blambda_k^\dagger \leq \brpA$ for some $k \geq 0$, then
    \begin{equation}
        \rho_{k+1}^\dagger = Q\left(\sum_{j = 1}^{m} \frac{\left(L_j - e_j - \brLM^\top \blambda_k^\dagger\right)^+}{\rho_k^\dagger } \wedge z_j \right) \leq Q\left(\sum_{j = 1}^{m} \frac{\left(L_j - e_j - x_{j}Q(\bone^\top \bx) - \brLM^\top \brpA\right)^+}{p_2^\brLM } \wedge \left(z_j - x_{j}\right) \right) = p_2^\brLM
    \end{equation}
    and thus
    \begin{equation}
        \blambda_{k+1}^\dagger = \bL \wedge \left(\bgao + \bz \rho_k^\dagger + \brLM^\top \blambda_k^\dagger \right) \leq \bL \wedge \left(\bgao + \bx p_1 + (\bz - \bx) p_2^\brLM + \brLM^\top \brpA \right) = \brpA.
    \end{equation}
    Hence, by induction, we obtain  $\rho_k^\dagger \leq p_2^\brLM$ and $\blambda_{k} \leq \brpA$ for all $k$, suggesting that $\pbar \leq p_2^\brLM$ and $\blbar \leq \brpA$ since $\rho_k^\dagger \rightarrow \pbar$ and $\blambda_k^\dagger \rightarrow \blbar$ as $k$ increases. This contradicts the assumption and indicates that $\pbar \leq p_2^\brLM$.
    \endproof
    %=================================================================================================
    \proof{Proof of Lemma \ref{lemma-B1}.}
        We first let $\delta = \tlb - \clst$. Then, by \eqref{clst-def},
        \begin{equation}
            0 \leq \delta \leq \frac{\xbar_i}{2} - x,
            \label{dlt1-ineq}
        \end{equation}
        and we observe that
        \begin{equation}
        \begin{aligned}
            \ctl_i - d_i & = \frac{\xbar_i}{2} \left(Q(\tlb) + Q(\clbar - \tlb) \right)		
             - xQ(\clst) - (\xbar_i - x) Q(\clbar - \clst)	
            \\ & = \left(x + \delta \right) Q\left(\clst + \delta \right) - x Q\left(\clst\right)
             - \left(\xbar_i - x \right) Q\left(\clbar - \clst \right)		
            \\ & \quad  + \left(\xbar_i - x - \delta \right) Q\left(\clbar - \clst - \delta \right)	
             + \left(\frac{\xbar_i}{2} - x - \delta \right) \left(Q\left(\tlb \right) - Q\left(\clbar - \tlb \right) \right)	
        \end{aligned}\label{ctli-di-diff}	
        \end{equation}
        If $\delta = 0$, then $\tlb < \clbar/2$ by \eqref{clst-def}. Thus, $Q(\tlb) > Q(\clbar - \tlb)$ by Assumption \ref{assump1}(ii), and the right-hand side of \eqref{ctli-di-diff} is strictly positive. On the other hand, if $\delta > 0$, by the mean value theorem, there exists some $\theta_1 \in (x,  x + \delta)$ such that
        $$\begin{aligned}
            &  (\textrm{The right-hand side of}~\eqref{ctli-di-diff})
            \\ &= \delta \bigg( f_x(x + \theta_1,  \clst - x)
             -  f_x(\xbar_i - x - \theta_1,  \clbar - \clst - \xbar_i + x) \bigg)
             + \left(\frac{\xbar_i}{2} - x - \delta \right) \left(Q\left(\tlb \right) - Q\left(\clbar - \tlb \right) \right)
            \\ & > \delta \bigg(f_x(x + \theta_1,  \clst - x) \bigg.
             - f_x(\xbar_i - x - \theta_1,  \clbar - \clst - \xbar_i + x) \bigg)
            \\ & \geq 0,
        \end{aligned}$$
        where the first inequality stems from \eqref{useful-ineq-xx} and \eqref{dlt1-ineq}, and the second inequality is satisfied since $\clst - x \leq \clbar - \clst - \xbar_i + x$ and \eqref{useful-ineq-xy} holds. This proves the first statement of the lemma.

        For the second statement, we observe that $\ctl_j \geq \hlf_j(\xbar_i/2,  \tlb)$ for all $j \neq i$ by \eqref{bctl-def}, implying that $d_\lid - \ctl_\lid \leq (\hlf_\lid(x, \clst) - \hlf_\lid(\xbar_i/2,  \tlb))^+$. Hence,  it suffices to show that
        % \begin{equation}
            $\ctl_i - d_i \geq \hlf_\lid(x, \clst) - \hlf_\lid(\xbar_i/2, \tlb ).$
            % \label{ctli-di-hlf-comp}
        % \end{equation}
        If $\delta=0$, a straightforward calculation yields that this inequality holds. Assume that $\delta>0$. According to \eqref{dlt1-ineq}, we have $\delta \leq \vthtl - \vthst$, where $\vthtl \coloneqq (\tlb - \sbar_{i, \lid}) \vee (\xbar_i/2)$ and $\vthst \coloneqq (\clst - \sbar_{i, \lid}) \vee x$. Thus, there exists $\theta_2 \in (0, \delta)$ such that
        \begin{equation}
        \begin{aligned}
            & \ctl_i - d_i - \left(\hlf_\lid(x, \clst) - \hlf_\lid\left(\frac{\xbar_i}{2}, \tlb \right) \right)	
            \\ & = \vthtl \left(Q\left(\tlb\right) - Q\left(\clbar - \tlb \right) \right)			
            + \left(\xbar_i + \left(\xbar_\lid - \sbar_{i, \lid} \right)^+ \right) \left(Q\left(\clbar - \tlb \right)   -Q\left(\clbar - \clst \right) \right)
            \\ & \qquad  - \vthst \left(Q\left(\clst \right) - Q\left(\clbar - \clst \right) \right)		  + \left(\xbar_\lid \wedge \sbar_{i, \lid} \right) \left( Q\left(\clbar - \tlb \right) - Q\left(\clbar - \clst \right) \right)		
            \\ &= \left(\vthst + \delta \right) \left(Q\left(\clst + \delta \right) - Q\left(\clbar - \clst - \delta \right) \right)  + \left(\xbar_i + \left(\xbar_\lid - \sbar_{i, \lid} \right)^+ \right) \left(Q\left(\clbar - \tlb \right) - Q\left(\clbar - \clst \right)\right)
            \\ & \qquad  - \vthst \left(Q\left(\clst \right) - Q\left(\clbar - \clst \right)  \right)	
             + \left(\vthtl - \vthst - \delta \right) \left(Q\left(\tlb\right) - Q\left(\clbar - \tlb \right) \right) 			
             \\&\qquad+ \left(\xbar_\lid \wedge \sbar_{i, \lid}\right)\left(Q\left(\clbar - \tlb\right) - Q\left(\clbar - \clst \right) \right)		
            \\ & \geq \delta \Big\{Q\left(\clst + \theta_2 \right) - Q\left(\clbar - \clst - \theta_2 \right)    - \left(\xbar_i + \left(\xbar_\lid - \sbar_{i, \lid} \right)^+ \right) Q'\left(\clbar - \clst - \theta_2 \right)		
            \\ & \qquad   + \left(\vthst + \theta_2 \right) \left(Q'\left(\clst + \theta_2 \right) + Q'\left(\clbar - \clst - \theta_2 \right) \right)\Big\}		
            \\ & = \delta \left\{ f_x(\vthst + \theta_2,  \clst - \vthst)  - f_x\left(\xbar_i + \left(\xbar_\lid - \sbar_{i, \lid} \right)^+ - \vthst - \theta_2,  \clbar - \clst - \xbar_i - \left(\xbar_\lid - \sbar_{i, \lid} \right)^+ + \vthst \right) \right\}	
            \\ &  > \delta \Big\{ f_x\left(\xbar_i + \left(\xbar_\lid - \sbar_{i, \lid} \right)^+ - \vthst - \theta_2,  \clst - \vthst \right)
            \\ & \qquad  - f_x\left(\xbar_i + \left(\xbar_\lid - \sbar_{i, \lid} \right)^+ - \vthst - \theta_2,  \clbar - \clst - \xbar_i - \left(\xbar_\lid - \sbar_{i, \lid} \right)^+ + \vthst \right) \Big\}	
            \\ &= 0,
            \label{ci-di-comp-xy}
        \end{aligned}
        \end{equation}
        where the first inequality holds by the mean value theorem and from the fact that $Q(\tlb) > Q(\clbar - \tlb)$ and $\tlb > \clst$. Since $\clst = (\clbar - \xbar_i)/2 + x$ by \eqref{clst-ineq} and $\theta_2 < \xbar_i/2 - x$ by \eqref{dlt1-ineq}, we have
        $$
            \begin{aligned}
                2 \left((\clst - \sbar_{i, \lid}) \vee x \right)  \leq 2 \left\{\left(\frac{\clbar - \xbar_i}{2} + x - \sbar_{i, \lid} \right) \vee x \right\}
                 \leq \left(\xbar_\lid - \sbar_{i, \lid}\right)^+ + 2x
                 < \left(\xbar_\lid - \sbar_{i, \lid}\right)^+ + \xbar_i - 2\theta_2,
            \end{aligned}
        $$
        and thus, the second inequality of \eqref{ci-di-comp-xy} follows from \eqref{useful-ineq-xx}. Similarly, we obtain
        $$
            \begin{aligned}
                2\left(\clst - \left(\clst - \sbar_{i, \lid} \right) \vee x \right)  = \left(2\sbar_{i, \lid}  \right) \wedge (2\clst - 2x)
                 = \left(2\sbar_{i, \lid}  \right) \wedge (\clbar - \xbar_i)
                 = \clbar - \xbar_i - (\xbar_\lid - \sbar_{i, \lid})^+,
            \end{aligned}
        $$
        which implies that the last equality of \eqref{ci-di-comp-xy} holds. This completes the proof.
    \endproof

\bibliographystyle{abbrvnat}
\bibliography{references}

@article{AcemogluEtAl2015,
  title = {Systemic Risk and Stability in Financial Networks},
  author = {Acemoglu, Daron and Ozdaglar, Asuman and Tahbaz-Salehi, Alireza},
  year = {2015},
  journal = {American Economic Review},
  volume = {105},
  number = {2},
  pages = {564--608},
  publisher = {American Economic Association 2014 Broadway, Suite 305, Nashville, TN 37203},
  issue = {2}
}

@article{AcharyaEtAl2011,
  title = {Rollover Risk and Market Freezes},
  author = {Acharya, Viral V and Gale, Douglas and Yorulmazer, Tanju},
  year = {2011},
  journal = {The Journal of Finance},
  volume = {66},
  number = {4},
  pages = {1177--1209},
  publisher = {Wiley Online Library},
}

@article{Ahn2019,
	title={Optimal Intervention under Stress Scenarios: A Case of the {K}orean Financial System},
	author={Ahn, Dohyun and Kim, Kyoung-Kuk},
	journal={Operations Research Letters},
	volume={47},
	number={4},
	pages={257--263},
	year={2019},
	publisher={Elsevier}
}

@article{Ahn2020,
	title={Shock Amplification in Financial Networks with Applications to the {CCP} Feasibility},
	author={Ahn, Dohyun},
	journal={Quantitative Finance},
	volume={20},
	number={7},
	pages={1045--1056},
	year={2020},
	publisher={Taylor \& Francis}
}

@article{AhnEtAl2023,
	title={Multivariate Stress Scenario Selection in Interbank Networks},
	author={Ahn, Dohyun and Kim, Kyoung-Kuk and Kwon, Eunji},
	journal={Journal of Economic Dynamics and Control},
	volume={154},
	pages={104712},
	year={2023},
	publisher={Elsevier}
}

@article{AminiEtAl2016c,
  title = {Uniqueness of Equilibrium in a Payment System with Liquidation Costs},
  author = {Amini, Hamed and Filipović, Damir and Minca, Andreea},
  year = {2016},
  journal = {Operations Research Letters},
  volume = {44},
  number = {1},
  pages = {1--5},
  publisher = {Elsevier},
  issue = {1}
}

@article{AminiEtAl2020,
  title = {Systemic Risk in Networks with a Central Node},
  author = {Amini, Hamed and Filipović, Damir and Minca, Andreea},
  year = {2020},
  journal = {SIAM Journal on Financial Mathematics},
  volume = {11},
  number = {1},
  pages = {60--98},
  publisher = {Society for Industrial and Applied Mathematics},
  issue = {1}
}

@article{Amir2005,
  title = {Supermodularity and Complementarity in Economics: An Elementary Survey},
  author = {Amir, Rabah},
  year = {2005},
  journal = {Southern Economic Journal},
  volume = {71},
  number = {3},
  pages = {636--660},
  publisher = {Wiley Online Library},
}

@article{BanerjeeFeinstein2021,
  title = {Price Mediated Contagion through Capital Ratio Requirements with {VWAP} Liquidation Prices},
  author = {Banerjee, Tathagata and Feinstein, Zachary},
  year = {2021},
  journal = {European Journal of Operational Research},
  volume = {295},
  number = {3},
  pages = {1147--1160},
  publisher = {Elsevier},
  issue = {3}
}

@article{BanerjeeFeinstein2022,
  title = {Pricing of Debt and Equity in a Financial Network with Comonotonic Endowments},
  author = {Banerjee, Tathagata and Feinstein, Zachary},
  year = {2022},
  journal = {Operations Research},
  volume = {70},
  number = {4},
  pages = {2085--2100},
  publisher = {INFORMS}
}

@article{BaruccaEtAl2020,
  title = {Network Valuation in Financial Systems},
  author = {Barucca, Paolo and Bardoscia, Marco and Caccioli, Fabio and D'Errico, Marco and Visentin, Gabriele and Caldarelli, Guido and Battiston, Stefano},
  year = {2020},
  journal = {Mathematical Finance},
  volume = {30},
  number = {4},
  pages = {1181--1204},
  publisher = {Wiley Online Library},
  issue = {4}
}

@article{BertsimasLo1998,
  title={Optimal Control of Execution Costs},
  author={Bertsimas, Dimitris and Lo, Andrew W},
  journal={Journal of Financial Markets},
  volume={1},
  number={1},
  pages={1--50},
  year={1998},
  publisher={Elsevier}
}

@article{BichuchFeinstein2019,
  title = {Optimization of Fire Sales and Borrowing in Systemic Risk},
  author = {Bichuch, Maxim and Feinstein, Zachary},
  year = {2019},
  journal = {SIAM Journal on Financial Mathematics},
  volume = {10},
  number = {1},
  pages = {68--88},
  publisher = {SIAM},
  issue = {1}
}

@article{BraouezecWagalath2019,
  title = {Strategic Fire-Sales and Price-Mediated Contagion in the Banking System},
  author = {Braouezec, Yann and Wagalath, Lakshithe},
  year = {2019},
  journal = {European Journal of Operational Research},
  volume = {274},
  number = {3},
  pages = {1180--1197},
  publisher = {Elsevier},
  issue = {3}
}

@article{CapponiBernard2022,
  author = {Bernard, Benjamin and Capponi, Agostino and Stiglitz, Joseph E.},
title = {Bail-Ins and Bailouts: Incentives, Connectivity, and Systemic Stability},
journal = {Journal of Political Economy},
volume = {130},
number = {7},
pages = {1805--1859},
year = {2022}
}

@article{CapponiEtAl2016,
  title={Liability Concentration and Systemic Losses in Financial Networks},
  author={Capponi, Agostino and Chen, Peng-Chu and Yao, David D},
  journal={Operations Research},
  volume={64},
  number={5},
  pages={1121--1134},
  year={2016},
  publisher={INFORMS}
}

@article{CapponiEtAl2020b,
  title={Swing Pricing for Mutual Funds: Breaking the Feedback Loop between Fire Sales and Fund Redemptions},
  author={Capponi, Agostino and Glasserman, Paul and Weber, Marko},
  journal={Management Science},
  volume={66},
  number={8},
  pages={3581--3602},
  year={2020},
  publisher={INFORMS}
}

@article{ChenEtAl2016,
  title = {An Optimization View of Financial Systemic Risk Modeling: Network Effect and Market Liquidity Effect},
  author = {Chen, Nan and Liu, Xin and Yao, David D},
  year = {2016},
  journal = {Operations Research},
  volume = {64},
  number = {5},
  pages = {1089--1108},
  publisher = {INFORMS},
  issue = {5}
}

@article{CifuentesEtAl2005,
  title = {Liquidity Risk and Contagion},
  author = {Cifuentes, Rodrigo and Ferrucci, Gianluigi and Shin, Hyun Song},
  year = {2005},
  journal = {Journal of the European Economic Association},
  volume = {3},
  number = {2-3},
  pages = {556--566},
  publisher = {Oxford University Press},
  issue = {2-3}
}

@article{ClercEtAl2016,
  title = {Indirect Contagion: The Policy Problem},
  author = {Clerc, Laurent and Giovannini, Alberto and Langfield, Sam and Peltonen, Tuomas and Portes, Richard and Scheicher, Martin},
  year = {2016},
  journal = {ESRB: Occasional Paper Series},
  number = {2016/09}
}

@article{ContSchaanning2019,
  title = {Monitoring Indirect Contagion},
  author = {Cont, Rama and Schaanning, Eric},
  year = {2019},
  journal = {Journal of Banking \& Finance},
  volume = {104},
  pages = {85--102},
  publisher = {Elsevier},
}

@article{DuarteEisenbach2021,
  title = {Fire‐sale Spillovers and Systemic Risk},
  author = {Duarte, Fernando and Eisenbach, Thomas M},
  year = {2021},
  journal = {The Journal of Finance},
  volume = {76},
  number = {3},
  pages = {1251--1294},
  publisher = {Wiley Online Library},
  issue = {3}
}

@article{EisenbergNoe2001,
  title = {Systemic Risk in Financial Systems},
  author = {Eisenberg, Larry and Noe, Thomas H.},
  year = {2001},
  journal = {Management Science},
  volume = {47},
  number = {2},
  pages = {236--249},
  issue = {2},
  langid = {english}
}

@article{Feinstein2017,
  title = {Financial Contagion and Asset Liquidation Strategies},
  author = {Feinstein, Zachary},
  year = {2017},
  journal = {Operations Research Letters},
  volume = {45},
  number = {2},
  pages = {109--114},
  publisher = {Elsevier},
  issue = {2}
}

@article{GatheralSchied2013,
  title={Dynamical Models of Market Impact and Algorithms for Order Execution},
  author={Gatheral, Jim and Schied, Alexander},
  journal={Handbook on Systemic Risk, Jean-Pierre Fouque, Joseph A. Langsam, eds},
  pages={579--599},
  year={2013}
}

@article{GhamamiEtAl2022,
  title = {Collateralized Networks},
  author = {Ghamami, Samim and Glasserman, Paul and Young, H Peyton},
  year = {2022},
  journal = {Management Science},
  volume = {68},
  number = {3},
  pages = {2202--2225},
  publisher = {INFORMS},
  issue = {3}
}

@article{GreenwoodEtAl2015,
  title = {Vulnerable Banks},
  author = {Greenwood, Robin and Landier, Augustin and Thesmar, David},
  year = {2015},
  journal = {Journal of Financial Economics},
  volume = {115},
  number = {3},
  pages = {471--485}
}

@article{KusnetsovMariaVeraart2019,
  title = {Interbank Clearing in Financial Networks with Multiple Maturities},
  author = {Kusnetsov, Michael and Veraart, Luitgard Anna Maria},
  year = {2019},
  journal = {SIAM Journal on Financial Mathematics},
  volume = {10},
  number = {1},
  pages = {37--67},
  publisher = {SIAM},
  issue = {1}
}

@article{PangVeraart2023,
  title = {Assessing and Mitigating Fire Sales Risk under Partial Information},
  author = {Pang, Raymond Ka-Kay and Veraart, Luitgard Anna Maria},
  year = {2023},
  journal = {Journal of Banking \& Finance},
  volume = {155},
  pages = {106989}
}

@article{PangVeraart2025,
  title={Collateralised Networks with Two Interacting Channels of Fire Sales},
  author={Pang, Raymond Ka-Kay and Veraart, Luitgard Anna Maria},
  journal={Available at SSRN 5272239},
  year={2025}
}

@book{Rockafellar2015,
    title = {Convex Analysis},
    author = {Rockafellar, Ralph Tyrell},
    year = {2015},
    address = {Princeton, NJ},
    booktitle = {Convex Analysis},
    isbn = {0-691-01586-4},
    publisher = {Princeton University Press},
    series = {Princeton Landmarks in Mathematics and Physics}
}

@article{RogersVeraart2013,
  title = {Failure and Rescue in an Interbank Network},
  author = {Rogers, Leonard CG and Veraart, Luitgard AM},
  year = {2013},
  journal = {Management Science},
  volume = {59},
  number = {4},
  pages = {882--898},
  publisher = {INFORMS},
  issue = {4}
}

@article{Veraart2020,
  title = {Distress and Default Contagion in Financial Networks},
  author = {Veraart, Luitgard Anna Maria},
  year = {2020},
  journal = {Mathematical Finance},
  volume = {30},
  number = {3},
  pages = {705--737},
  publisher = {Wiley Online Library},
  issue = {3}
}

@article{WeberWeske2017,
  title = {The Joint Impact of Bankruptcy Costs, Fire Sales and Cross-Holdings on Systemic Risk in Financial Networks},
  author = {Weber, Stefan and Weske, Kerstin},
  year = {2017},
  journal = {Probability, Uncertainty and Quantitative Risk},
  volume = {2},
  number = {9},
  pages = {1--38},
  publisher = {Springer},
  issue = {1}
}

\end{document}